\theoremstyle{plain}
\newtheorem{thm}{Theorem}[section]
\newtheorem{prop}[thm]{Proposition}
\theoremstyle{definition}
\newtheorem{asp}{Assumption}
\theoremstyle{remark}
\newcommand{\pder}[2][]{\frac{\partial#1}{\partial#2}}
\DeclareMathOperator*{\esssup}{ess\,sup}
\newcommand{\calA}{\mathcal{A}}
\newcommand{\calC}{\mathcal{C}}
\newcommand{\calF}{\mathcal{F}}
\newcommand{\calL}{\mathcal{L}}
\newcommand{\calS}{\mathcal{S}}
\newcommand{\scrV}{\mathscr{V}}
\newcommand{\E}{\mathbb{E}}				
\newcommand{\prob}{\mathbb{P}}		
\newcommand{\Q}{\mathbb{Q}}
\newcommand{\vm}[1]{\mathbf{#1}}
\begin{document}


\title{A Numerical Approach to Pricing Exchange Options under Stochastic Volatility and Jump-Diffusion Dynamics}


\title[Exchange Options under SVJD Dynamics]{A Numerical Approach to Pricing Exchange Options under Stochastic Volatility and Jump-Diffusion Dynamics}

\author[LPDM Garces]{Len Patrick Dominic M. Garces$\dag$}
\author[GHL Cheang]{Gerald. H. L. Cheang$\ddag$}

\address[$\dag$, $\ddag$]{University of South Australia, UniSA STEM, Centre for Industrial and Applied Mathematics, Adelaide SA 5000, Australia}
\address[$\dag$]{Ateneo de Manila University, School of Science and Engineering, Department of Mathematics, Quezon City 1108, Metro Manila, Philippines}

\email[$\dag$, Corresponding author]{len\_patrick\_dominic.garces@mymail.unisa.edu.au}
\email[$\ddag$]{gerald.cheang@unisa.edu.au}


\thanks{This is the preprint of the article of the same title published by Taylor \& Francis in \emph{Quantitative Finance}. The final version is available online at \url{https://doi.org/10.1080/14697688.2021.1926534}.}

\begin{abstract}
We consider a method of lines (MOL) approach to determine prices of European and American exchange options when underlying asset prices are modelled with stochastic volatility and jump-diffusion dynamics. As the MOL, as with any other numerical scheme for PDEs, becomes increasingly complex when higher dimensions are involved, we first simplify the problem by transforming the exchange option into a call option written on the ratio of the yield processes of the two assets. This is achieved by taking the second asset yield process as the num\'eraire. We also characterize the near-maturity behavior of the early exercise boundary of the American exchange option and analyze how model parameters affect this behavior. Using the MOL scheme, we conduct a numerical comparative static analysis of exchange option prices with respect to the model parameters and investigate the impact of stochastic volatility and jumps to option prices. We also consider the effect of boundary conditions at far-but-finite limits of the computational domain on the overall efficiency of the MOL scheme. Toward these objectives, a brief exposition of the MOL and how it can be implemented on computing software are provided.
\end{abstract}

\keywords{Exchange options, Jump diffusion processes, Method of lines, Put-call transformation, Stochastic volatility}

\maketitle


\section{Introduction}

We investigate the pricing of European and American exchange options written on assets with prices driven by stochastic volatility and jump-diffusion (SVJD) dynamics. The earliest analysis of European exchange options was that of \citet{Margrabe-1978} who, by noting that the European exchange option price is linear homogeneous in the stock prices, transformed the problem to the classical European call option pricing problem which was then solved using the method of \citet{BlackScholes-1973}. \citet{Bjerskund-1993} considered a similar approach in pricing American exchange options in a pure diffusion setting. They suggested that by choosing one of the stocks as the num\'eraire, the American exchange option pricing problem may be simplified to the problem of pricing an American call or put option. \citet{Bjerskund-1993} refer to this technique as the \emph{put-call transformation}.

With well-established evidence pointing to the deficiencies of the geometric Brownian motion in accurately modelling asset price returns, there has since been a movement to study option prices (including exchange options) under alternative asset price models.\footnote{The empirical literature addressing the limitations of the \citet{BlackScholes-1973} is extremely rich and will not be reviewed here. Instead, we invite the reader to consult \citet{Bakshi-1997}, \citet{Duffie-2000}, \citet{Cont-2001}, \citet{Andersen-2002}, \citet{Chernov-2003}, \citet{Eraker-2003}, \citet{Kou-2008}, and the references therein.} \citet{CheangChiarellaZiogas-2006}, \citet{CheangChiarella-2011}, \citet{Caldana-2015}, \citet{PetroniSabino-2018}, and \citet{Ma-2020} analyzed European exchange options when asset prices are modelled using jump-diffusion processes. \citet{Antonelli-2010}, \citet{Alos-2017}, and \citet{KimPark-2017} priced European exchange options where underlying assets are driven by stochastic volatility models. \citet{CheangChiarella-2011} also considered the case of American exchange options in their analysis. More recently, \citet{CheangGarces-2019}, derived analytical representations for the European and American exchange option prices assuming that stock prices are modelled using a pair of \citeauthor{Bates-1996} stochastic volatility and jump-diffusion dynamics. Among the aforementioned investigations, those of \citet{CheangChiarellaZiogas-2006} and \citet{Alos-2017} priced exchange options using the put-call transformation approach. \citet{Fajardo-2006} also used a similar transformation, which they called the ``dual market method'', to price options (including perpetual exchange options) when underlying prices are driven by L\'evy processes. The others priced exchange options under the equivalent martingale measure (EMM) corresponding to the money market account.

The addition of both stochastic volatility and jump-diffusion precludes the availability of closed-form solutions (in the sense of the Black-Scholes or the Margrabe formulas) for exchange option prices, hence we resort to a numerical approximation of exchange option prices. Our main contributions toward this objective are as follows:
\begin{enumerate}
	\item We extend the \citet{Bjerskund-1993} strategy for valuing American exchange options in a pure-diffusion setting into the SVJD framework. This analysis also extends the closely related dual market approach of \citet{Fajardo-2006} for L\'evy processes to accommodate stochastic volatility. To the best of our knowledge, not much focus has been placed on the use of the put-call transformation technique to pricing finite maturity American options in the stochastic volatility \emph{and} jump-diffusion settings. As such, we aim to discuss how this technique can be applied to pricing exchange options with such asset price dynamics. Although we focus on exchange options only, as \citet{Fajardo-2006} notes, the approach is just as useful when pricing derivatives with homogeneous payoff functions.
	\item In the simplified framework, we investigate the behavior of the early exercise boundary of the American exchange option near maturity, thereby extending the analysis of \citet{ChiarellaZiogas-2009} for American calls under jump-diffusion. We also study analytically and numerically how some key model parameters, namely the dividend yields and the jump parameters, affect the behavior of the early exercise boundary.
	\item We give a detailed discussion of a method of lines (MOL) scheme to numerically determine the price of exchange options, expressed in units of the second asset yield process, and the joint transition density function of the underlying state variables in the SVJD framework.
	\item We present an accessible and more detailed discussion of the MOL under more complex underlying asset price dynamics, since a detailed structure of the implementation, as is shown in Algorithms \ref{pseudo-MOL-EuExcOp} and \ref{pseudo-MOL-AmExcOp} in Section \ref{sec-MOL} of this paper, is usually excluded in papers that use the MOL. 
	\item As alternatives to the often assumed boundary condition $\lim_{v\to\infty}\frac{\partial V}{\partial v} = 0$ for the behavior of option prices at high volatility levels \citep[e.g.][]{Clarke-1999, Chiarella-2009}, we consider Venttsel boundary conditions for the far-but-finite limits of the computational domain and their impact on option prices and the performance of the MOL algorithm.
	\item Using MOL-generated prices, we investigate how stochastic volatility and jumps in the asset prices affect option prices. Furthermore, we conduct an extensive numerical comparative static analysis to see how key model parameters affect the exchange option prices and the early exercise boundary.
\end{enumerate}
As such, this paper serves as a numerical complement to the work of \citet{CheangGarces-2019} who focused more on the analytical aspects of pricing exchange options under SVJD dynamics.

In this analysis, pricing takes place under the EMM $\hat{\Q}$ corresponding to setting the second asset yield process as the num\'eraire. Under $\hat{\Q}$, we find that the the no-arbitrage price of the European exchange option can be written as a function of only the asset yield ratio $\tilde{s}$ and the instantaneous variance $v$. Furthermore, we verify an early exercise representation of the discounted American exchange option price, which can also be written as a function of only $\tilde{s}$ and $v$. The pricing integro-partial differential equations for European and American exchange options, as well as the associated boundary conditions, under this measure are then derived.

The pricing IPDEs are then solved using the method of lines. The method of lines is a numerical method to solve PDEs which consists of discretizing the equation in all but one variable resulting to a sequence or system of ODEs in the remaining continuous variable. \citet{Schiesser-2009} provide a general exposition on the method, but for applications in option and fixed income instrument pricing, the time-discrete MOL, expertly discussed by \citet{Meyer-2015}, has gained particular traction.\footnote{The time-discrete MOL involves discretizing the PDE in all but one \emph{spatial variable}, as opposed to most applications where time is left as the continuous variable \citep[see][]{Schiesser-2009}. This approach is also known as Rothe's method or the horizontal MOL.} This type of MOL has been applied to pricing American put options in the Black-Scholes framework \citep{vanderHoek-1997}, put options under jump-diffusion dynamics \citep{Meyer-1998}, call options under stochastic volatility \citep{Adolfsson-2013, ChiarellaZiveyi-2013}, call options under \citeauthor{Bates-1996} SVJD \citep{Chiarella-2009}, American options with SV and stochastic interest rates \citep{Kang-2014}, American options under a regime-switching GBM \citep{Chiarella-2016}, and spread options under pure-diffusion dynamics \citep{ChiarellaZiveyi-2014}. It is particularly useful for approximating American option prices as the algorithm can be easily adjusted to accommodate unknown free boundaries. It is also attractive for financial applications as the option delta and gamma are calculated as part of the algorithm with no additional computational cost. However as with any numerical technique for solving PDEs, the MOL becomes highly complex the more spatial variables are involved. In this paper, we thus use the put-call transformation technique in an effort to simplify the MOL approximation of the exchange option price.

While the succeeding analysis focuses on exchange options written on stocks, one may consider exchange options written on other assets such as indices and foreign currencies. For foreign currencies, in particular, the dividend yields are replaced by risk-free interest rates in the domestic and foreign money markets. \citet{Siegel-1995} explains how exchange options can be used to estimate the ``implicit beta'' between an underlying stock and a given market index. The exchange option framework may be adapted to investigate real options \citep{Kensinger-1988, Carr-1995}, outperformance options \citep{CheangChiarella-2011}\footnote{\citet{CheangChiarella-2011} assumed that only one asset price process had jumps while the other was modelled as a pure-diffusion process. \citet{QuittardPinon-2010} discuss in greater detail the European exchange option pricing problem under a similar model specification.}, energy market options \citep[surveyed in][]{Benth-2015}, and the option to enter/exit an emerging market \citep{Miller-2012}, among others. \citet{Ma-2020} provide additional examples of financial contracts which can be priced under the exchange option framework.

The rest of the paper is organized as follows. Section \ref{sec-PutCall-SVJDModel} describes the SVJD model for the asset prices and the stochastic variance and discusses construction of the measure $\hat{\Q}$. In Section \ref{sec-PutCall-ExchangeOptionIPDE}, we derive the pricing IPDE for the European and American exchange options. Section \ref{sec-PutCall-EEBLimit} discusses the behavior of the early exercise boundary near the expiry of the American exchange option. Section \ref{sec-MOL} explains the MOL algorithm for the solution of the pricing IPDE, the results of which are shown in Section \ref{sec-NumericalResults}. Section \ref{sec-Conclusion} concludes the paper. The focus of this paper is on the numerical implementation, hence we only briefly describe the proof of some technical results shown here and instead refer to \citep{GarcesCheang-2020} as it focuses on the probabilistic and analytical representation of exchange option prices under this framework.

\section{Asset Price Dynamics and the Put-Call Transformation}
\label{sec-PutCall-SVJDModel}

In this section, we discuss the model specification for the underlying stock prices. We assume that the financial market consists of a risk-free money market account and two risky assets over a finite time period $[0,T]$. We also let $T$ be the maturity of the exchange option. The dynamics of the asset prices are discussed below.

Let $(\Omega,\calF,\prob)$ be a probability space equipped with a filtration $\{\calF_t\}_{0\leq t\leq T}$ satisfying the usual conditions. Let $\{W_1(t)\}$, $\{W_2(t)\}$, and $\{Z(t)\}$ be standard $\prob$-Brownian motions with instantaneous correlations given by $\dif W_1(t)\dif W_2(t) = \rho_w \dif t$ and $\dif W_j(t)\dif Z(t) = \rho_j\dif t$, for $j=1,2.$ Denote by $\bm{\Sigma}$ the correlation matrix of the random vector $\vm{B}(t) = (W_1(t),W_2(t),Z(t))^\top$. Let $p(\dif y_j,\dif t)$ ($j=1,2$) be the counting measure associated to a marked Poisson process with $\prob$-local characteristics $(\lambda_j,m_\prob(\dif y_j))$.\footnote{See \citet{Runggaldier-2003} for more details.} Underlying $p(\dif y_j,\dif t)$ is a sequence of ordered pairs $\{(T_{i,n},Y_{i,n})\}$ where $Y_{i,n}$ is the ``mark'' of the $n$th occurrence of an event that occurs at a non-explosive time $T_{i,n}$. The marks $Y_{j,1},Y_{j,2},\dots$ are i.i.d. real-valued random variables with non-atomic density $m_\prob(\dif y_j)$. Associated to the event times, we define a Poisson counting process $\{N_j(t)\}$ given by $N_j(t) = \sum_{n=1}^\infty \vm{1}(T_{j,n}\leq t)\vm{1}(Y_{j,n}\in\mathbb{R}),$ where $\vm{1}(\cdot)$ is the indicator function.

We assume that the counting measures are independent of the Brownian motions and of each other. Henceforth, we assume that $\{\calF_t\}$ is the natural filtration generated by the Brownian motions and the counting measures, augmented with the collection of $\prob$-null sets.

Denote by $\{S_1(t)\}$ and $\{S_2(t)\}$ the price processes of two assets that pay a constant dividend yield of $q_1$ and $q_2$, respectively, per annum. As stock prices may jump, we let $S_1(t)$ and $S_2(t)$ denote the stock prices \emph{prior to any jumps occurring at time $t$}. Let $\{v(t)\}$ be the instantaneous variance process that governs the volatility of both stock price processes. We assume that the dynamics of the stock prices and the instantaneous variance satisfy the stochastic differential equations
\begin{align}
\label{eqn-PutCall-StockPriceSDE-P}
\frac{\dif S_j(t)}{S_j(t)}	& = (\mu_j-\lambda_j\kappa_j)\dif t+\sigma_j \sqrt{v(t)}\dif W_j(t)+\int_{\mathbb{R}}\left(e^{y_j}-1\right)p(\dif y_j,\dif t), \qquad j=1,2,\\
\label{eqn-PutCall-VarianceSDE-P}
\dif v(t) & = \xi\left(\eta-v(t)\right)\dif t + \omega\sqrt{v(t)}\dif Z(t).
\end{align}
Here, $\kappa_j \equiv \E_\prob[e^{Y_j}-1] = \int_{\mathbb{R}}(e^{y_j}-1)m_\prob(\dif y_j)$ is the mean jump size of the price of asset $j$ under $\prob$, and $\mu_j$, $\sigma_j$, $\xi$, $\eta$, and $\omega$ are positive constants. It is also assumed that $S_1(0), S_2(0), v(0)>0$. 

We refer to this model as the \emph{proportional stochastic volatility and jump-diffusion (SVJD) model}. While there is only one variance processes feeding into the diffusion component of each of the asset prices, the degree of influence the stochastic volatility process has on the asset price dynamics is governed by the proportionality coefficients $\sigma_1$ and $\sigma_2$.\footnote{In contrast, \citet{CheangGarces-2019} assume one variance process for each asset price. However, their analytical representations require that the asset price processes are uncorrelated with each other and with the variance processes. The current model specification allows such dependence structure.} In turn, the dynamics of the stochastic volatility process, modelled by a CIR square-root process, is dictated by the speed of mean reversion $\xi$, the long-run variance $\eta$, and the volatility of volatility $\omega$.

As described above, the model features a common instantaneous variance process and independent jump terms for each asset. The individual jump processes may be taken to model idiosyncratic risk factors in each asset that cause sudden changes in returns.\footnote{In contrast, \citet{CheangChiarella-2011} introduced an additional compound Poisson process appearing in both asset return processes which capture macroeconomic shocks or systematic risk factors which may introduce sudden jumps in returns.} Although extremely rare, it is possible that jumps for both stocks arrive at the same time, representing market shocks or sudden events that may affect both assets. In addition, the common variance process models systematic market volatility or volatility at the macroeconomic level. As such, individual asset prices may provide feedback to each other via the correlation between the diffusion components and the dependence on a common stochastic volatility.

The price process of the money market account is denoted by $\{M(t)\}$, with $M(t)=e^{rt}$ for $t\geq 0$, where $r>0$ is the (constant) risk-free interest rate. 

In this paper, we assume that the dividend yields, the risk-free rate, the parameters $\mu_j$, $\sigma_j$, $\xi$, $\eta$, and $\omega$, the jump the intensities $\lambda_j$ and jump-size densities $m_\prob(\dif y_j)$ are constant through time, but the analysis can be extended to the case where these parameters are deterministic functions of time.

We require the following assumption on the parameters of the variance process and the correlation parameters to ensure that $\{v(t)\}$ remains strictly positive and finite for all $0\leq t\leq T$ under $\prob$ and any other probability measure equivalent to $\prob$ \citep{AndersenPiterbarg-2007, CheangGarces-2019}.

\begin{asp}
\label{asp-PutCall-VarianceParameterAssumptions}
The parameters $\xi$, $\eta$, and $\omega$ and the correlation coefficients $\rho_1$ and $\rho_2$ satisfy $2\xi\eta\geq\omega^2$ (Feller condition) and $-1<\rho_j<\min\left\{\xi/\omega,1\right\}$, $j=1,2$.
\end{asp}

Straightforward calculations using It\^o's Lemma for jump-diffusions show that equation \eqref{eqn-PutCall-StockPriceSDE-P} admits a solution of the form 
\begin{align*}
S_j(t) & = S_j(0)\exp\Bigg\{(\mu_j-\lambda_j\kappa_j)t-\frac{1}{2}\sigma_j^2\int_0^t v(s)\dif s + \sigma_j\int_0^t \sqrt{v(s)}\dif W_j(s)+\sum_{n=1}^{N_j(t)}Y_{j,n}\Bigg\},
\end{align*}
for $0<t\leq T$, $j=1,2$. Assumption \ref {asp-PutCall-VarianceParameterAssumptions} and the non-explosion assumption on the point processes imply that the integrals and summation that appear above are well-defined. It also follows that $S_j(t)>0$ $\prob$-a.s. for all $t\in[0,T]$, and hence either asset can be used as a num\'eraire.

Instead of $\{M(t)\}$, we take $\{S_2(t)e^{q_2 t}\}$, the second asset yield process, as the num\'eraire and define the probability measure $\hat{\Q}\sim\prob$, such that the first asset yield process and the money market account, when discounted by $S_2(t)e^{q_2 t}$, are martingales under $\hat{\Q}$. With the second asset yield process as the num\'eraire, the \emph{discounted price} of any other asset with price process $\{X(t)\}$ is defined by $\tilde{X}(t) = X(t)(S_2(t)e^{q_2 t})^{-1}$. 

Next, we discuss the construction of $\hat{\Q}$. The following standard proposition specifies the form of the Radon-Nikod\'ym derivative $\frac{\dif\hat{\Q}}{\dif\prob}$.

\begin{prop}
\label{prop-PutCall-ChangeofMeasure}
Suppose $\bm{\theta}(t) = \left(\psi_{1}(t),\psi_{2}(t),\zeta(t)\right)^\top$ is a vector of $\calF_t$-adapted processes and let $\gamma_1,\gamma_2,\nu_1,\nu_2$ be constants. Define the process $\{L_t\}$ by 
\begin{align}
\begin{split}
\label{eqn-PutCall-RNDerivative}
L(t)	& = \exp\left\{-\int_0^t\left(\bm{\Sigma}^{-1}\bm{\theta}(s)\right)^\top\dif\vm{B}(s)-\frac{1}{2}\int_0^t\bm{\theta}(s)^\top\bm{\Sigma}^{-1}\bm{\theta}(s)\dif s\right\}\\
		& \qquad \times\exp\left\{\sum_{n=1}^{N_{1}(t)}(\gamma_1 Y_{1,n}+\nu_1)-\lambda_1 t\left(e^{\nu_1}\E_\prob(e^{\gamma_1 Y_{1}})-1\right)\right\}\\
		& \qquad \times\exp\left\{\sum_{n=1}^{N_{2}(t)}(\gamma_2 Y_{2,n}+\nu_2)-\lambda_2 t\left(e^{\nu_2}\E_\prob(e^{\gamma_2 Y_{2}})-1\right)\right\}
\end{split}
\end{align}
and suppose that $\{L(t)\}$ is a strictly positive $\prob$-martingale such that $\E_\prob[L(t)]=1$ for all $t\in[0,T]$. Then $L(T)$ is the Radon-Nikod\'ym derivative of some probability measure $\hat{\Q}\sim\prob$ and the following hold:\footnote{The Radon-Nikod\'ym derivative $L(T) = \frac{\dif\hat{\Q}}{\dif\prob}$ can be used to characterize any probability measure $\hat{\Q}$ equivalent to $\prob$ as parameterized by the vector process $\{\bm{\theta}(t)\}$ and the constants $\gamma_1,\gamma_2,\nu_1,\nu_2$. We assume that $\gamma_1,\gamma_2,\nu_1,\nu_2$ are constant to preserve the time-homogeneity of the intensity and the jump size distribution. As the market under the SVJD is generally incomplete, one can construct multiple equivalent martingale measures consistent with the no-arbitrage assumption.}
\begin{enumerate}
	\item Under $\hat{\Q}$, the vector process $\vm{B}(t)$ has drift $-\bm{\theta}(t)$;
	\item $N_j(t)$ has intensity $\tilde{\lambda}_j = \lambda_j e^{\nu_j}\E_\prob[e^{\gamma_j Y_j}]$, $j=1,2$	under $\hat{\Q}$; and
	\item The mgf of $Y_j$ under $\hat{\Q}$ is given by $M_{\hat{\Q},Y_j}(u) = M_{\prob,Y_j}(u+\gamma_j)/M_{\prob,Y_j}(\gamma_j)$, $j=1,2.$
\end{enumerate}
\end{prop}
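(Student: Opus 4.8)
The plan is to treat the existence of $\hat{\Q}$ as essentially handed to us by the hypotheses and to concentrate on the three distributional statements, exploiting the product structure of $L(t)$. Since $\{L(t)\}$ is assumed strictly positive with $\E_\prob[L(t)]=1$, setting $\frac{\dif\hat{\Q}}{\dif\prob}=L(T)$ defines a probability measure $\hat{\Q}\sim\prob$ whose density process is $L(t)=\E_\prob[L(T)\mid\calF_t]$. The single workhorse is the abstract Bayes rule: $\E_{\hat{\Q}}[X]=\E_\prob[XL(T)]$ for $\calF_T$-measurable $X\geq 0$, and correspondingly $\{M(t)\}$ is a $\hat{\Q}$-(local) martingale iff $\{M(t)L(t)\}$ is a $\prob$-(local) martingale.

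The first step is to decouple the three factors $L(t)=L^c(t)L^{J_1}(t)L^{J_2}(t)$, where $L^c$ is the continuous Girsanov exponential and $L^{J_1},L^{J_2}$ are the two jump exponentials. Because the Brownian motions and the two counting measures are independent under $\prob$, these three factors are $\prob$-independent. Applying the Laplace functional of a marked Poisson process, $\E_\prob[\exp\{\sum_{n=1}^{N_j(t)}g(Y_{j,n})\}]=\exp\{\lambda_j t(\E_\prob[e^{g(Y_j)}]-1)\}$, with $g(y)=\gamma_j y+\nu_j$, shows that the normalizer in each $L^{J_j}$ is chosen precisely so that $\E_\prob[L^{J_j}(t)]=1$; the product identity $\E_\prob[L(t)]=1$ then forces $\E_\prob[L^c(t)]=1$ as well. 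Hence, whenever I evaluate a $\hat{\Q}$-expectation of a functional depending on a single driving source, the other two (independent, unit-mean) factors integrate out.

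For statement (1), I would apply the Bayes rule to the candidate $\vm{B}(t)+\int_0^t\bm{\theta}(s)\,\dif s$ componentwise. Since $\vm{B}$ is continuous and shares no common jumps with $N_1,N_2$, only $L^c$ contributes to the continuous cross-variation, and It\^o's product rule produces a drift term that, after dividing by $L$, equals $-\bm{\Sigma}\,\bm{\Sigma}^{-1}\bm{\theta}(t)\,\dif t=-\bm{\theta}(t)\,\dif t$, exactly cancelling the added drift. Thus $\vm{B}(t)+\int_0^t\bm{\theta}(s)\,\dif s$ is a continuous $\hat{\Q}$-local martingale; its quadratic covariation is a pathwise object unchanged by the equivalent change of measure and so remains $\bm{\Sigma}\,t$, whence L\'evy's characterization identifies it as a $\hat{\Q}$-Brownian motion with correlation $\bm{\Sigma}$. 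Equivalently, $\vm{B}$ acquires drift $-\bm{\theta}(t)$ under $\hat{\Q}$.

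Statements (2) and (3) I would obtain together by computing the $\hat{\Q}$-Laplace functional of the $j$th marked point process. Using Bayes, the decoupling above, and the $\prob$-Laplace functional applied to the shifted integrand $g(y)+\gamma_j y+\nu_j$, the normalizer in $L^{J_j}$ cancels and one is left with $\E_{\hat{\Q}}[\exp\{\sum_{n=1}^{N_j(t)}g(Y_{j,n})\}]=\exp\{\tilde{\lambda}_j t(\E_{\tilde{m}}[e^{g(Y_j)}]-1)\}$, where $\tilde{\lambda}_j=\lambda_j e^{\nu_j}\E_\prob[e^{\gamma_j Y_j}]$ and $\tilde{m}(\dif y)\propto e^{\gamma_j y}m_\prob(\dif y)$. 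This is exactly the Laplace functional of a marked Poisson process with intensity $\tilde{\lambda}_j$, giving (2), and mark law $\tilde{m}$ whose mgf is $\E_\prob[e^{(u+\gamma_j)Y_j}]/\E_\prob[e^{\gamma_j Y_j}]=M_{\prob,Y_j}(u+\gamma_j)/M_{\prob,Y_j}(\gamma_j)$, giving (3). The main obstacle is this decoupling: one must verify that the Brownian and jump factors separate cleanly under $\hat{\Q}$, which rests on the $\prob$-independence of the driving sources, on each factor being a genuine unit-mean martingale, and on $\vm{B}$ and the $N_j$ having no simultaneous jumps so that the continuous quadratic variation survives the measure change intact.
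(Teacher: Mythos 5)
Your argument for statement (1) is sound: the It\^o product-rule computation with the full density $L$ (the jump factors being finite-variation and sharing no jumps with $\vm{B}$), followed by the invariance of quadratic covariation and L\'evy's characterization, is the standard route, and importantly it does not require any independence between the factors of $L$. For reference, the paper does not prove the proposition at all --- it simply cites \citet[Theorem 2.4]{Runggaldier-2003} and \citet[Theorem 1]{CheangTeh-2014} --- so a self-contained argument is a genuine addition, and those cited proofs proceed by the compensator characterization described below rather than by your factorization.

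There is, however, a genuine gap in your proof of statements (2) and (3). You assert that $L^c$, $L^{J_1}$, $L^{J_2}$ are $\prob$-independent ``because the Brownian motions and the two counting measures are independent.'' That inference fails at the stated level of generality: $\bm{\theta}$ is only assumed $\calF_t$-adapted, and $\calF_t$ is generated by the Brownian motions \emph{and} the counting measures, so the integrand of the continuous factor $L^c$ may depend on the jump history (for instance, $\psi_1(t)$ could be a function of $N_1(t^-)$). In that case $L^c$ is not independent of $L^{J_1}$, the factorization $\E_\prob[L(t)]=\E_\prob[L^c(t)]\,\E_\prob[L^{J_1}(t)]\,\E_\prob[L^{J_2}(t)]$ breaks down, and with it both your deduction $\E_\prob[L^c(t)]=1$ and the Bayes computation of the $\hat{\Q}$-Laplace functional, which needs the other two factors to ``integrate out.'' The conclusions remain true in full generality, but the proof must avoid independence: argue exactly as you did for (1), showing that for bounded measurable $h$ the process $L(t)\bigl(\int_0^t\int_{\mathbb{R}} h(y)\,p(\dif y_j,\dif s)-\tilde{\lambda}_j\,t\int_{\mathbb{R}} h(y)\,\tilde{m}(\dif y)\bigr)$ is a $\prob$-local martingale via It\^o's product rule; the jump covariation term $\sum h(Y_{j,n})\,\Delta L$ contributes $L_{-}h(y)\bigl(e^{\gamma_j y+\nu_j}-1\bigr)$ per jump, whose $\prob$-compensator combines with that of $L_{-}\,\dif X$ to give exactly $\tilde{\lambda}_j\,\E_{\tilde{m}}[h]$, identifying the $\hat{\Q}$-intensity and tilted mark law (hence the mgf formula) simultaneously. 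Your Laplace-functional route is correct under the extra hypothesis that $\bm{\theta}$ is adapted to the Brownian filtration (or is otherwise independent of the jump sources); this happens to hold in the paper's application, where $\psi_1,\psi_2,\zeta$ in \eqref{eqn-PutCall-RiskPremium-1}--\eqref{eqn-PutCall-MarketPriceofVolRisk} depend only on $v(t)$, but it is not among the proposition's hypotheses and should be stated explicitly if you keep that route.
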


\begin{proof}
See e.g. \citet[Theorem 2.4]{Runggaldier-2003} and \citet[Theorem 1]{CheangTeh-2014}.
\end{proof}

We now specify the parameters of $L(T)$ so that $\hat{\Q}$ is an EMM corresponding to the num\'eraire $\{S_2(t)e^{q_2 t}\}$. Let $\{\tilde{S}(t)\}$ and $\{\tilde{M}(t)\}$, where $\tilde{S}(t) = S_1(t)e^{q_1 t}/(S_2(t)e^{q_2 t})$ and $\tilde{M}(t) = e^{rt}/(S_2(t)e^{q_2 t}),$ be the first asset yield process and the money market account when discounted using the second stock's yield process. We will refer to $\{\tilde{S}(t)\}$ as the \emph{asset yield ratio process}. If we choose $\{\psi_1(t)\}$, $\{\psi_2(t)\}$, and $\{\zeta(t)\}$ as
\begin{align}
\label{eqn-PutCall-RiskPremium-1}
\psi_1(t) & = \frac{\mu_1+q_1-r-\rho_w\sigma_1\sigma_1 v(t)-\lambda_1\kappa_1+\tilde{\lambda}_1\tilde{\kappa}_1}{\sigma_1\sqrt{v(t)}}\\
\label{eqn-PutCall-RiskPremium-2}
\psi_2(t) & = \frac{\mu_2+q_2-r-\sigma_2^2 v(t)-\lambda_2\kappa_2-\tilde{\lambda}_2\tilde{\kappa}_2^-}{\sigma_2\sqrt{v(t)}}\\
\label{eqn-PutCall-MarketPriceofVolRisk}
\zeta(t) & = \frac{\Lambda}{\omega}\sqrt{v(t)} \qquad \text{for some constant $\Lambda\geq 0$},
\end{align} 
where $\tilde{\kappa}_1 = \E_{\hat{\Q}}[e^{Y_1}-1]$ and $\tilde{\kappa}_2^- = \E_{\hat{\Q}}[e^{-Y_2}-1]$, then $\{\tilde{S}(t)\}$ and $\{\tilde{M}(t)\}$ are $\hat{\Q}$-martingales on $[0,T]$.\footnote{This assertion can be proved using It\^o's Lemma on $\tilde{S}(t)$ and $\tilde{M}(t)$ and eliminating the resulting drift term as required by the martingale representation for jump-diffusion processes \citep[see][Theorem 2.3]{Runggaldier-2003}.}

With this choice of parameters for $\hat{\Q}$, the dynamics of $\{v(t)\}$ becomes
\begin{equation}
\label{eqn-PutCall-VarianceSDE-Q}
\dif v(t) = \left[\xi\eta-(\xi+\Lambda)v(t)\right]\dif t+\omega\sqrt{v(t)}\dif\bar{Z}(t).
\end{equation}
where $\{\bar{Z}(t)\}$ is a $\hat{\Q}$-Wiener process. The choice of $\zeta(t)$ preserves the structure of the instantaneous variance as a square-root process. Assumption \ref{asp-PutCall-VarianceParameterAssumptions} ensures that this process is strictly positive and finite $\hat{\Q}$-a.s.

Under $\hat{\Q}$, $\tilde{S}(t)$ satisfies the equation
\begin{align}
\begin{split}
\label{eqn-PutCall-YieldRatioSDE-Q}
\dif\tilde{S}(t)
	& = -\tilde{S}(t)\left(\tilde{\lambda}_1\tilde{\kappa}_1+\tilde{\lambda}_2\tilde{\kappa}_2^-\right)\dif t + \sigma\sqrt{v(t)}\tilde{S}(t)\dif\bar{W}(t)\\
	& \qquad + \int_{\mathbb{R}} \left(e^{y_1}-1\right)\tilde{S}(t) p(\dif y_1,\dif t) + \int_{\mathbb{R}}\left(e^{-y_2}-1\right)\tilde{S}(t)p(\dif y_2,\dif t).
\end{split}
\end{align}
where we define $\sigma\dif\bar{W}(t) \equiv \sigma_1\dif\bar{W}_1(t)-\sigma_2\dif\bar{W}_2(t)$ with standard $\hat{\Q}$-Wiener processes $\{\bar{W}_1(t)\}$ and $\{\bar{W}_2(t)\}$ and $\sigma^2 = \sigma_1^2+\sigma_2^2-2\rho_w\sigma_1\sigma_2$.\footnote{In view of Proposition \ref{prop-PutCall-ChangeofMeasure}, we note that $\dif\bar{W}_j(t) = \psi_j(t)\dif t+\dif W_j(t)$.} 
Lastly, we note that the instantaneous correlation between the $\hat{\Q}$-Brownian motions $\{\bar{W}(t)\}$ and $\{\bar{Z}(t)\}$ is given by $\E_{\hat{\Q}}\left[\dif\bar{W}(t)\dif\bar{Z}(t)\right] = (1/\sigma)(\sigma_1 \rho_1-\sigma_2\rho_2)\dif t$.

\section{The Exchange Option Pricing IPDE}
\sectionmark{Exchange Option IPDE}
\label{sec-PutCall-ExchangeOptionIPDE}

Now we derive the integro-partial differential equation (IPDE) for the price of an exchange option written on $S_1$ and $S_2$. Denote by $C^E(t,S_1(t),S_2(t),v(t))$ the price of a European exchange option whose terminal payoff is given by $C^E\left(T,S_1(T),S_2(T),v(T)\right) = \left(S_1(T)-S_2(T)\right)^+,$ where $x^+\equiv\max\{x,0\}$. A rearrangement of terms expresses the discounted terminal payoff as
\begin{equation*}
\frac{C^E\left(T,S_1(T),S_2(T),v(T)\right)}{S_2(T)e^{q_2 T}} = e^{-q_1 T}\left(\tilde{S}(T)-e^{(q_1-q_2)T}\right)^+.
\end{equation*}
Let $\tilde{C}^E(t,S_1(t),S_2(t),v(t)) \equiv C^E\left(t,S_1(t),S_2(t),v(t)\right)/\left(S_2(t)e^{q_2 t}\right)$ denote the discounted European exchange option price. Then, assuming that no arbitrage opportunities exist, $\tilde{C}^E(t,S_1(t),S_2(t),v(t))$ is given by
\begin{align}
\begin{split}
\label{eqn-PutCall-DiscountedExcOpPrice}
\tilde{C}^E\left(t,S_1(t),S_2(t),v(t)\right)
	& = \E_{\hat{\Q}}\left[\left.\tilde{C}\left(T,S_1(T),S_2(T),v(T)\right)\right|\calF_t\right]\\
	& = e^{-q_1 T}\E_{\hat{\Q}}\left[\left.\left(\tilde{S}(T)-e^{(q_1-q_2)T}\right)^+\right|\calF_t\right].
\end{split}
\end{align}
In other words, the price at any time $t<T$ of the European exchange option measured in units of the second asset yield process is the $\hat{\Q}$-expectation of the terminal payoff measured in units of the second asset yield process \citep{Geman-1995}. From the last equation, we also note that the terminal payoff is variable only in the asset yield ratio $\tilde{S}(t)$. Thus, we assume that the discounted European exchange option price is represented by the process $\tilde{V}^E(t,\tilde{S}(t),v(t))$ and so
\begin{equation}
\label{eqn-PutCall-DiscountedExcOpPrice2}
\tilde{V}^E(t,\tilde{S}(t),v(t)) = e^{-q_1 T}\E_{\hat{\Q}}\left[\left.\left(\tilde{S}(T)-e^{(q_1-q_2)T}\right)^+\right|\calF_t\right].
\end{equation}

Thus we have shown that, by taking the second asset yield process as the num\'eraire asset, \emph{the exchange option pricing problem is equivalent to pricing a European call option on the asset yield price ratio $\tilde{S}(t)$ with maturity date $T$ and strike price $e^{(q_1-q_2)T}$}. 
If we choose $\{S_1(t)e^{q_1 t}\}$ as the num\'eraire, then the problem simplifies to the valuation of a \emph{put option} written on the asset yield ratio $(S_{2}(t)e^{q_2 t})/(S_1(t)e^{q_1 t})$.

The following technical assumption is required to implement It\^o's formula for jump-diffusion processes.

\begin{asp}
\label{asp-PutCall-Differentiability}
For $t\in[0,T]$, $\tilde{V}^E(t,\tilde{s},v)$ is (at least) twice-differentiable in $\tilde{s}$ and $v$ and differentiable in $t$ with continuous partial derivatives.
\end{asp}

The following proposition provides the IPDE that characterizes the discounted European exchange option price. This result can be proved using It\^o's formula for jump-diffusion processes and employing usual martingale arguments \citep[see][Appendix 2]{CheangGarces-2019}.

\begin{prop}
\label{prop-PutCall-DiscountedEuExcOpPrice}
The price at time $t\in[0,T)$ of the European exchange option is given by
\begin{equation}
C^E(t,S_1(t),S_2(t),v(t)) = S_2(t)e^{q_2 t}\tilde{V}^E(t,\tilde{S}(t),v(t)),
\end{equation}
where $\tilde{V}^E$, satisfying Assumption \ref{asp-PutCall-Differentiability}, is the solution of the terminal value problem
\begin{align}
\label{eqn-PutCall-IPDE-Vtilde}
0 & = \pder[\tilde{V}^E]{t}+\calL_{\tilde{s},v}\left[\tilde{V}^E(t,\tilde{S}(t),v(t))\right], \qquad (t,\tilde{S}(t),v(t))\in[0,T]\times\mathbb{R}_+^2\\
\label{eqn-PutCall-IPDE-Vtilde-TerminalCondition}
\tilde{V}^E(T) & = e^{-q_1 T}\left(\tilde{S}(T)-e^{(q_1-q_2)T}\right)^+,
\end{align}
with $\mathbb{R}_+^2 = (0,\infty)\times(0,\infty)$ and the IPDE operator $\calL_{\tilde{s},v}$ defined as 
\begin{align}
\begin{split}
\label{eqn-PutCall-IPDEOperator}
\calL_{\tilde{s},v}\left[\tilde{V}^E(t,\tilde{S},v)\right] 
	& = -\tilde{S}\left(\tilde{\lambda}_1\tilde{\kappa}_1+\tilde{\lambda}_2\tilde{\kappa}_2^-\right)\pder[\tilde{V}^E]{\tilde{s}}+\left[\xi\eta-(\xi+\Lambda)v\right]\pder[\tilde{V}^E]{v}\\
	& \qquad + \frac{1}{2}\sigma^2 v\tilde{S}^2\pder[^2\tilde{V}^E]{\tilde{s}^2}+\frac{1}{2}\omega^2 v\pder[^2\tilde{V}^E]{v^2}+\omega(\sigma_1 \rho_1-\sigma_2\rho_2)v\tilde{S}\pder[^2\tilde{V}^E]{\tilde{s}\partial v}\\
	& \qquad + \tilde{\lambda_1}\E_{\hat{\Q}}^{Y_1}\left[\tilde{V}^E\left(t,\tilde{S}e^{Y_1},v\right)-\tilde{V}^E(t,\tilde{S},v)\right]\\
	& \qquad + \tilde{\lambda}_2\E_{\hat{\Q}}^{Y_2}\left[\tilde{V}^E\left(t,\tilde{S}e^{-Y_2},v\right)-\tilde{V}^E(t,\tilde{S},v)\right],
\end{split}
\end{align}
where $\E_{\hat{\Q}}^{Y_i}$ is the expectation with respect to the r.v. $Y_i$ ($i=1,2$) under the measure $\hat{\Q}$. Note that all partial derivatives are evaluated at $(t,\tilde{S}(t),v(t))$.
\end{prop}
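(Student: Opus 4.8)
The plan is to exploit the fact that, by its very definition as a conditional expectation in \eqref{eqn-PutCall-DiscountedExcOpPrice2}, the process $M_t := \tilde{V}^E(t,\tilde{S}(t),v(t)) = \E_{\hat{\Q}}[\,e^{-q_1 T}(\tilde{S}(T)-e^{(q_1-q_2)T})^+\mid\calF_t]$ is a $\hat{\Q}$-martingale. The integrability needed for this is immediate, since $0\leq(\tilde{S}(T)-e^{(q_1-q_2)T})^+\leq\tilde{S}(T)$ and $\{\tilde{S}(t)\}$ is a $\hat{\Q}$-martingale by the construction of $\hat{\Q}$. The IPDE \eqref{eqn-PutCall-IPDE-Vtilde} will then emerge by writing down the semimartingale decomposition of $M_t$ via It\^o's formula and forcing its finite-variation part to vanish; the terminal condition \eqref{eqn-PutCall-IPDE-Vtilde-TerminalCondition} and the representation $C^E = S_2(t)e^{q_2 t}\tilde{V}^E$ are mere restatements of \eqref{eqn-PutCall-DiscountedExcOpPrice2} and the definition of the discounted price.

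Concretely, first I would apply It\^o's formula for jump-diffusions (Assumption \ref{asp-PutCall-Differentiability} supplies the required regularity) to $\tilde{V}^E(t,\tilde{S}(t),v(t))$, using the $\hat{\Q}$-dynamics \eqref{eqn-PutCall-YieldRatioSDE-Q} for $\tilde{S}$ and \eqref{eqn-PutCall-VarianceSDE-Q} for $v$. The continuous part contributes $\pder[\tilde{V}^E]{t}\,\dif t$, the first-order drift terms $-\tilde{S}(\tilde{\lambda}_1\tilde{\kappa}_1+\tilde{\lambda}_2\tilde{\kappa}_2^-)\pder[\tilde{V}^E]{\tilde{s}}$ and $[\xi\eta-(\xi+\Lambda)v]\pder[\tilde{V}^E]{v}$, and the second-order terms governed by the quadratic (co)variations of the continuous martingale parts, namely $\dif\langle\tilde{S}^c\rangle = \sigma^2 v\tilde{S}^2\,\dif t$, $\dif\langle v\rangle = \omega^2 v\,\dif t$, and $\dif\langle\tilde{S}^c,v\rangle = \omega(\sigma_1\rho_1-\sigma_2\rho_2)v\tilde{S}\,\dif t$, the last of which follows from the correlation $\E_{\hat{\Q}}[\dif\bar{W}(t)\,\dif\bar{Z}(t)] = (1/\sigma)(\sigma_1\rho_1-\sigma_2\rho_2)\,\dif t$ recorded above. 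These reproduce exactly the diffusive part of $\calL_{\tilde{s},v}$. The jump part contributes $\int_{\mathbb{R}}[\tilde{V}^E(t,\tilde{S}e^{y_1},v)-\tilde{V}^E(t,\tilde{S},v)]p(\dif y_1,\dif t)$ and the analogous integral against $p(\dif y_2,\dif t)$ with $e^{-y_2}$ in place of $e^{y_1}$, reflecting that a type-$j$ jump sends $\tilde{S}$ to $\tilde{S}e^{y_1}$ or $\tilde{S}e^{-y_2}$.

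Next I would split each jump integral into its compensated (martingale) part and its compensator, using that under $\hat{\Q}$ the measure $p(\dif y_j,\dif t)$ has intensity $\tilde{\lambda}_j$ and $\hat{\Q}$-jump-size law $m_{\hat{\Q}}(\dif y_j)$; the compensator then produces precisely the finite-variation terms $\tilde{\lambda}_1\E_{\hat{\Q}}^{Y_1}[\tilde{V}^E(t,\tilde{S}e^{Y_1},v)-\tilde{V}^E(t,\tilde{S},v)]\,\dif t$ and $\tilde{\lambda}_2\E_{\hat{\Q}}^{Y_2}[\tilde{V}^E(t,\tilde{S}e^{-Y_2},v)-\tilde{V}^E(t,\tilde{S},v)]\,\dif t$. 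Collecting all finite-variation contributions, the drift of $M_t$ is exactly $\pder[\tilde{V}^E]{t}+\calL_{\tilde{s},v}[\tilde{V}^E]$, while the remaining stochastic integrals against $\dif\bar{W}$, $\dif\bar{Z}$, and the compensated jump measures form a local martingale. Since $M_t$ is a genuine martingale and its finite-variation part is continuous, uniqueness of the semimartingale (Doob--Meyer) decomposition forces the drift to vanish $\dif t\otimes\dif\hat{\Q}$-a.e., which, by continuity of the integrand under Assumption \ref{asp-PutCall-Differentiability}, yields \eqref{eqn-PutCall-IPDE-Vtilde} pointwise on $[0,T]\times\mathbb{R}_+^2$.

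The main obstacle is the integrability bookkeeping that legitimizes the final step: one must verify that the stochastic integrals are true martingales (not merely local ones) and that the jump integrals are well-defined, which requires control of the exponential moments $\E_{\hat{\Q}}[e^{Y_1}]$ and $\E_{\hat{\Q}}[e^{-Y_2}]$ together with an a priori bound on $\tilde{V}^E$ and its first $\tilde{s}$-derivative. A standard localization along a sequence of stopping times reducing the local martingale, combined with the linear-growth bound $\tilde{V}^E\leq e^{-q_1 T}\tilde{S}$ inherited from the payoff via the martingale property of $\tilde{S}$, handles this; I would then pass to the limit to conclude that the drift integrand is identically zero. I refer to \citet[Appendix 2]{CheangGarces-2019} for the detailed execution of these estimates in the present setting.
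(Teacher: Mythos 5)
Your proposal is correct and follows essentially the same route as the paper, which proves this result via It\^o's formula for jump-diffusion processes combined with standard martingale arguments (deferring the technical estimates to \citet[Appendix 2]{CheangGarces-2019}, just as you do). Your identification of the quadratic covariations, the jump compensators under $\hat{\Q}$, and the vanishing-drift argument via uniqueness of the semimartingale decomposition matches the intended proof, including the localization step needed to upgrade local martingales to true martingales.
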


Let $C^A(t,S_1(t),S_2(t),v(t))$ be the price at time $t$ of an American exchange option written on $S_1$ and $S_2$. After a rearrangement of terms, standard theory on American option pricing \citep[see e.g.][]{Myneni-1992} dictates that the discounted American exchange option price $\tilde{V}^A(t,\tilde{S}(t),v(t))$ is given by
\begin{align}
\begin{split}
\tilde{V}^A(t,\tilde{S}(t),v(t)) 
	& \equiv \frac{C^A(t,S_1(t),S_2(t),v(t)}{S_2(t)e^{q_2 t}}\\
	& = \esssup_{u\in[t,T]} e^{-q_1 u}\E_{\hat{\Q}}\left[\left.\left(\tilde{S}(u)-e^{(q_1-q_2)u}\right)^+\right|\calF_t\right],
\end{split}
\end{align}
where the supremum is taken over all $\hat{\Q}$-stopping times $u\in[t,T]$. \emph{From here, we see that the change of num\'eraire reduces the problem to pricing an American call option on the asset yield price ratio $\tilde{S}(t)$,} similar to our observation for the European exchange option. The price of the American exchange option also hedges against the exchange option payoff in the sense that $\tilde{V}^A(t,\tilde{S}(t),v(t)) \geq e^{-q_1 t}(\tilde{S}(t)-e^{(q_1-q_2)t})^+$ for all $t\in [0,T)$ and $\tilde{V}^A(T,\tilde{S}(T),v(T)) = e^{-q_1 T}(\tilde{S}(T)-e^{(q_1-q_2)T})^+.$

Before prescribing additional boundary conditions to IPDE \eqref{eqn-PutCall-IPDE-Vtilde} for the American exchange option, we first define the continuation and stopping regions, denoted by $\calC$ and $\calS$, respectively, that divide the domain $[0,T]\times\mathbb{R}_+^2$ of IPDE \eqref{eqn-PutCall-IPDE-Vtilde}. These regions are given by
\begin{align}
\begin{split}
\label{eqn-PutCall-StoppingContinuationRegions}
\calS & = \left\{(t,\tilde{S},v)\in[0,T]\times\mathbb{R}_+^2: \tilde{V}^A(t,\tilde{S},v) = e^{-q_1 t}\left(\tilde{S}-e^{(q_1-q_2)t}\right)^+\right\}\\
\calC & = \left\{(t,\tilde{S},v)\in[0,T]\times\mathbb{R}_+^2: \tilde{V}^A(t,\tilde{S},v) > e^{-q_1 t}\left(\tilde{S}-e^{(q_1-q_2)t}\right)^+\right\}.
\end{split}
\end{align}
Denote by $\calS(t)$ and $\calC(t)$ the stopping and continuation regions at a fixed $t\in[0,T]$.

For the American exchange option, there exists a critical stock price ratio $B(t,v)\geq 1$ such that the stopping and continuation regions can be written as
\begin{align}
\begin{split}
\label{eqn-PutCall-StoppingContinuationRegions2}
\calS & = \left\{(t,\tilde{S},v)\in[0,T]\times\mathbb{R}_+^2: \tilde{S}\geq B(t,v)e^{(q_1-q_2)t}\right\}\\
\calC & = \left\{(t,\tilde{S},v)\in[0,T]\times\mathbb{R}_+^2: \tilde{S}< B(t,v)e^{(q_1-q_2)t}\right\}
\end{split}
\end{align}
\citep{BroadieDetemple-1997, Touzi-1999}.\footnote{\citet{Mishura-2009} analyzed, in further detail, the properties of the exercise region of the finite-maturity American exchange option in a pure diffusion setting. In the same setting, \citet{Villeneuve-1999} established the nonemptiness of exercise regions of American rainbow options, which include spread and exchange options as special cases.} For a fixed $t\in[0,T]$ and $v\in(0,\infty)$, the early exercise boundary and the continuation and stopping regions are illustrated in Figure \ref{fig-EarlyExerciseBoundaryOptionPrice}. It is known that in the continuation region the American exchange option behaves like its live European counterpart, and so $\tilde{V}^A$ satisfies IPDE \eqref{eqn-PutCall-IPDE-Vtilde} for $(t,\tilde{S},v)\in\calC$.

\begin{figure}
\centering
\includegraphics[width = 0.5\textwidth]{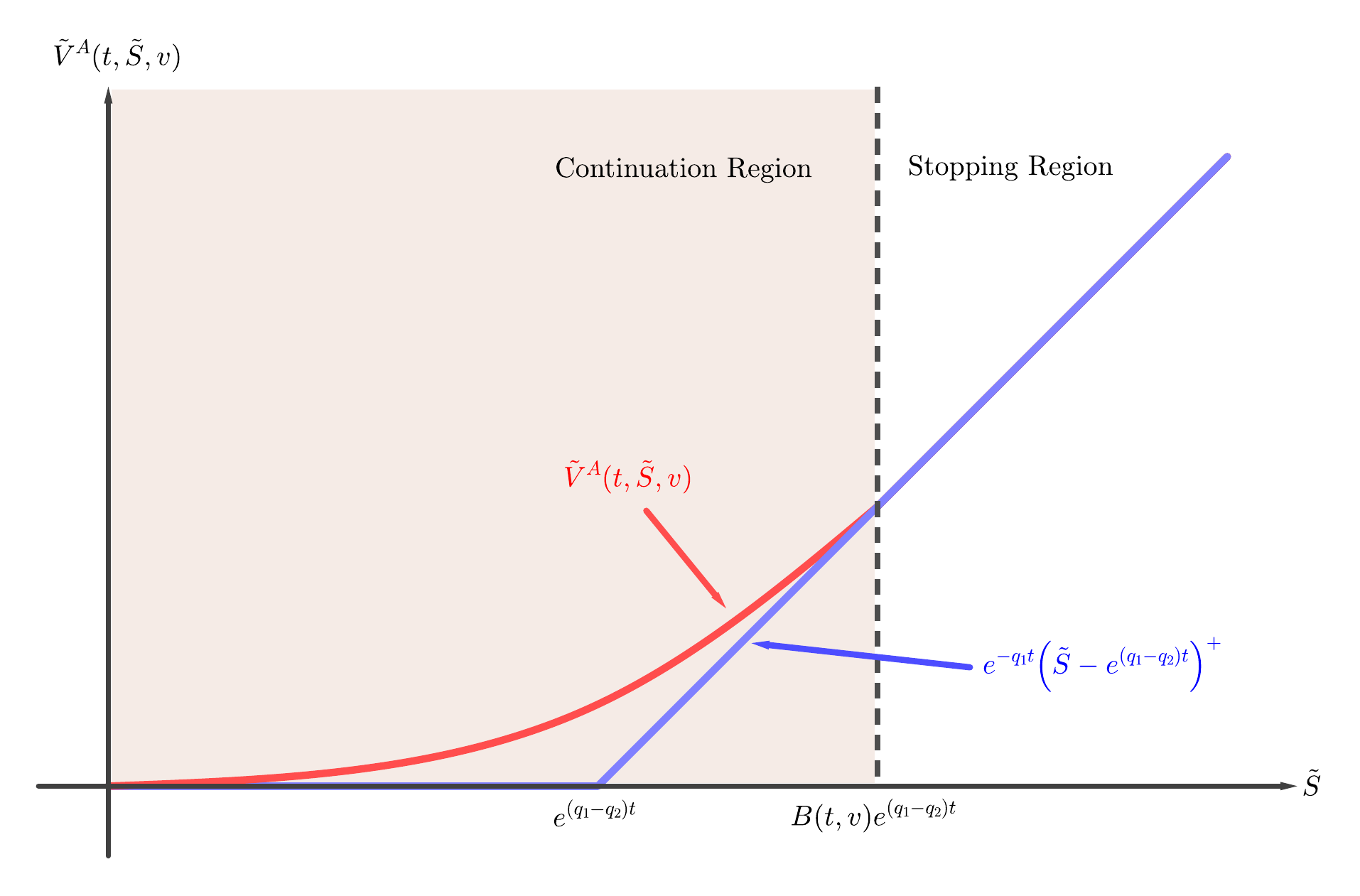}
\caption{The early exercise boundary and the continuation and stopping regions for the transformed American exchange option.}
\label{fig-EarlyExerciseBoundaryOptionPrice}
\end{figure}

We require value-matching and smooth-pasting conditions on IPDE \eqref{eqn-PutCall-IPDE-Vtilde} to elliminate arbitrage opportunities and to ensure that $\tilde{V}^A$ and $\partial\tilde{V}^A/\partial\tilde{s}$ are both continuous across the early exercise boundary $A(t,v)\equiv B(t,v)e^{(q_1-q_2)t}$. Specifically, the required value-matching condition is
\begin{equation}
\label{eqn-PutCall-ValueMatchingCondition-Vtilde}
\tilde{V}^A(t,A(t,v),v(t)) = e^{-q_1 t}\left(A(t,v)-e^{(q_1-q_2)t}\right),
\end{equation}
and the smooth-pasting conditions are
\begin{align}
\begin{split}
\label{eqn-PutCall-SmoothPastingCondition-Vtilde}
\lim_{\tilde{S}\to A(t,v)} \pder[\tilde{V}^A]{\tilde{s}}(t,\tilde{S}(t),v(t)) & = e^{-q_1 t}\\
\lim_{\tilde{S}\to A(t,v)} \pder[\tilde{V}^A]{v}(t,\tilde{S}(t),v(t)) & = 0\\
\lim_{\tilde{S}\to A(t,v)} \pder[\tilde{V}^A]{t}(t,\tilde{S}(t),v(t)) & = -q_1 e^{-q_1 t}\tilde{S}(t)+q_2 e^{-q_2 t}.
\end{split}
\end{align}
Therefore, $\tilde{V}^A(t,\tilde{S}(t),v(t))$ is a solution to IPDE \eqref{eqn-PutCall-IPDE-Vtilde} over the domain $0\leq t\leq T$, $0<\tilde{S}<A(t,v)$, $0<v<\infty$. The IPDE has terminal and boundary conditions
\begin{align}
\begin{split}
\label{eqn-PutCall-BoundaryConditions-Vtilde}
\tilde{V}(T,\tilde{S}(T),v(T)) & = e^{-q_1 T}\left(\tilde{S}(T)-e^{(q_1-q_2)T}\right)^+\\
\tilde{V}(t,0,v(t)) & = 0,
\end{split}
\end{align}
value-matching condition \eqref{eqn-PutCall-ValueMatchingCondition-Vtilde} and smooth-pasting condition \eqref{eqn-PutCall-SmoothPastingCondition-Vtilde}.

The American exchange option price $\tilde{V}^A(t,\tilde{S}(t),v(t))$ can be decomposed into the sum of the discounted European exchange option price $\tilde{V}(t,\tilde{S}(t),v(t))$ and an early exercise premium.

\begin{prop}
\label{prop-PutCall-EarlyExerciseRepresentation}
Suppose Assumption \ref{asp-PutCall-Differentiability} also holds for $\tilde{V}^A(t,\tilde{S},v)$. Assume further that the smooth pasting conditions \eqref{eqn-PutCall-SmoothPastingCondition-Vtilde} across the early exercise boundary hold. 
Then $\tilde{V}^A(t,\tilde{S}(t),v(t))$ can be expressed as
\begin{equation}
\label{eqn-PutCall-EarlyExerciseRepresentation}
\tilde{V}^A(t,\tilde{S}(t),v(t)) = \tilde{V}(t,\tilde{S}(t),v(t)) + \tilde{V}^P(t,\tilde{S}(t),v(t)),
\end{equation}
where $\tilde{V}(t,\tilde{S}(t),v(t))$ is the discounted European exchange option price given by equation \eqref{eqn-PutCall-DiscountedExcOpPrice2} and $\tilde{V}^P(t,\tilde{S}(t),v(t))$ is the early exercise premium given by 
\begin{align*}
& \tilde{V}^P(t,\tilde{S}(t),v(t))\\ 
& \qquad = \E_{\hat{\Q}}\left[\left.\int_t^T\left(q_1 e^{-q_1 s}\tilde{S}(s)-q_2 e^{-q_2 s}\right)\vm{1}(\calA(s))\dif s\right|\calF_t\right]\\
& \qquad \qquad -\tilde{\lambda}_1 \E_{\hat{\Q}}\left[\int_t^T \E_{\hat{\Q}}^{Y_1}\left[\left(\tilde{V}^A\left(s,\tilde{S}(s)e^{Y_1},v(s)\right)\right.\right.\right.\\
& \hspace{100pt} -\left.\left.\left.\left.\left(e^{-q_1 s}\tilde{S}(s)e^{Y_1}-e^{-q_2 s}\right)\right)\vm{1}(\calA_1(s))\right]\dif s\right|\calF_t\right]\\
& \qquad \qquad -\tilde{\lambda}_2 \E_{\hat{\Q}}\left[\int_t^T \E_{\hat{\Q}}^{Y_2}\left[\left(\tilde{V}^A\left(s,\tilde{S}(s)e^{-Y_2},v(s)\right)\right.\right.\right.\\
& \hspace{100pt} -\left.\left.\left.\left.\left(e^{-q_1 s}\tilde{S}(s)e^{-Y_2}-e^{-q_2 s}\right)\right)\vm{1}(\calA_2(s))\right]\dif s\right|\calF_t\right],
\end{align*}
where $\vm{1}(\cdot)$ is the indicator function, $\calA(s)$ is the event $\{(\tilde{S}(s),v(s))\in\calS(s)\}$ and
\begin{align*}
\calA_1(s)	& = \left\{B(s,v)e^{(q_1-q_2)s}\leq \tilde{S}(s) < B(s,v)e^{(q_1-q_2)s}e^{-Y_1}\right\}\\
\calA_2(s)	& = \left\{B(s,v)e^{(q_1-q_2)s}\leq \tilde{S}(s) < B(s,v)e^{(q_1-q_2)s}e^{Y_2}\right\}.
\end{align*}
Note that all partial derivatives in $\partial\tilde{V}^A/\partial t+\calL_{\tilde{s},v}[\tilde{V}^A(s,\tilde{S}(s),v(s))]$ are all evaluated at $(s,\tilde{S}(s),v(s))$.
\end{prop}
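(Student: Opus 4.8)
The plan is to apply It\^o's formula for jump-diffusions to the discounted American price process $\tilde{V}^A(s,\tilde{S}(s),v(s))$ over $[t,T]$ and then take the conditional $\hat{\Q}$-expectation. Under the dynamics \eqref{eqn-PutCall-VarianceSDE-Q}--\eqref{eqn-PutCall-YieldRatioSDE-Q}, the Brownian and compensated-jump contributions are $\hat{\Q}$-martingales and vanish in expectation, leaving
\begin{equation*}
\E_{\hat{\Q}}\!\left[\tilde{V}^A(T,\tilde{S}(T),v(T))\mid\calF_t\right] = \tilde{V}^A(t,\tilde{S}(t),v(t)) + \E_{\hat{\Q}}\!\left[\int_t^T\!\left(\pder[\tilde{V}^A]{s}+\calL_{\tilde{s},v}[\tilde{V}^A]\right)\!\dif s\;\Big|\;\calF_t\right].
\end{equation*}
Since $\tilde{V}^A$ and $\tilde{V}$ share the terminal payoff, the left-hand side equals the European price $\tilde{V}(t,\tilde{S}(t),v(t))$ by \eqref{eqn-PutCall-DiscountedExcOpPrice2}. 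Rearranging identifies $\tilde{V}^P=\tilde{V}^A-\tilde{V}$ with $-\E_{\hat{\Q}}[\int_t^T(\partial_s+\calL_{\tilde{s},v})\tilde{V}^A\,\dif s\mid\calF_t]$, so the whole task reduces to computing this drift.

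Next I would evaluate $\partial_s\tilde{V}^A+\calL_{\tilde{s},v}[\tilde{V}^A]$ region by region. On the continuation set $\calC$ it vanishes, since $\tilde{V}^A$ solves IPDE \eqref{eqn-PutCall-IPDE-Vtilde} there. On the stopping set $\calS$, where $\tilde{V}^A(s,\tilde{S},v)=e^{-q_1 s}\tilde{S}-e^{-q_2 s}$, the second-order and $v$-derivative terms in \eqref{eqn-PutCall-IPDEOperator} drop out (the exercise value is affine in $\tilde{S}$ and independent of $v$), the $\partial_s$ term yields $-(q_1 e^{-q_1 s}\tilde{S}-q_2 e^{-q_2 s})$, and the first-order drift term yields $-\tilde{S}(\tilde{\lambda}_1\tilde{\kappa}_1+\tilde{\lambda}_2\tilde{\kappa}_2^-)e^{-q_1 s}$.

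The delicate part is the jump operator, because a jump $\tilde{S}\mapsto\tilde{S}e^{Y_1}$ (or $\tilde{S}\mapsto\tilde{S}e^{-Y_2}$) can carry the ratio from $\calS$ back into $\calC$, so one cannot simply substitute the exercise value at the post-jump point. I would write $\tilde{V}^A(s,\tilde{S}e^{Y_1},v)$ as the exercise value $e^{-q_1 s}\tilde{S}e^{Y_1}-e^{-q_2 s}$ plus the excess $\tilde{V}^A(s,\tilde{S}e^{Y_1},v)-(e^{-q_1 s}\tilde{S}e^{Y_1}-e^{-q_2 s})$, which is nonzero exactly on the event $\calA_1(s)$ (where $\tilde{S}e^{Y_1}$ re-enters $\calC$). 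The exercise-value part contributes $\tilde{\lambda}_1 e^{-q_1 s}\tilde{S}\,\E_{\hat{\Q}}^{Y_1}[e^{Y_1}-1]=\tilde{\lambda}_1 e^{-q_1 s}\tilde{S}\tilde{\kappa}_1$, which cancels the first-order drift compensator above; the analogous split of the $Y_2$-term contributes $\tilde{\lambda}_2 e^{-q_1 s}\tilde{S}\tilde{\kappa}_2^-$ and cancels the remaining compensator. What survives on $\calS$ is $-(q_1 e^{-q_1 s}\tilde{S}-q_2 e^{-q_2 s})$ together with the two excess terms restricted to $\calA_1(s)$ and $\calA_2(s)$; since $\calA_1,\calA_2\subset\calA$, the identity extends to the whole domain with the common factor $\vm{1}(\calA(s))$. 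Substituting this drift into the expectation and negating reproduces the three integrals defining $\tilde{V}^P$.

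The main obstacle is analytic rather than algebraic: across the free boundary $A(s,v)$ the value $\tilde{V}^A$ is only $C^1$ (by Assumption \ref{asp-PutCall-Differentiability} together with the smooth-pasting conditions \eqref{eqn-PutCall-SmoothPastingCondition-Vtilde}), not $C^2$, so the classical It\^o formula does not apply verbatim. The smooth-pasting conditions are exactly what guarantees that the generalized (It\^o--Meyer) formula produces no local-time term along $\{\tilde{S}=A(s,v)\}$, allowing $\tilde{V}^A$ to be treated as piecewise $C^{1,2}$ on $\calC$ and $\calS$. I would also invoke integrability of $\tilde{S}(s)$ and of $\tilde{V}^A$ under $\hat{\Q}$ to justify the vanishing of the martingale terms and the use of Fubini's theorem to interchange the time integral, the outer $\hat{\Q}$-expectation, and the inner jump-size expectations $\E_{\hat{\Q}}^{Y_j}$.
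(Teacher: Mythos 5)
Your proposal is correct and follows essentially the same route as the paper's proof, which (as deferred to \citet{GarcesCheang-2020} and modelled on \citet[Proposition 6.1]{CheangGarces-2019}) applies It\^o's formula for jump-diffusions to $\tilde{V}^A(s,\tilde{S}(s),v(s))$, uses the martingale property of the diffusion and compensated-jump terms under $\hat{\Q}$, identifies the terminal expectation with the European price, and evaluates the drift $\partial_s\tilde{V}^A+\calL_{\tilde{s},v}[\tilde{V}^A]$ separately on the continuation and stopping regions, with the smooth-pasting conditions justifying the generalized It\^o argument across the free boundary. Your region-by-region computation, the cancellation of the compensator terms $\tilde{\lambda}_1\tilde{\kappa}_1+\tilde{\lambda}_2\tilde{\kappa}_2^-$ against the exercise-value parts of the jump integrals, and the identification of the residual jump contributions with the events $\calA_1(s)$ and $\calA_2(s)$ all match the intended argument.
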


\begin{proof}
The proof follows the same process and techniques used in \citet[Proposition 6.1]{CheangGarces-2019}. We provide the proof for this specific case in \citet{GarcesCheang-2020}.
\end{proof}

The early exercise premium can be decomposed into a diffusion component (the positive term) and a jump component (the negative terms). However, unlike the early exercise premium derived by \citet{CheangChiarellaZiogas-2013} for an American call option under SVJD dynamics, our early exercise premium representation contains two jump terms. This is because $\tilde{S}(t)$ has two sources of jumps: the jumps in the price of the first asset and the jumps in the num\'eraire. Nonetheless, the interpretation remains the same: the diffusion term captures the discounted expected value of cash flows due to dividends when asset prices are in the stopping region and the jump terms capture the rebalancing costs incurred by the holder of the American exchange option when a jump instantaneously occurs in the price of either asset, causing $\tilde{S}(t)$ to jump back into the continuation region immediately after the option is exercised.\footnote{In this situation, the investor is unable to adjust the decision to exercise in response to the instantaneous jump in asset prices and is therefore vulnerable to the rebalancing cost described earlier. A similar phenomenon in the context of consumption-investment problems with transaction costs in a L\'evy-driven market is explored in greater technical detail by \citet{deValliere-2016}.} Figure \ref{fig-RebalancingCosts} illustrates the loss (captured by the difference in option value and the exercise value) incurred by the option holder when the asset yield ratio jumps back into the continuation region due to a jump, by a factor $e^{Y_1}$, in the price of the first asset. A similar graphical analysis holds if the price of the num\'eraire asset instantaneously jumps instead of the price of the first asset (in this case, the new asset yield ratio is $\tilde{S}(t)e^{-Y_2}$).

\begin{figure}
\centering
\includegraphics[width = 0.5\linewidth]{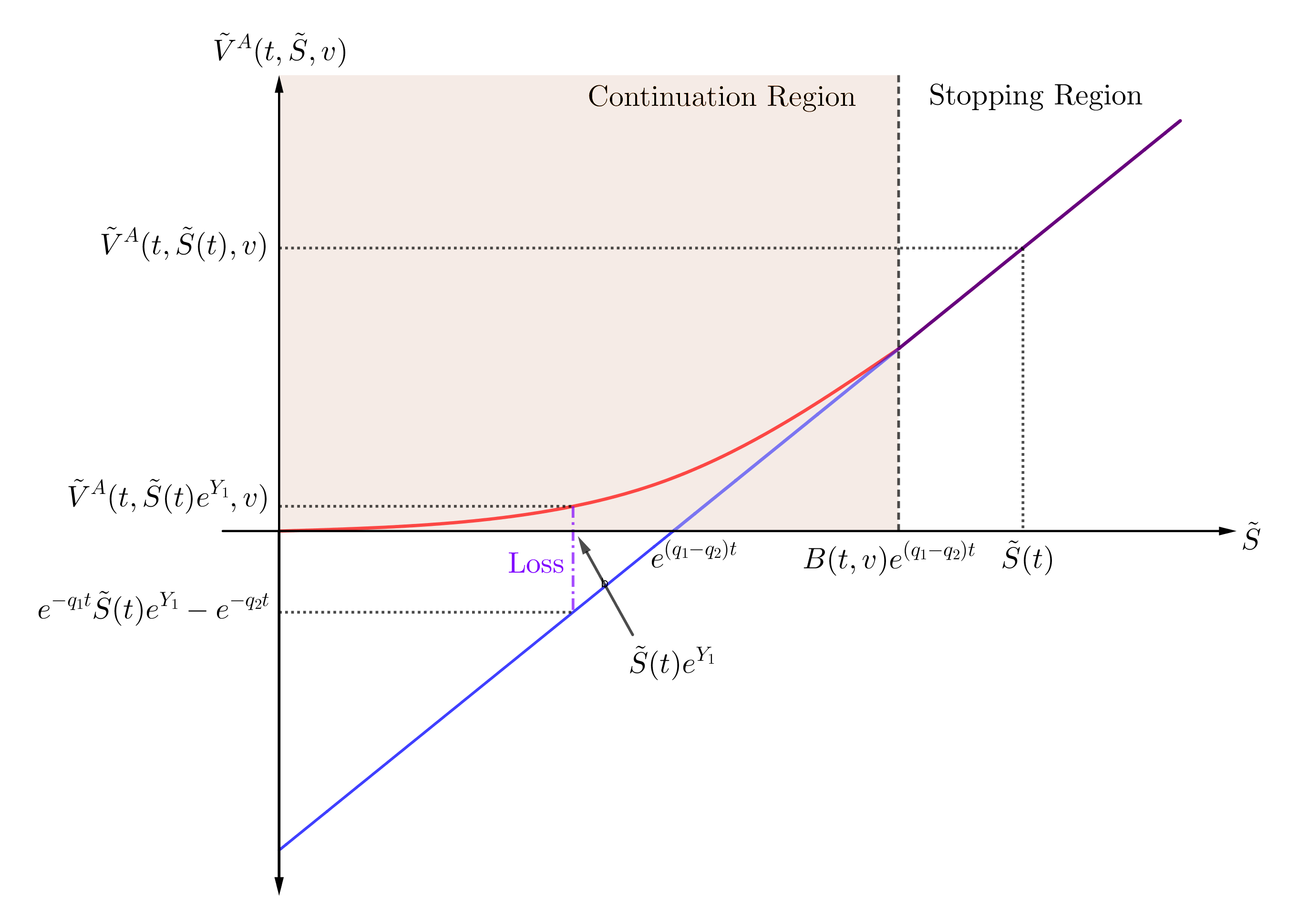}
\caption{Loss incurred by the option holder when the asset ratio instantaneously jumps from $\tilde{S}(t)$ in the stopping region to $\tilde{S}(t)e^{Y_1}$ back in the continuation region.}
\label{fig-RebalancingCosts}
\end{figure}

Recall that $\tilde{V}^A(t,\tilde{S}(t),v(t))$ is a solution of the homogeneous IPDE \eqref{eqn-PutCall-IPDE-Vtilde} over the \emph{restricted domain} $0\leq t\leq T$, $0<\tilde{S}(t)<B(t,v)e^{(q_1-q_2)t}$, and $0<v<\infty$ subject to the value-matching condition \eqref{eqn-PutCall-ValueMatchingCondition-Vtilde}, the smooth-pasting condition \eqref{eqn-PutCall-SmoothPastingCondition-Vtilde}, and boundary conditions \eqref{eqn-PutCall-BoundaryConditions-Vtilde}. Following \citet{Jamshidian-1992} and \citet{ChiarellaZiogas-2004}, the restriction on the domain can be removed by adding the appropriate inhomogeneous term to the IPDE such that the equation holds for all $\tilde{S}(t)>0$. The inhomogeneous IPDE corresponding to our analysis is presented in the following proposition. This analysis requires that $\tilde{V}^A(t,\tilde{S}(t),v(t))$ and its first-order partial derivative with respect to $\tilde{S}$ are continuous, but the value-matching and smooth-pasting conditions are sufficient to meet this requirement.

\begin{prop}
\label{prop-PutCall-InhomogeneousTerm}
The discounted American exchange option price $\tilde{V}^A(t,\tilde{S},v)$ is a solution to the inhomogeneous IPDE
\begin{equation}
\label{eqn-PutCall-InhomogeneousIPDE}
0 = \pder[\tilde{V}^A]{t}+\calL_{\tilde{s},v}\left[\tilde{V}^A(t,\tilde{S}(t),v(t))\right]+\Xi(t,\tilde{S}(t),v(t)),
\end{equation}
where the inhomogeneous term $\Xi$ is given by
{\small\begin{align}
\begin{split}
\label{eqn-PutCall-InhomogeneousTerm}
& \Xi(t,\tilde{S}(t),v(t))\\
	& = \left(q_1 e^{-q_1 t}\tilde{S}(t)-q_2 e^{-q_2 t}\right)\vm{1}(\calA(t))\\
	& \quad - \tilde{\lambda}_1\vm{1}(\calA(t))\int_{-\infty}^{b(t,\tilde{S}(t),v(t))}\left[\tilde{V}^A\left(t,\tilde{S}(t)e^{y},v(t)\right)-\left(e^{-q_1 t}\tilde{S}(t)e^y-e^{-q_2 t}\right)\right]G_1(y)\dif y\\
	& \quad - \tilde{\lambda}_2\vm{1}(\calA(t))\int_{-b(t,\tilde{S}(t),v(t))}^{\infty}\left[\tilde{V}^A\left(t,\tilde{S}(t)e^{-y},v(t)\right)-\left(e^{-q_1 t}\tilde{S}(t)e^{-y}-e^{-q_2 t}\right)\right]G_2(y)\dif y,
\end{split}
\end{align}}
where $G_1$ and $G_2$ are the pdfs of $Y_1$ and $Y_2$, respectively, under $\hat{\Q}$, and $b(t,\tilde{S}(t),v(t)) \equiv \ln[B(t,v(t))e^{(q_1-q_2)t}/\tilde{S}(t)]$. This equation is to be solved for $(t,\tilde{S}(t),v(t))\in[0,T]\times\mathbb{R}_+^2$, subject to terminal and boundary conditions \eqref{eqn-PutCall-BoundaryConditions-Vtilde}.
\end{prop}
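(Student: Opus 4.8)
The plan is to follow the Jamshidian/Chiarella--Ziogas device of verifying the claimed equation separately on the continuation region $\calC$ and the stopping region $\calS$ of \eqref{eqn-PutCall-StoppingContinuationRegions2}, and then gluing the two pieces together using the value-matching and smooth-pasting conditions. On $\calC$ there is nothing new to do: it is already recorded that $\tilde{V}^A$ satisfies the homogeneous IPDE \eqref{eqn-PutCall-IPDE-Vtilde} there, and since every term of $\Xi$ in \eqref{eqn-PutCall-InhomogeneousTerm} carries the indicator $\vm{1}(\calA(t))$, which vanishes off the stopping region, we have $\Xi\equiv 0$ on $\calC$. Hence \eqref{eqn-PutCall-InhomogeneousIPDE} reduces to \eqref{eqn-PutCall-IPDE-Vtilde} there and holds automatically.

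The substance of the argument is therefore to compute $\partial_t\tilde{V}^A+\calL_{\tilde{s},v}[\tilde{V}^A]$ on $\calS$ and show it equals $-\Xi$. On $\calS$ we have $\tilde{S}\geq B(t,v)e^{(q_1-q_2)t}\geq e^{(q_1-q_2)t}$, so the option equals its exercise value and, locally, $\tilde{V}^A(t,\tilde{s},v)=g(t,\tilde{s}):=e^{-q_1 t}\tilde{s}-e^{-q_2 t}$. First I would substitute $g$ into the \emph{local} (drift and second-order) part of $\calL_{\tilde{s},v}$: because $g$ is affine in $\tilde{s}$ and independent of $v$, all second-order derivatives and $\partial g/\partial v$ vanish, $\partial g/\partial\tilde{s}=e^{-q_1 t}$, and $\partial g/\partial t=-q_1 e^{-q_1 t}\tilde{s}+q_2 e^{-q_2 t}$; the drift term then contributes $-\tilde{s}(\tilde{\lambda}_1\tilde{\kappa}_1+\tilde{\lambda}_2\tilde{\kappa}_2^-)e^{-q_1 t}$. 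The delicate part is the two nonlocal jump terms, which must be evaluated using the \emph{true} value $\tilde{V}^A$ at the post-jump state, not the exercise value. Splitting each jump integral at the point where the post-jump asset ratio crosses the boundary --- namely $Y_1=b(t,\tilde{S},v)$ for the first term and $Y_2=-b$ for the second, with $b=\ln[B(t,v)e^{(q_1-q_2)t}/\tilde{S}]$ --- separates the range where the post-jump state remains in $\calS$ (so $\tilde{V}^A=g$) from the range where it lands in $\calC$ (so $\tilde{V}^A>g$).

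The key algebraic observation is that if one replaced $\tilde{V}^A$ by the affine $g$ over the \emph{entire} real line in both jump integrals, then, using $\tilde{\kappa}_1=\E_{\hat{\Q}}[e^{Y_1}-1]$ and $\tilde{\kappa}_2^-=\E_{\hat{\Q}}[e^{-Y_2}-1]$, one would obtain $\tilde{\lambda}_1 e^{-q_1 t}\tilde{s}\tilde{\kappa}_1+\tilde{\lambda}_2 e^{-q_1 t}\tilde{s}\tilde{\kappa}_2^-$, which cancels exactly against the drift contribution computed above. Thus $\calL_{\tilde{s},v}[g]$ with full-range jump integrals is zero, and the only surviving terms after the split are (i) the local time-derivative $-q_1 e^{-q_1 t}\tilde{s}+q_2 e^{-q_2 t}$ and (ii) the parts of the jump integrals over the continuation range, where the integrand is exactly $\tilde{V}^A(t,\tilde{s}e^{\pm y},v)-(e^{-q_1 t}\tilde{s}e^{\pm y}-e^{-q_2 t})$ against $G_1$ on $(-\infty,b)$ and $G_2$ on $(-b,\infty)$. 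Comparing these with \eqref{eqn-PutCall-InhomogeneousTerm} shows precisely that $\partial_t\tilde{V}^A+\calL_{\tilde{s},v}[\tilde{V}^A]=-\Xi$ on $\calS$.

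I expect the main obstacle to be not the algebra but the justification that a single equation holds across the free boundary. Here the continuity of $\tilde{V}^A$ and $\partial\tilde{V}^A/\partial\tilde{s}$ supplied by the value-matching condition \eqref{eqn-PutCall-ValueMatchingCondition-Vtilde} and the smooth-pasting conditions \eqref{eqn-PutCall-SmoothPastingCondition-Vtilde} guarantees that no singular (Dirac-type) contributions from the first derivative arise along $A(t,v)$, so the pointwise identities on the two open regions patch into the single inhomogeneous IPDE \eqref{eqn-PutCall-InhomogeneousIPDE} on all of $[0,T]\times\mathbb{R}_+^2$, with the terminal and boundary conditions \eqref{eqn-PutCall-BoundaryConditions-Vtilde} inherited directly from the free-boundary formulation.
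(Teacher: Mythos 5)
Your proposal is correct and takes essentially the same route as the paper: the paper's one-line identity $0 = \left(\pder[\tilde{V}^A]{t}+\calL_{\tilde{s},v}[\tilde{V}^A]\right) - \left(\pder[\tilde{V}^A]{t}+\calL_{\tilde{s},v}[\tilde{V}^A]\right)\vm{1}(\calA(t))$ is exactly your case split between the continuation region (where the homogeneous IPDE holds and $\Xi$ vanishes) and the stopping region (where the terms cancel trivially), and your expansion on $\calS$ --- substituting the exercise value into the local terms, splitting the jump integrals at $b(t,\tilde{S},v)$, and cancelling the drift against the full-range $\tilde{\kappa}_1,\tilde{\kappa}_2^-$ contributions --- is precisely the computation the paper compresses into ``expanding the negative term with the functional form of $\tilde{V}^A$ in the stopping region.'' You simply spell out the algebra and the gluing across the free boundary that the paper leaves implicit.
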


\begin{proof}
Observe that for all $(t,\tilde{S}(t),v(t))\in[0,T]\times\mathbb{R}_+^2$, the equation $$0 = \pder[\tilde{V}^A]{t}+\calL_{\tilde{s},v}\left[\tilde{V}^A(t,\tilde{S}(t),v(t))\right] - \left(\pder[\tilde{V}^A]{t}+\calL_{\tilde{s},v}\left[\tilde{V}^A(t,\tilde{S}(t),v(t))\right]\right)\vm{1}(\calA(t))$$ holds. Equation \eqref{eqn-PutCall-InhomogeneousTerm} is obtained by expanding the negative term in the above equation with the functional form of $\tilde{V}^A$ in the stopping region and using the $G_1$ and $G_2$ to rewrite the expectations as integrals.
\end{proof}

\section{Limit of the Early Exercise Boundary at Maturity}
\sectionmark{Limit of the Early Exercise Boundary}
\label{sec-PutCall-EEBLimit}

When implementing numerical schemes to solve for the American exchange option price, it is important to know the behavior of the critical asset price ratio $B(t,v)$ associated to the early exercise boundary of the American exchange option immediately before the option matures (i.e. as $t\to T^-$). Doing so provides a terminal condition $B(T,v)$ for the unknown early exercise boundary that can be used in conjunction with the terminal condition on the discounted American exchange option price. The next proposition presents the limit of the early exercise boundary, which we obtain following the method of \citet{ChiarellaZiogas-2009}.

\begin{prop}
\label{prop-PutCall-EEBLimit}
The limit $B(T^-,v)\equiv\lim_{t\to T^-}B(t,v)$ is a solution of the equation
\begin{equation}
\label{eqn-PutCall-EEBLimit}
B(T^-,v) = \max\left\{1,\frac{q_2+\tilde{\lambda}_1\int_{-\infty}^{-\ln B(T^-,v)} G_1(y)\dif y + \tilde{\lambda}_2\int_{\ln B(T^-,v)}^\infty G_2(y)\dif y}{q_1 +\tilde{\lambda}_1\int_{-\infty}^{-\ln B(T^-,v)} e^y G_1(y)\dif y + \tilde{\lambda}_2\int_{\ln B(T^-,v)}^\infty e^{-y} G_2(y)\dif y}\right\}.
\end{equation}
\end{prop}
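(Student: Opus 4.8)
The plan is to derive equation \eqref{eqn-PutCall-EEBLimit} by evaluating the homogeneous IPDE satisfied by $\tilde{V}^A$ in the continuation region along the free boundary and then letting $t\to T^-$, following the strategy of \citet{ChiarellaZiogas-2009}. Throughout write $g(t,\tilde{S}) = e^{-q_1 t}\tilde{S} - e^{-q_2 t}$ for the discounted intrinsic value in the in-the-money region and $A(t,v) = B(t,v)e^{(q_1-q_2)t}$ for the boundary. Since $\tilde{V}^A$ satisfies $0 = \partial_t \tilde{V}^A + \calL_{\tilde{s},v}[\tilde{V}^A]$ on $\calC$, I would let $\tilde{S}\to A(t,v)^-$ inside this equation and then send $t\to T^-$, substituting the limiting boundary behaviour of the derivatives of $\tilde{V}^A$.

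First I would dispose of the local terms. By value-matching \eqref{eqn-PutCall-ValueMatchingCondition-Vtilde} and smooth-pasting \eqref{eqn-PutCall-SmoothPastingCondition-Vtilde}, at $\tilde{S}=A(t,v)$ we have $\tilde{V}^A = g$, $\partial_{\tilde{s}}\tilde{V}^A = e^{-q_1 t}$, $\partial_v\tilde{V}^A = 0$, and $\partial_t\tilde{V}^A = -q_1 e^{-q_1 t}A + q_2 e^{-q_2 t}$, so the first-order (drift) contributions collapse to $-q_1 e^{-q_1 t}A + q_2 e^{-q_2 t} - A(\tilde{\lambda}_1\tilde{\kappa}_1+\tilde{\lambda}_2\tilde{\kappa}_2^-)e^{-q_1 t}$. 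The three second-order terms $\tfrac12\sigma^2 v\tilde{S}^2\partial_{\tilde{s}\tilde{s}}\tilde{V}^A$, $\tfrac12\omega^2 v\partial_{vv}\tilde{V}^A$ and $\omega(\sigma_1\rho_1-\sigma_2\rho_2)v\tilde{S}\partial_{\tilde{s}v}\tilde{V}^A$ are expected to drop out in the limit: when $B(T^-,v)>1$ the boundary stays strictly above the strike $e^{(q_1-q_2)t}$, so in a neighbourhood of $A$ the terminal payoff is the linear function $g$, and $\tilde{V}^A(t,\cdot)$ converges to it with vanishing second derivatives as $t\to T^-$.

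Next I would treat the two integral (jump) terms, which are the crux. At the base point $\tilde{S}=A$ value-matching gives $\tilde{V}^A(t,A,v)=g(t,A)$, while the post-jump values $\tilde{V}^A(t,Ae^{Y_1},v)$ and $\tilde{V}^A(t,Ae^{-Y_2},v)$ converge, as $t\to T^-$, to $\max\{g(T,Ae^{Y_1}),0\}$ and $\max\{g(T,Ae^{-Y_2}),0\}$ respectively, by continuity of the price to its terminal payoff; dominated convergence then lets me pass the limit under the integrals. Since $Ae^{y_1}$ is in the money precisely when $y_1 > -\ln B(T^-,v)$ and $Ae^{-y_2}$ when $y_2 < \ln B(T^-,v)$, the two expectations split into explicit integrals over these half-lines. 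Writing $\tilde{\kappa}_1 = \int e^{y}G_1(y)\dif y - 1$ and $\tilde{\kappa}_2^- = \int e^{-y}G_2(y)\dif y - 1$ and combining them with the $-A(\tilde{\lambda}_1\tilde{\kappa}_1+\tilde{\lambda}_2\tilde{\kappa}_2^-)e^{-q_1 t}$ drift term, the full-line integrals cancel against the corresponding half-line pieces, leaving only the tail integrals $\int_{-\infty}^{-\ln B}(\cdot)\,G_1$ and $\int_{\ln B}^{\infty}(\cdot)\,G_2$ that appear in \eqref{eqn-PutCall-EEBLimit}.

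Collecting everything, the whole expression factors as $e^{-q_2 t}\,\mathrm{Num}(B) - e^{-q_1 t}A\,\mathrm{Den}(B) = 0$, where $\mathrm{Num}$ and $\mathrm{Den}$ are exactly the numerator and denominator of the fraction in \eqref{eqn-PutCall-EEBLimit}. Substituting $A = B e^{(q_1-q_2)t}$ cancels the exponentials and yields the fixed-point identity $B(T^-,v) = \mathrm{Num}(B(T^-,v))/\mathrm{Den}(B(T^-,v))$. Finally, because the boundary always satisfies $B(t,v)\geq 1$, whenever this fixed-point value would fall below $1$ the constraint binds (the flow $F$ is then negative at the money, so immediate exercise is optimal), and imposing $B(T^-,v)\geq 1$ converts the identity into the stated $\max\{1,\cdot\}$ form. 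I expect the genuine difficulty to lie in the second step --- rigorously justifying that the second-order terms vanish and that the limit passes through the nonlocal integral operators --- rather than in the algebra of the third and fourth steps; this is precisely where the detailed estimates of \citet{GarcesCheang-2020} would be invoked.
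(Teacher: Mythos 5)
Your overall strategy is recognizably the same Chiarella--Ziogas idea the paper uses, and your third and fourth steps are exactly the paper's algebra: splitting the post-jump expectations at $y_1=-\ln B$ and $y_2=\ln B$, cancelling the $\tilde{\lambda}_1\tilde{\kappa}_1+\tilde{\lambda}_2\tilde{\kappa}_2^-$ drift against the in-the-money halves of those integrals, and rearranging $e^{-q_2 T}\,\mathrm{Num}(B)-e^{-q_1T}A\,\mathrm{Den}(B)=0$ into the fixed-point form before imposing $B\geq 1$. The difference is which side of the free boundary you evaluate on, and that is where your proof has a genuine gap. The paper works from the \emph{stopping-region} side: it takes the inhomogeneous IPDE of Proposition \ref{prop-PutCall-InhomogeneousTerm}, whose forcing term $\Xi$ in \eqref{eqn-PutCall-InhomogeneousTerm} is computed from the exact linear form $\tilde{V}^A=e^{-q_1t}\tilde{S}-e^{-q_2t}$ valid in $\calS$, so that $\partial^2\tilde{V}^A/\partial\tilde{s}^2$, $\partial^2\tilde{V}^A/\partial v^2$ and the cross term are \emph{identically} zero and never appear; the proof then just sets $\Xi=0$ at $t=T$, $\tilde{S}=B(T^-,v)e^{(q_1-q_2)T}$ and uses the terminal payoff in the integrands. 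You instead evaluate the homogeneous IPDE from the \emph{continuation} side, where the second-order terms are not zero and must be argued to vanish as $t\to T^-$.

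That vanishing claim is not a technicality that ``detailed estimates'' can be expected to patch independently of the result: for $t<T$ the gamma term at the boundary is precisely what balances the equation, so its limit being zero is \emph{equivalent} to the identity you are trying to prove. The Black--Scholes American put makes this concrete: at the boundary $b(t)$, from the continuation side, $\tfrac12\sigma^2 b^2\,\partial^2V/\partial S^2=rK-qb(t)$, which tends to $0$ only because $b(T^-)=rK/q$ in the non-binding case, and which tends to the nonzero value $(r-q)K\big/\bigl(\tfrac12\sigma^2K^2\bigr)$ when the constraint binds ($b(T^-)=K$, $r>q$). The analogous failure occurs here whenever the fixed point $x^*$ of \eqref{eqn-PutCall-EEBLimit} falls below $1$. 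Your conditional phrasing (``when $B(T^-,v)>1$'') restricts to the non-binding case, but even there, pointwise convergence of $\tilde{V}^A(t,\cdot)$ to a linear payoff near $A$ does not imply convergence of second derivatives at the moving edge point $A(t,v)$ of the continuation region, so as written the argument is circular. The repair is simply to do what the paper does: compute the residual $\partial_t\tilde{V}^A+\calL_{\tilde{s},v}[\tilde{V}^A]$ in the stopping region (this is $-\Xi$, with value-matching and the smooth-pasting conditions \eqref{eqn-PutCall-SmoothPastingCondition-Vtilde} guaranteeing the $C^1$ regularity that underlies Proposition \ref{prop-PutCall-InhomogeneousTerm}), set it to zero at the maturity limit of the boundary, and then your cancellation algebra and the $\max\{1,\cdot\}$ step go through verbatim with no statement about second derivatives needed at all.
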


\begin{proof}
See Appendix \ref{app-Proof-EEBLimit}
\end{proof}

Equation \eqref{eqn-PutCall-EEBLimit} must be solved implicitly for $B(T^-,v)$, which can be done using standard root-finding techniques. From our analysis, we find that the limit is dependent on the asset dividend yields $q_1$ and $q_2$, the jump intensities $\tilde{\lambda}_1$ and $\tilde{\lambda}_2$, and the jump size densities $G_1$ and $G_2$. These dependencies highlight the influence of jumps in asset prices on the limiting behavior of the early exercise boundary.\footnote{This is in contrast to the proposition of \citet{CarrHirsa-2003}, in their analysis of the one-asset American put option where the log-price is driven by a L\'evy process, that the limit of the early exercise boundary is only dependent on the dividend yield and the risk-free rate.} We note further that equation \eqref{eqn-PutCall-EEBLimit} does not depend on the instantaneous variance $v$ since the option payoff is independent of $v$. However, equation \eqref{eqn-PutCall-EEBLimit} is true for all $v\in(0,\infty)$. In fact, $B(T^-,v)$ is constant with respect to $v$.

In the absence of jumps (i.e. when $\tilde{\lambda}_1=\tilde{\lambda}_2=0$), the limit reduces to $\max\{1,q_2/q_1\}$. This is consistent with the result of \citet{BroadieDetemple-1997} for American exchange options in the pure diffusion case. In the pure diffusion case, $B(T^-,v) = q_2/q_1 > 1$ if $q_2>q_1$, implying that the early exercise boundary many not be continuous in $t$ at maturity. When jumps are present, the analysis of continuity becomes more complicated, as shown below.

First, we present some conditions under which equation \eqref{eqn-PutCall-EEBLimit} has a solution.

\begin{prop}
\label{prop-PutCall-EEBLimit-Existence}
Suppose $q_1,q_2\geq 0$ and $\tilde{\lambda}_1,\tilde{\lambda}_2>0$ are given and let $G_1$ and $G_2$ be continious probability density functions. The equation
\begin{equation}
\label{eqn-PutCall-EEBLimit-ImplicitComponent}
x = \frac{q_2+\tilde{\lambda}_1\int_{-\infty}^{-\ln x} G_1(y)\dif y + \tilde{\lambda}_2\int_{\ln x}^\infty G_2(y)\dif y}{q_1 +\tilde{\lambda}_1\int_{-\infty}^{-\ln x} e^y G_1(y)\dif y + \tilde{\lambda}_2\int_{\ln x}^\infty e^{-y} G_2(y)\dif y}
\end{equation}
has a unique solution $x^*\in(0,\infty)$ if $q_1>0$. Furthermore, $x^*>1$ if and only if $$q_2-q_1+\tilde{\lambda}_1\int_{-\infty}^0 (1-e^y)G_1(y)\dif y+\tilde{\lambda}_2\int_0^\infty (1-e^{-y})G_2(y)\dif y > 0,$$ and the limit of the early exercise boundary is $B(T^-,v)=\max\{1,x^*\}$.
\end{prop}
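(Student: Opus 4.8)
The plan is to recast \eqref{eqn-PutCall-EEBLimit-ImplicitComponent} as a root-finding problem for a single scalar function whose monotonicity can be read off directly. Writing $N(x)$ and $D(x)$ for the numerator and denominator on the right-hand side of \eqref{eqn-PutCall-EEBLimit-ImplicitComponent}, I would first observe that $D(x)\geq q_1>0$ for every $x\in(0,\infty)$, since the two integrals defining $D$ are non-negative. Consequently the fixed-point equation $x=N(x)/D(x)$ is equivalent to $F(x)=0$, where $F(x)\equiv x\,D(x)-N(x)$, and it suffices to study the zeros of $F$.

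The key step is to differentiate $F$. Using the continuity of $G_1$ and $G_2$ and the Leibniz rule, differentiating the four integrals (whose limits are $\pm\ln x$) produces boundary terms weighted by $G_1(-\ln x)$ and $G_2(\ln x)$. I expect the decisive simplification to be that the boundary contributions coming from $x\,D'(x)$ \emph{exactly cancel} those coming from $-N'(x)$: the factors $e^{\mp\ln x}=x^{\mp 1}$ appearing in the $D$-integrals conspire with the explicit multiplier $x$ so that $F'(x)=D(x)$. This cancellation is the part I regard as the main obstacle, in the sense that the whole argument hinges on it; once it is in hand, $F'(x)=D(x)\geq q_1>0$ shows that $F$ is strictly increasing on $(0,\infty)$, which immediately yields uniqueness of any root.

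For existence I would examine the two endpoints. As $x\to 0^+$ the limits $-\ln x\to+\infty$ and $\ln x\to-\infty$ drive both integrals in $N$ to their full mass and give $x\,D(x)\to 0$ (using the finiteness of $\E_{\hat{\Q}}[e^{Y_1}]$ and $\E_{\hat{\Q}}[e^{-Y_2}]$ guaranteed by the standing assumptions), so that $F(0^+)=-(q_2+\tilde{\lambda}_1+\tilde{\lambda}_2)<0$; as $x\to\infty$ both integrals vanish and $F(x)\sim q_1 x\to+\infty$ because $q_1>0$. Continuity of $F$ together with strict monotonicity then delivers, via the intermediate value theorem, a unique $x^*\in(0,\infty)$.

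The sign characterization follows from the same monotonicity: since $F$ is strictly increasing with $F(x^*)=0$, we have $x^*>1$ if and only if $F(1)<0$. Evaluating at $x=1$ (so that $\ln 1=0$) gives $F(1)=q_1-q_2+\tilde{\lambda}_1\int_{-\infty}^0(e^y-1)G_1(y)\dif y+\tilde{\lambda}_2\int_0^\infty(e^{-y}-1)G_2(y)\dif y$, and rearranging the inequality $F(1)<0$ reproduces verbatim the stated condition. Finally, to identify the early exercise limit I would substitute $\max\{1,x^*\}$ into the defining relation $B=\max\{1,N(B)/D(B)\}$ from \eqref{eqn-PutCall-EEBLimit}: when $x^*>1$ one has $F(1)<0$, hence $N(1)/D(1)>1$, so the candidate $x^*$ (which satisfies $N(x^*)/D(x^*)=x^*>1$) solves the max-equation while no solution equal to $1$ can exist; when $x^*\leq 1$ monotonicity gives $N(1)/D(1)\leq 1$, so $B=1=\max\{1,x^*\}$ is the solution. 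In either case $B(T^-,v)=\max\{1,x^*\}$, as claimed.
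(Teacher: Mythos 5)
Your proof is correct and follows essentially the same route as the paper's: the paper works with $f(x) = N(x) - x\,D(x) = -F(x)$, shows it is strictly decreasing on $(0,\infty)$ when $q_1>0$, computes the limits at $0^+$ and $\infty$, and evaluates at $x=1$ to get the threshold condition, exactly mirroring your monotonicity-plus-intermediate-value argument. Your explicit identification of the cancellation $F'(x) = D(x) \geq q_1 > 0$ is correct and in fact makes precise the derivative computation that the paper only asserts.
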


\begin{proof}
See Appendix \ref{app-Proof-EEBLimit-Existence}
\end{proof}

An immediate result from Proposition \ref{prop-PutCall-EEBLimit-Existence} is a condition for the continuity of $B(t,v)$ at maturity.

\begin{prop}
Suppose $q_1>0$. For any fixed $v\in(0,\infty)$, $B(t,v)$ is continuous at maturity $t=T$ if
\begin{equation}
\label{eqn-PutCall-EEBLimit-ContinuityCondition}
q_1-q_2\geq \tilde{\lambda}_1\int_{-\infty}^0 (1-e^y)G_1(y)\dif y+\tilde{\lambda}_2\int_0^\infty (1-e^{-y})G_2(y)\dif y.
\end{equation}
\end{prop}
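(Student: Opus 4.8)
The plan is to reduce the continuity of $B(t,v)$ at $t=T$ to the sign characterization already established in Proposition \ref{prop-PutCall-EEBLimit-Existence}, so that the stated inequality \eqref{eqn-PutCall-EEBLimit-ContinuityCondition} emerges merely as the logical negation of the condition ``$x^*>1$'' appearing there. The result is then \emph{immediate} in the sense advertised, requiring no new estimates.

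First I would pin down the terminal value $B(T,v)$ of the boundary. At maturity the option is exercised exactly on the set where the terminal payoff is strictly positive, i.e.\ where $\tilde{S}(T) > e^{(q_1-q_2)T}$; comparing this with the parametrization $\tilde{S}\geq B(t,v)e^{(q_1-q_2)t}$ of the stopping region in \eqref{eqn-PutCall-StoppingContinuationRegions2} forces $B(T,v)=1$. Consequently, continuity of $B(\cdot,v)$ at $t=T$ is equivalent to the single requirement $B(T^-,v)=1$.

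Next I would invoke Proposition \ref{prop-PutCall-EEBLimit-Existence}, which (for $q_1>0$) gives $B(T^-,v)=\max\{1,x^*\}$ with $x^*$ the unique root of \eqref{eqn-PutCall-EEBLimit-ImplicitComponent}. Since $\max\{1,x^*\}=1$ precisely when $x^*\leq 1$, the desired limit $B(T^-,v)=1$ holds if and only if $x^*\leq 1$. The same proposition characterizes $x^*>1$ by the strict inequality $q_2-q_1+\tilde{\lambda}_1\int_{-\infty}^0(1-e^y)G_1(y)\,\dif y+\tilde{\lambda}_2\int_0^\infty(1-e^{-y})G_2(y)\,\dif y>0$. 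Taking the contrapositive, $x^*\leq 1$ is equivalent to the reverse non-strict inequality, which after moving $q_1$ and $q_2$ to the appropriate sides is exactly condition \eqref{eqn-PutCall-EEBLimit-ContinuityCondition}. Thus, whenever \eqref{eqn-PutCall-EEBLimit-ContinuityCondition} holds we obtain $x^*\leq 1$, hence $B(T^-,v)=1=B(T,v)$, establishing continuity.

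The only genuinely nonroutine step is the identification $B(T,v)=1$: one must argue that the terminal exercise set coincides with the region of positive payoff and that the boundary parametrization persists in the limit $t\to T$. Everything after that is a direct application of the already-proven sign dichotomy in Proposition \ref{prop-PutCall-EEBLimit-Existence}, so the result is indeed immediate. I would also note that the argument in fact yields the converse implication as well, so \eqref{eqn-PutCall-EEBLimit-ContinuityCondition} is both necessary and sufficient for continuity of $B(t,v)$ at maturity, consistent with the earlier observation that in the jump-free case the boundary fails to be continuous precisely when $q_2>q_1$.
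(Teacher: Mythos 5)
Your proposal is correct and follows essentially the same route as the paper: both reduce continuity at maturity to showing $B(T^-,v)=1$ via the sign dichotomy $f(1)\leq 0 \Leftrightarrow x^*\leq 1$ established in Proposition \ref{prop-PutCall-EEBLimit-Existence}, combined with the terminal identification $B(T,v)=1$. Your added remarks (the explicit justification of $B(T,v)=1$ and the observation that the condition is also necessary) are consistent with, and slightly sharpen, the paper's argument.
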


\begin{proof}
Suppose $q_1>0$ and condition \eqref{eqn-PutCall-EEBLimit-ContinuityCondition} holds. Then from the discussion at the end of the proof of Proposition \ref{prop-PutCall-EEBLimit-Existence}, the solution $x^*$ to equation \eqref{eqn-PutCall-EEBLimit-ImplicitComponent} lies in the interval $(0,1]$. It follows that $B(T^-,v)=1$, which is also the value of $B(T,v)$. Thus, $B(t,v)$ is continuous at the option maturity.
\end{proof}

In other words, if the dividend yield of the first asset exceeds that of the second asset by the amount given by the right-hand side of \eqref{eqn-PutCall-EEBLimit-ContinuityCondition}, then the early exercise boundary is continuous at maturity. 

We briefly discuss the behavior of $B(T^-,v)$ with respect to changes in $q_1$. Note that $\partial f/\partial q_1 = -x<0$, so when $q_1$ decreases, $f(x)$ increases. In particular, for a given $q_1>0$, there exists $x^*\in(0,\infty)$ such that $f(x^*)=0$. If $q_1$ decreases, then $f(x^*)$ increases away from zero, thereby moving the unique zero of $f$ to some other number $x'\in(0,\infty)$ such that $x'>x^*$. In other words, the solution $x^*$ of equation \eqref{eqn-PutCall-EEBLimit-ImplicitComponent} increases without bound, and consequently $B(T^-,v)\to\infty$, as $q_1\to 0^+$. \emph{Thus, when the first asset bears no dividend yield, it is not optimal to exercise the American exchange option early or at least immediately prior to the option maturity.}

\begin{figure}
\centering
\includegraphics[width = 0.5\textwidth]{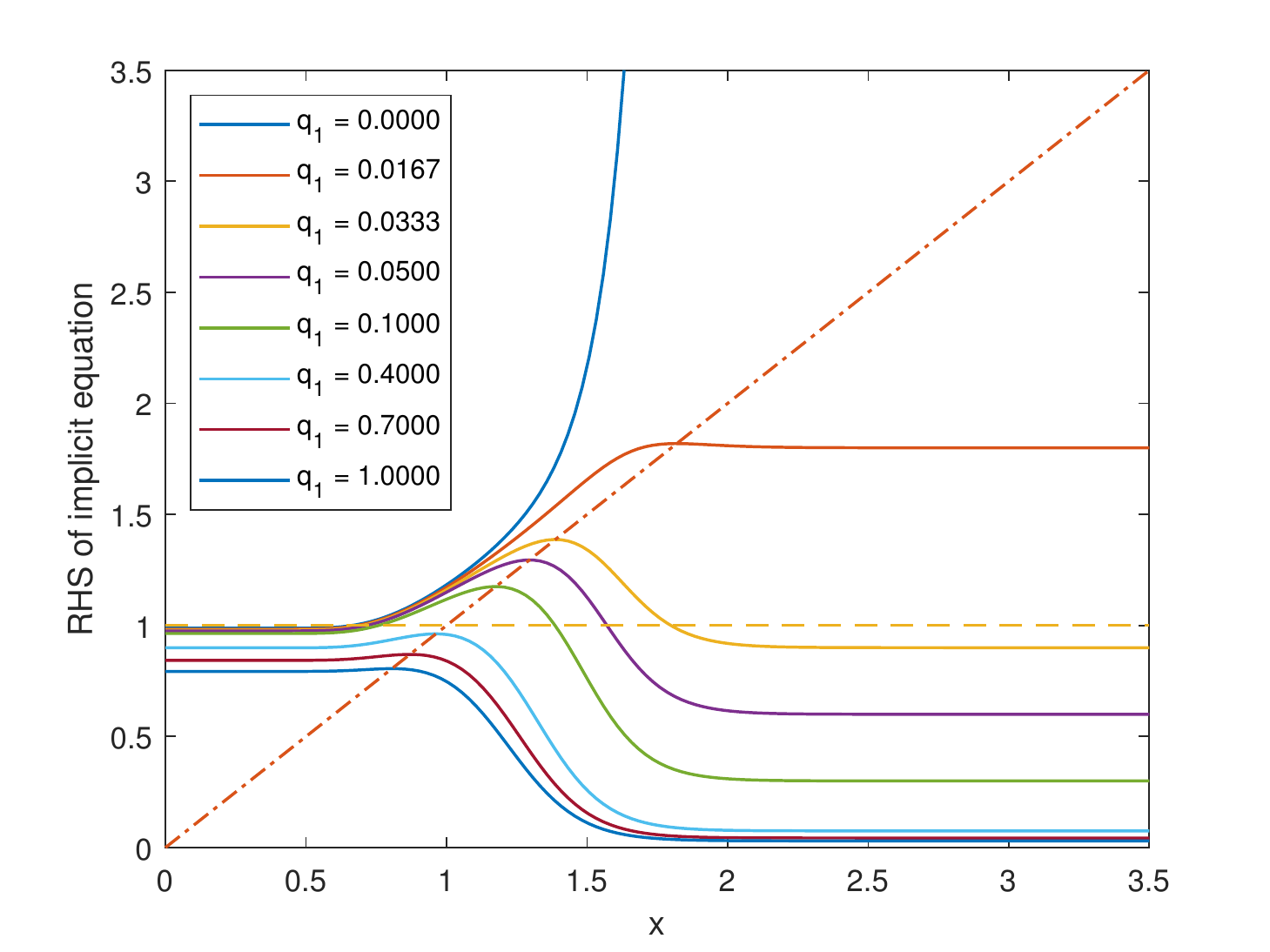}
\caption{Behavior of equation \eqref{eqn-PutCall-EEBLimit-ImplicitComponent} with respect to $q_1$ when jumps are normally distributed. Solid lines represent the right-hand side of \eqref{eqn-PutCall-EEBLimit-ImplicitComponent} and their intersection with the dash-dotted $45^\circ$ line represents the solution $x^*$ of \eqref{eqn-PutCall-EEBLimit-ImplicitComponent}. The horizontal dashed line indicates the position $x^*$ relative to unity.}
\label{fig-EEBLimit}
\end{figure} 

We give a numerical illustration of the aforementioned properties in the case of when the jump size random variables $Y_1$ and $Y_2$ have normal distributions $N(\alpha_{j_1},\beta_{j_1}^2)$ and $N(\alpha_{j_2},\beta_{j_2}^2)$, respectively, under $\hat{\Q}$. The integrals in equation \eqref{eqn-PutCall-EEBLimit-ImplicitComponent} can then be expressed in terms of the standard normal distribution. 
Figure \ref{fig-EEBLimit} shows the behavior of the right-hand side of equation \eqref{eqn-PutCall-EEBLimit-ImplicitComponent}, which we denote by $R(x)$, for increasing values of $q_1$ given this specific distribution of jump sizes and fixed parameters $q_2 = 0.03$, $\tilde{\lambda}_1 = \tilde{\lambda}_2 = 2$, $\alpha_{j_1}=\alpha_{j_2} = 0.02$, and $\beta_{j_1}=\beta_{j_2} = 0.2$. We observe that, except for when $q_1 = 0$, the behavior of $R(x)$ is generally the same for any value $q_1$. That is, it increases up until some value of $x$, then eventually decreases to $q_2/q_1$. The graph of $R(x)$ eventually crosses the $45^\circ$ line, indicating the solution $x^*$ of equation \eqref{eqn-PutCall-EEBLimit-ImplicitComponent}. However, if $q_1 = 0$, then the graph of $R(x)$ never crosses the line, implying that there is no solution exists for \eqref{eqn-PutCall-EEBLimit-ImplicitComponent} and that the equation will only be satisfied by taking the limit as $x\to 0$, giving an infinite early exercise boundary $B(T^-,v)$. We also note that as $q_1$ increases farther from $q_2$, $x^*$ eventually falls below unity. One can also formulate the difference $q_1-q_2$ in \eqref{eqn-PutCall-EEBLimit-ContinuityCondition} in terms of the standard normal distribution 

\begin{figure}
\centering
\includegraphics[width = 0.5\textwidth]{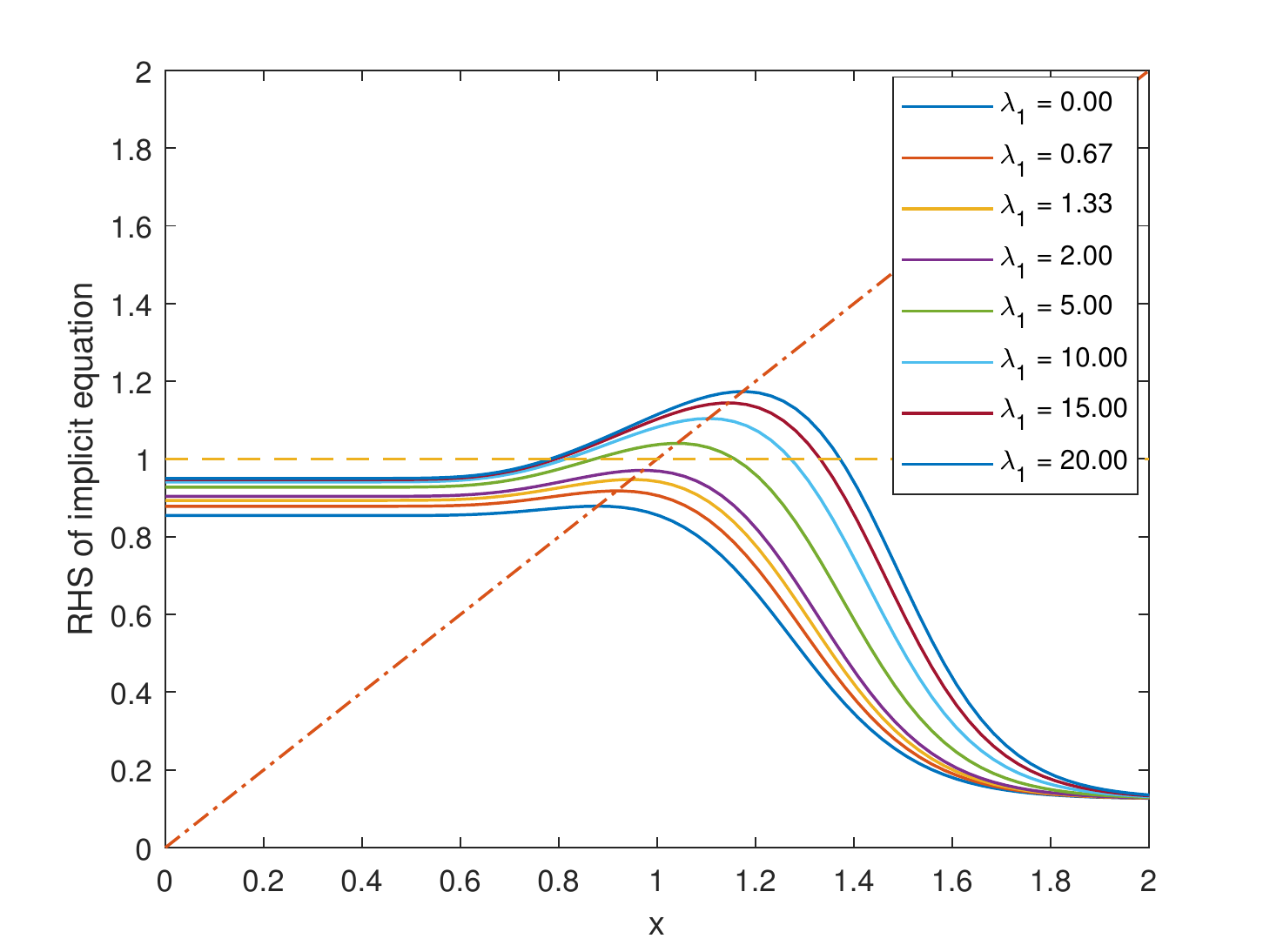}
\caption{Behavior of equation \eqref{eqn-PutCall-EEBLimit-ImplicitComponent} with respect to $\tilde{\lambda}_1$ when jumps are normally distributed. The dashed and dash-dotted lines function in a way similar to that for Figure \ref{fig-EEBLimit}.}
\label{fig-EEBLimit-Lambda}
\end{figure} 

A comparative static analysis of $B(T^-,v)$ with respect to the jump intensities $\tilde{\lambda}_1$ and $\tilde{\lambda}_2$ can also be conducted. We note that
\begin{align*}
\pder[f(x)]{\tilde{\lambda}_1} & = \int_{-\infty}^{-\ln x}G_1(y) \dif y - x\int_{-\infty}^{-\ln x}e^y G_1(y)\dif y \geq 0\\
\pder[f(x)]{\tilde{\lambda}_2} & = \int_{\ln x}^\infty G_2(y)\dif y - x\int_{\ln x}^\infty\ e^{-y}G_2(y)\dif y \geq 0,
\end{align*}
as was established in the proof of Proposition \ref{prop-PutCall-EEBLimit-Existence}. With the same arguments as above, these inequalities imply that $B(T^-,v)$ are non-decreasing with $\tilde{\lambda}_1$ and $\tilde{\lambda}_2$. This is graphically illustrated by Figure \ref{fig-EEBLimit-Lambda}, where we once again plot the right-hand side of equation \eqref{eqn-PutCall-EEBLimit-ImplicitComponent} for various values of $\tilde{\lambda}_1$. In this example, we set $q_1 = 0.40$ and $q_2 = 0.05$ (chosen such that $x^*$ below and above unity are illustrated), with the value of all other jump parameters chosen similarly to those for Figure \ref{fig-EEBLimit}. A similar illustration can also be generated for a comparative static analysis with respect to $\tilde{\lambda}_2$.

The response of the early exercise boundary limit at maturity with respect to the jump intensities is indeed expected. As discussed in the previous section, there is a nonzero probability that asset prices may jump back into the continuation region immediately after exercise, resulting to losses for the option holder. The option holder is thus more conservative in opting for early exercise and is willing to exercise early only when the asset yield ratios are higher compared to the no-jump case. This is so that even if the asset yield jumps downward, it is less likely to move all the way back in to the continuation region thereby reducing risk on the investor's end.

\section{Method of Lines Implementation}
\label{sec-MOL}

Here, we discuss the MOL algorithm to numerically solve the IPDEs for the discounted European and American exchange option prices and the joint transition density function (tdf) of the asset yield ratio and the variance process under $\hat{\Q}$. For convenience, the terminal value problem is transformed into an initial value problem involving the time to maturity $\tau=T-t$. We also denote the discounted exchange option prices by $V(\tau,s,v)$, where $s=\tilde{S}(T-\tau)$ and $v=v(T-\tau)$. We first discuss the MOL for the European exchange option (which is also applicable to the joint tdf) and present the required adjustments for the approximation of the unknown exercise boundary for the American exchange option. We conclude this section with a discussion of alternative boundary conditions at the far boundary of $v$.

It has been observed that MOL performs as competitively, if not more efficiently, compared to other methods such as componentwise splitting methods, numerical integration with the joint tdf, Monte Carlo techniques, and full finite difference schemes \citep[among others]{Chiarella-2009, ChiarellaZiveyi-2013, Kang-2014}. This is mainly due to the ability of the MOL to calculate the option delta and gamma to any desired accuracy with no considerable additional computational effort. To illustrate the accuracy and efficiency of the MOL, we compare the MOL American exchange option prices to prices generated by the \citet{LongstaffSchwartz-2001} least-squares Monte Carlo (LSMC) algorithm, a simple and popular simulation-based method for valuing early exercise options.\footnote{The LSMC algorithm only simulates the optimal exercise strategy and is unable to estimate the early exercise boundary. For a simulation-based method for approximating the early exercise frontier, see \citet{Ibanez-2004}. More recently, \citet{Bayer-2020} proposed a new simulation-based method that estimates exercise rates of randomized exercise strategies, an optimization problem which they show is equivalent to the original optimal stopping formulation.}

\subsection{MOL Approximation of the IPDE and the Riccati Transform Method}
\label{sec-MOL-Algorithm}

With the new notation introduced in this section, we find that the discounted European exchange option price satisfies
\begin{align}
\begin{split}
\label{eqn-MOLEu-IPDE}
0	& = \frac{1}{2}\sigma^2 v s^2\pder[^2V]{s^2}+\frac{1}{2}\omega^2 v\pder[^2V]{v^2} + \omega(\sigma_1\rho_1-\sigma_2\rho_2)vs\pder[^2V]{s\partial v}\\
	& \qquad - \left(\tilde{\lambda}_1\tilde{\kappa}_1+\tilde{\lambda}_2\tilde{\kappa}_2^-\right)s\pder[V]{s} + \left[\xi\eta-(\xi+\Lambda)v\right]\pder[V]{v} - (\tilde{\lambda}_1+\tilde{\lambda}_2)V\\
	& \qquad - \pder[V]{\tau} + \tilde{\lambda}_1\int_{\mathbb{R}} V(\tau,se^y,v)G_1(y)\dif y + \tilde{\lambda}_2\int_{\mathbb{R}}V(\tau,se^{-y},v)G_2(y)\dif y.
\end{split}
\end{align}
This equation is to be solved for $s\in(0,\infty)$, $v\in(0,\infty)$, and $\tau\in[0,T]$ subject to the initial condition $$V(0,s,v) = e^{-q_1 T}\left(s-e^{(q_1-q_2)T}\right)^+.$$ We also impose the following (asymptotic) boundary conditions on \eqref{eqn-MOLEu-IPDE}: 
\begin{equation}
\label{eqn-MOLEu-BC}
\lim_{s\to 0^+}V(\tau,s,v) = 0, \quad \lim_{s\to\infty}\pder[^2V]{s^2} = 0, \quad \lim_{v\to\infty}\pder[V]{v} = 0.
\end{equation}
The first boundary condition implies that the option becomes worthless for very small values of $s$ (i.e. when the price of the first asset dominates that of the second asset). The second boundary condition implies that the option delta becomes insensitive to changes in the asset yield ratio when it is sufficiently large. The third boundary condition means that the option price is insensitive to changes in the instantaneous variance for sufficiently large values of $v$.

For computational purposes, the infinite domains for $s$ and $v$ are truncated to $[0,s_J]$ and $[0,v_M]$, respectively, with some suitably chosen $s_J$ and $v_M$. We consider the partition $0=\tau_0<\tau_1<\dots<\tau_N = T$ of $[0,T]$ with uniform width $\Delta\tau$ and the partition $0=v_0<v_1<\dots<v_M$ of $[0,v_M]$, also with uniform size $\Delta v$. Denote by $V_{n,m}(s)$ the approximate solution of \eqref{eqn-MOLEu-IPDE} at the time line $\tau=\tau_n$ and variance line $v=v_m$, and let $\scrV_{n,m}(s)=V_{n,m}'(s)$ be the approximation of the option delta.

We then use finite difference quotients to discretize equation \eqref{eqn-MOLEu-IPDE} in all variables but $s$. For the time derivative, a backward difference approximation is used following \citet{vanderHoek-1997},
\begin{equation*}
\pder[V]{\tau} \approx \begin{cases}
\frac{1}{\Delta\tau}(V_{n,m}(s)-V_{n-1,m}(s)) & \text{if $n=1,2$}\\
\frac{3}{2\Delta\tau}(V_{n,m}(s)-V_{n-1,m}(s))-\frac{1}{2\Delta\tau}(V_{n-1,m}(s)-V_{n-2,m}(s)) & \text{if $n\geq 3$}.
\end{cases}
\end{equation*}
The approximation is only first-order accurate for the first two time steps, but for all subsequent time steps, a second-order approximation is used. Using such a scheme improves the accuracy of the method of lines algorithm. The second-order partial derivatives in $v$ are approximated using central difference quotients
\begin{align*}
\pder[^2V]{v^2} & \approx \frac{V_{n,m+1}(s)-2V_{n,m}(s)+V_{n,m-1}(s)}{(\Delta v)^2}\\
\pder[^2V]{s\partial v} & \approx \frac{\scrV_{n,m+1}(s)-\scrV_{n,m-1}(s)}{2\Delta v}.
\end{align*}
We use an upwinding difference approximation for $\partial V/\partial v$,
\begin{align*}
\left[\xi\eta-(\xi+\Lambda)v\right]\pder[V]{v}
	& \approx \max\left\{\xi\eta-(\xi+\Lambda)v_m,0\right\}\left[\frac{V_{n,m+1}(s)-V_{n,m}(s)}{\Delta v}\right]\\
	& \qquad + \min\left\{\xi\eta-(\xi+\Lambda)v_m,0\right\}\left[\frac{V_{n,m}(s)-V_{n,m-1}(s)}{\Delta v}\right].
\end{align*}
This approximation is only first-order accurate, but it helps keep the finite difference approximation in $v$ stable as the coefficients of the second-order partial derivatives vanish as $v\to 0^+$ and it improves the performance of the iterative method used in the method of lines algorithm \citep{Chiarella-2009, Meyer-2015}.

At the last variance line $v=v_M$, we invoke the asymptotic boundary condition \eqref{eqn-MOLEu-BC} in $v$. The boundary condition can be approximated by 
\begin{equation}
\label{eqn-MOLEu-vBC}
\pder[V(\tau_n,s,v_M)]{v} = 0,
\end{equation}
and a forward difference approximation of this implies that $V_{n,M+1}(s) = V_{n,m}(s)$. Differentiating with respect to $s$, we also find that $\scrV_{n,M+1}(s) = \scrV_{n,M}(s)$.

The integral terms must also be approximated at each time and variance line. Assuming that $Y_1\sim N(\alpha_{j_1},\beta_{j_1}^2)$ and $Y_2\sim N(\alpha_{j_2},\beta_{j_2}^2)$ under $\hat{\Q}$, the integrals, which we denote by $I_1(\tau_n,x,v_m)$ and $I_2(\tau_n,x,v_m)$, can be approximated using Gauss-Hermite quadrature,
\begin{align*}
I_1(\tau_n,s,v_m) & = -\tilde{\lambda}_1\int_{\mathbb{R}}V(\tau_n,se^{y},v_m)G_1(y)\dif y \approx -\frac{\tilde{\lambda}_1}{\sqrt{\pi}}\sum_{l=1}^L \varrho_l^H V_{n,m}(se^{\sqrt{2}\beta_{j_1}z_l^H+\alpha_{j_1}})\\
I_2(\tau_n,s,v_m) & = -\tilde{\lambda}_2\int_{\mathbb{R}}V(\tau_n,se^{-y},v_m)G_2(y)\dif y \approx -\frac{\tilde{\lambda}_2}{\sqrt{\pi}}\sum_{l=1}^L \varrho_l^H V_{n,m}(se^{-\sqrt{2}\beta_{j_2}z_l^H-\alpha_{j_2}}).
\end{align*}
where $\varrho_l^H$ and $z_l^H$ are the weights and abscissas of the Gauss-Hermite quadrature scheme with $L$ integration points. The parameters $\tilde{\kappa}_1$ and $\tilde{\kappa}_2^-$ are also approximated using Gauss-Hermite quadrature.

The MOL approximation of equation \eqref{eqn-MOLEu-IPDE} at $\tau=\tau_n$ and $v=v_m$ is therefore
\begin{align}
\begin{split}
\label{eqn-MOLEu-IPDE-Approx2}
& a(s,v_m) V_{n,m}''(s) + b(s,v_m) V_{n,m}'(s) - c(\tau_n,s,v_m) V_{n,m}(s)\\
& \qquad = F(\tau_n,s,v_m)+I_1(\tau_n,s,v_m)+I_2(\tau_n,s,v_m),
\end{split}
\end{align}
where the coefficients are obtained by comparing \eqref{eqn-MOLEu-IPDE-Approx2} to \eqref{eqn-MOLEu-IPDE} after substituting the finite difference and integral approximations (see Appendix \ref{app-MOL-Coeff}).

As the source term of equation \eqref{eqn-MOLEu-IPDE-Approx2} at each $m=1,\dots,M-1$ depends on $V_{n,m+1}(s)$, $\scrV_{n,m}(s)$, and $\scrV_{n,m+1}(s)$, none of which are available at the current variance line, and involves integrals based on $V_{n,m}(s)$, the sequence of equations \eqref{eqn-MOLEu-IPDE-Approx2} represents a system of coupled second-order integral-ordinary differential equations. To this end, we employ a nested two-level iterative scheme in conjunction with the Riccati transform approach of \citet{vanderHoek-1997}, as was done by \citet{Chiarella-2009}. In the first iteration level, referred to as \emph{integral term iterations}, the integral terms are approximated using the latest available values of the option price. This enables us to treat \eqref{eqn-MOLEu-IPDE-Approx2} as a second-order ODE which can then be solved using the Riccati transform method in the second iteration level, referred to as the \emph{variance line iterations}.

The two-stage iterative process and the Riccati transform solution of equation \eqref{eqn-MOLEu-IPDE-Approx2} is shown in Algorithm \ref{pseudo-MOL-EuExcOp}. The succeeding discussions aim to clarify certain details of the algorithm.

\begin{algorithm}
\scriptsize
\caption{MOL for the discounted European exchange option price}
\label{pseudo-MOL-EuExcOp}
\SetAlgoLined
\KwResult{Option price $V$ and delta $\scrV$}
Inputs: SVJD model parameters, jump distribution parameters, partitions for $[0,T]$, $[0,s_J]$, and $[0,v_M]$\;
Compute Gauss-Hermite parameters $\varrho_l^H$ and $z_l^H$, $l=1,\dots,L$ and compute $\kappa$ using Gauss-Hermite quadrature\;
Initialize $V_{0,m}(s_j) = e^{-q_1 T}(s_j-e^{(q_1-q_2)T})^+$, $\scrV_{0,m}(s_j) = e^{-q_1 T}\text{Heaviside}(s_j-e^{(q_1-q_2)T})$ for all $j,m$\;
\For{$n=1$ \KwTo $N$}{
	Set $V_{n,m}^{k'-1}(s_j) = V_{n-1,m}(s_j)$ and $\scrV_{n,m}^{k'-1}(s_j) = \scrV_{n-1,m}(s_j)$\;
	Initialize $\epsilon_\text{integ} = 1$ and $k'=1$\;
	\tcc{Commence integral term iterations}
	\While{$\epsilon_\text{integ}> 10^{-8}$ \& $k'<k'_{max}$}{
		Compute $I_1(\tau_n,s_j,v_m)$, $I_2(\tau_n,s_j,v_m)$ using $V_{n,m}^{k-1}(s_j)$ for all $j,m$\;
		Set $V_{n,m}^k(s_j) = V_{n,m}^{k'}(s_j)$ and $\scrV_{n,m}^k(s_j) = \scrV_{n,m}^{k'}(s_j)$\;
		Initialize $\epsilon_\text{var} = 1$, $k=1$, and set $k_{max} = 50$\;
		\tcc{Commence variance line iterations}
		\While{$\epsilon_\text{var}>10^{-8}$ \& $k<k_{max}$}{
			\For{$m=0$ \KwTo $M$}{
				\If{$m=1$}{
					\tcc{Quadratic extrapolation formulas}
					Compute $V_{n,0}^k$ and $\scrV_{n,0}^k$ using equations \eqref{eqn-MOL-QuadExtrap}\;
					}
				\ElseIf{$1\leq m\leq M-1$}{
					\tcc{Forward sweep}
					\For{$j=1$ \KwTo $J$}{
						Compute $R(s_j)$ and $w(s_j)$ using the trapezoidal rule on \eqref{eqn-MOLEu-RiccatiSystem}\;
					}
					\tcc{Reverse sweep over $[0,s_J]$}
					Compute $\scrV_{n,m}^k(s_J)$\;
					\For{$j=J$ \KwTo $1$}{
						Compute $\scrV_{n,m}^k(s_j)$ using the trapezoidal rule\;
						Compute $V_{n,m}^k(s_j)$ using the Riccati transform \eqref{eqn-MOLEu-RiccatiTransform}\;
					}
				}
				\Else{ 
					Recalculate the coefficients of \eqref{eqn-MOLEu-RiccatiSystem} with \eqref{eqn-MOLEu-IPDE-Approx2-Coeff-vBC}\tcp{At $v=v_M$}
					Perform the forward sweep and reverse sweep as in when $1\leq m\leq M-1$ to compute $V_{n,M}^k(s_j)$ and $\scrV_{n,M}^k(s_j)$ for all $j$\;
				}
			}
			Update $\epsilon_\text{var} = \max_m\{\max_j\{|V_{n,M}^k(s_j)-V_{n,m}^{k-1}(s_j)|\}$\;
			Update $V_{n,m}^{k-1} = V_{n,m}^{k}$ and $\scrV_{n,m}^{k-1} = \scrV_{n,m}^k$\;
			Update $k=k+1$\;
		}
		Update $V_{n,m}^{k'} = V_{n,m}^k$ and $\scrV_{n,m}^{k'} = \scrV_{n,m}^k$\;
		Update $\epsilon_\text{integ} = \max_m\{\max_j\{|V_{n,M}^{k'}(s_j)-V_{n,m}^{k'-1}(s_j)|\}$\;
		Update $V_{n,m}^{k'-1} = V_{n,m}^{k'}$ and $\scrV_{n,m}^{k'-1} = \scrV_{n,m}^{k'}$\;
		Update $k'=k'+1$
	}
	Update $V_{n,m} = V_{n,m}^{k'}$ and $\scrV_{n,m} = \scrV_{n,m}^{k'}$\;
}
\end{algorithm}

Let $k'$ and $k$ be the iteration counters for the integral term and variance line iterations, respectively, and let $V_{n,m}^k(s)$ and $\scrV_{n,m}^k(s)$ be the $k$th iterates of $V$ and $\scrV$, replacing $k$ with $k'$ when we refer to their counterparts in the integral term iterations. ``Previous'' iterates will be denoted with a superscript $k-1$ or $k'-1$. The Riccati transform method requires that the computational $s$-domain be partitioned with a mesh $0=s_0<s_1<\dots<s_J$. Note that this mesh may not necessarily have a uniform width. In view of boundary conditions \eqref{eqn-MOLEu-BC}, it is known that $V_{n,m}^k(s_0) = 0$ and $\scrV_{n,m}'^k(s_J) = 0$.

In each integral term iteration, $I_1$ and $I_2$ are computed using previous iteration values of $V$. But since $V_{n,m}^{k-1}(s)$ is available only at the mesh points for $s$, the summands required for $I_1$ and $I_2$ are linearly interpolated or extrapolated. The integrals are computed only once per integral term iteration and are not recomputed during the variance line iterations.

We now discuss what transpires in the general $k$th variance line iteration. 

At $m=0$, equation \eqref{eqn-MOLEu-IPDE-Approx2} is not solved directly. Instead, we apply the quadratic extrapolation formulas
\begin{align}
\begin{split}
\label{eqn-MOL-QuadExtrap}
V_{n,0}^k(s_j) & \approx 3V_{n,1}^{k-1}(s_j)-3V_{n,2}^{k-1}(s_j)+V_{n,3}^{k-1}(s_j)\\
\scrV_{n,0}^k(s_j) & \approx 3\scrV_{n,1}^{k-1}(s_j)-3\scrV_{n,2}^{k-1}(s_j)+\scrV_{n,3}^{k-1}(s_j)
\end{split}
\end{align}
to approximate the option price and delta at the first variance line.\footnote{Provided that the Feller condition is satisfied, the quadratic extrapolation combined with the MOL approach produces a consistent approximation of the pricing equation as $v\to 0$ \citep[Appendix]{Chiarella-2009}.} 

For $m=1,\dots,M-1$, equation \eqref{eqn-MOLEu-IPDE-Approx2} is transformed into a system of first-order ODEs
\begin{align}
\begin{split}
\label{eqn-MOLEu-FirstOrderSystem}
V_{n,m}'^k(s) & = \scrV_{n,m}^k(s)\\
\scrV_{n,m}'^k(s) & = C(\tau_n,s,v_m) V_{n,m}^k(s) + D(\tau_n,s,v_m) \scrV_{n,m}^k(s) + g(\tau_n,s,v_m),
\end{split}
\end{align}
where $C=c/a$, $D=-b/a$, and $g = (F+I_1+I_2)/a$. The corresponding Riccati transformation is given by
\begin{equation}
\label{eqn-MOLEu-RiccatiTransform}
V_{n,m}^k(s) = R(s)\scrV_{n,m}^k(s)+w(s),
\end{equation}
where $R$ and $w$ are solutions of the initial value problems
\begin{align}
\begin{split}
\label{eqn-MOLEu-RiccatiSystem}
R'(s) & = 1-D(\tau_n,s,v_m)R(s)-C(\tau_n,s,v_m)R^2(s)\\
w'(s) & = -C(\tau_n,s,v_m)R(s)w(s) - g(\tau_n,s,v_m)R(s),
\end{split}
\end{align}
with $R(0)=w(0)=0$. System \eqref{eqn-MOLEu-RiccatiSystem} is integrated using the trapezoidal rule starting at $s_0=0$ until $s_J$. This process is referred to as the \emph{forward sweep}. Once $R$ and $w$ have been determined, the equation
\begin{align*}
\scrV_{n,m}'^k(s) 
	& = \left[C(\tau_n,s,v_m)R(s)+D(\tau_n,s,v_m)\right]\scrV_{n,m}^k(s)\\
	& \qquad + C(\tau_n,s,v_m)w(s) + g(\tau_n,s,v_m)
\end{align*}
is integrated using the trapezoidal rule starting at $s=s_J$. The starting point $\scrV_{n,m}^k(s_J)$ is calculated using the boundary condition $\scrV_{n,m}'^k(s_J) = 0$ and the above equation. The option price $V_{n,m}^k(s)$ can be recovered using the Riccati transform \eqref{eqn-MOLEu-RiccatiTransform}. The \emph{reverse sweep} concludes after calculating $V_{n,m}^k(s)$ and $\scrV_{n,m}^k(s)$ for all $s$.

The process at the last variance line $m=M$ is similar to that for $1\leq m\leq M-1$, except that the coefficients $C$, $D$, and $g$ must be adjusted in view of equations \eqref{eqn-MOLEu-IPDE-Approx2-Coeff-vBC}.

The variance line iterations terminate once the convergence criterion
\begin{equation}
\label{eqn-ConvergenceCriterion}
\max_{0\leq m\leq M}\left\{\max_{0\leq j\leq J}\left|V_{n,m}^k(s_j) - V_{n,m}^{k-1}(s_j)\right|\right\}<10^{-8}
\end{equation} 
is satisfied. Otherwise, the current iterates are stored as previous iterates and the next variance line iteration commences. If \eqref{eqn-ConvergenceCriterion} is met, then $V_{n,m}^k(s)$ and $\scrV_{n,m}^k(s)$ are stored as $V_{n,m}^{k'}(s)$ and $\scrV_{n,m}^{k'}(s)$ and the current integral term iteration continues. The convergence criterion may also include the option delta, if a more stringent criterion is desired.

Now back at the current integral term iteration, criterion \eqref{eqn-ConvergenceCriterion} is checked again using $V_{n,m}^{k'}(s)$ and $\scrV_{n,m}^{k'}(s)$. If it is not satisfied, then the current iterates are stored as previous iterates and the next integral term iteration commences. At this point, $I_1$ and $I_2$ are recalculated and another pass of variance line iterations commences. Otherwise, the integral term iterations terminate and the latest iterates $V_{n,m}^{k'}(s)$ and $\scrV_{n,m}^{k'}(s)$ are taken to be the solutions $V_{n,m}(s)$ and $\scrV_{n,m}(s)$ at the $n$th time step. The algorithm then proceeds to the next time step.

As seen above, the MOL algorithm naturally computes the option price $V_{n,m}(s)$, the option delta $\scrV_{n,m}(s)$, and the option gamma $\scrV_{n,m}'(s)$. The option gamma profile can be stored by computing $\scrV_{n,m}'^k(s_j)$ with the second equation in \eqref{eqn-MOLEu-FirstOrderSystem} after computing $V_{n,m}^k(s_j)$ and $\scrV_{n,m}^k(s_j)$ in the reverse sweep.

The MOL algorithm discussed above can also be used to approximate the joint transition density function of the log-asset yield ratio $x=\ln s$ and the variance $v$. The joint tdf $H(\tau,x,v;0,x_0,v_0) = \hat{\Q}(\tilde{S}(0)=e^{x_0}, v(0)=v_0|\tilde{S}(\tau) = e^x, v(\tau) = v)$ is known to satisfy the backward Kolmogorov equation
\begin{align}
\begin{split}
\label{eqn-MOLTDE-IPDE}
\pder[H]{\tau} 
	& = \frac{1}{2}\sigma^2 v\pder[^2H]{x^2} + \frac{1}{2}\omega^2 v\pder[^2H]{v^2} + \omega(\sigma_1\rho_1-\sigma_2\rho_2) v\pder[^2H]{x\partial v}\\
	& \qquad - \left(\tilde{\lambda}_1\tilde{\kappa}_1+\tilde{\lambda}_2\tilde{\kappa}_2^- +\frac{1}{2}\sigma^2 v\right)\pder[H]{x} + \left[\xi\eta-(\xi+\Lambda)v\right]\pder[H]{v} - (\tilde{\lambda}_1+\tilde{\lambda}_2)H\\
	& \qquad + \tilde{\lambda}_1\int_\mathbb{R} H(\tau,x+y,v)G_1(y)\dif y + \int_{\mathbb{R}}H(\tau,x-y,v)G_2(y)\dif y,
\end{split}
\end{align}
for $-\infty<x<\infty$, $0<v<\infty$, and $0\leq\tau\leq T$. Equation \eqref{eqn-MOLTDE-IPDE} is to be solved subject to the initial condition $H(0,x,v;0,x_0,v_0)=\delta(x-x_0)\delta(v-v_0)$, where $\delta(\cdot)$ is the Dirac delta function, and boundary conditions $\lim_{x\to\pm\infty}H(\tau,x,v) = 0$, $\lim_{v\to\infty}H(\tau,x,v)=0$, and $\lim_{v\to 0^+}H(\tau,x,v)=0$. It is assumed here that the values of the state variables at maturity are constant.

\subsection{Adjustments for the American Exchange Option Price}
\label{sec-MOL-AmExcOp}

The MOL algorithm discussed above has to be slightly adjusted to determine the unknown early exercise boundary during the forward sweep. For convenience, we continue to denote the discounted American exchange option price by $V(\tau,s,v)$. Then $V$ also satisfies \eqref{eqn-MOLEu-IPDE}, subject to the same initial condition. However, this equation must be solved over the restricted domain $0<v<\infty$, $0\leq\tau\leq T$, and $0<s<A(\tau,v) \equiv B(\tau,v)e^{(q_1-q_2)(T-\tau)}$, where $B(\tau,v)$ us the unknown early exercise boundary. For $s\geq A(\tau,v)$, the option price is known analytically as $V(\tau,s,v) = e^{-q_1(T-\tau)}(s-e^{(q_1-q_2)(T-\tau)})$. In addition to the usual boundary conditions $\lim_{s\to 0^+}V(\tau,s,v) = 0$ and $\lim_{v\to\infty}\partial V/\partial v = 0,$ we also impose smooth-pasting condition $\lim_{s\to A(\tau,v)}\partial V/\partial s = e^{-q_1(T-t)}$. A suitable initial condition for the early exercise boundary is $A(0^+,v) = B(0^+,v)e^{(q_1-q_2)T}$, where $B(0^+,v)$ is a solution of equation \eqref{eqn-PutCall-EEBLimit}, and let $A_{n,m}=A(\tau_n,v_m)$ be the approximation of the boundary at $\tau=\tau_n$ and $v=v_m$.

The MOL approximation of the American exchange option at $\tau=\tau_n$ and $v=v_m$ is also given by equation \eqref{eqn-MOLEu-IPDE-Approx2} with coefficients \eqref{eqn-MOLEu-IPDE-Approx2-Coeff} for $m=1,\dots,M-1$ and \eqref{eqn-MOLEu-IPDE-Approx2-Coeff-vBC} for $m=M$. Thus, in this section, we highlight how the Riccati transform method in the previous section is modified to solve for the early exercise boundary. We assume that $s_J$ is chosen such that $A_{n,m}<s_J$ for all $n,m$. Algorithm \ref{pseudo-MOL-AmExcOp} shows the modified MOL algorithm.

\begin{algorithm}
\scriptsize
\caption{MOL for the discounted American exchange option price}
\label{pseudo-MOL-AmExcOp}
\SetAlgoLined
\KwResult{Option price $V$ and delta $\scrV$}
Inputs: SVJD model parameters, jump distribution parameters, partitions for $[0,T]$, $[0,s_J]$, and $[0,v_M]$\;
Compute Gauss-Hermite parameters $\varrho_l^H$ and $z_l^H$, $l=1,\dots,L$ and compute $\kappa$ using Gauss-Hermite quadrature\;
Compute $A_{0,m} = B(0^+,v_m)e^{(q_1-q_2)T}$ using \eqref{eqn-PutCall-EEBLimit}\;
Initialize $V_{0,m}(s_j) = e^{-q_1 T}(s_j-e^{(q_1-q_2)T})^+$, $\scrV_{0,m}(s_j) = e^{-q_1 T}\text{Heaviside}(s_j-e^{(q_1-q_2)T})$ for all $j,m$\;
\For{$n=1$ \KwTo $N$}{
	Set $V_{n,m}^{k'-1}(s_j) = V_{n-1,m}(s_j)$ and $\scrV_{n,m}^{k'-1}(s_j) = \scrV_{n-1,m}(s_j)$\;
	Initialize $\epsilon_\text{integ} = 1$ and $k'=1$\;
	\tcc{Commence integral term iterations}
	\While{$\epsilon_\text{integ}> 10^{-8}$ \& $k'<k'_{max}$}{
		Compute $I_1(\tau_n,s_j,v_m)$, $I_2(\tau_n,s_j,v_m)$ using $V_{n,m}^{k-1}(s_j)$ for all $j,m$\;
		Set $V_{n,m}^k(s_j) = V_{n,m}^{k'}(s_j)$ and $\scrV_{n,m}^k(s_j) = \scrV_{n,m}^{k'}(s_j)$\;
		Initialize $\epsilon_\text{var} = 1$, $k=1$, and set $k_{max} = 50$\;
		\tcc{Commence variance line iterations}
		\While{$\epsilon_\text{var}>10^{-8}$ \& $k<k_{max}$}{
			\For{$m=0$ \KwTo $M$}{
				\If{$m=1$}{
					Compute $V_{n,0}^k$ and $\scrV_{n,0}^k$ using equations \eqref{eqn-MOL-QuadExtrap}\;
					}
				\ElseIf{$1\leq m\leq M-1$}{
					\tcc{Forward sweep and computation of $A_{n,m}^k$}
					Initialize forward sweep while-loop parameters: $i=1$, $\phi(s_0)=e^{-q_2(T-\tau_n}$, $\Pi = 1$\;
					\While{$\Pi>0$}{
						Compute $R(s_i)$, $w(s_i)$, and $\phi(s_i)$\;
						Update $\Pi = \phi(s_i)\phi(s_{i-1})$ and assign $j^*=i$\;
						Update $i=i+1$\;
					}
					Compute $\phi(s_{j^*+1})$ and fit a cubic spline through $\{(s_i,\phi(s_i))\}_{i=j^*-2}^{j^*+1}$\;
					Take $A_{n,m}^k$ as the zero of the cubic spline in $(s_{j^*-1},s_{j^*})$\;
					\tcc{Reverse sweep over $[s_{j^*-1},A_{n,m}^k]$}
					Set $\scrV(A_{n,m}^k) = e^{-q_1(T-\tau_n)}$ and linearly interpolate values of $C$, $D$, $g$, $R$, and $w$ at $s=A_{n,m}^k$\;
					Calculate $\scrV_{n,m}^k(s_{j^*-1})$ using the trapezoidal rule and $V_{n,m}^k(s_{j^*-1})$ using \eqref{eqn-MOLEu-RiccatiTransform}\;
					\tcc{Reverse sweep over $[0,s_{j^*-1}]$}
					\For{$j=j^*-2$ \KwTo $1$}{
						Compute $\scrV_{n,m}^k(s_j)$ using the trapezoidal rule\;
						Compute $V_{n,m}^k(s_j)$ using the Riccati transform \eqref{eqn-MOLEu-RiccatiTransform}\;
					}
					Calculate $V_{n,m}^k(s_j)$ and $\scrV_{n,m}^k(s_j)$ using the analytical form in the stopping region, $j=j^*,\dots,J$\;
				}
				\Else{ 
					Recalculate the coefficients of \eqref{eqn-MOLEu-RiccatiSystem} with \eqref{eqn-MOLEu-IPDE-Approx2-Coeff-vBC}\tcp{At $v=v_M$}
					Perform the forward sweep and reverse sweep as in when $1\leq m\leq M-1$\;
				}
			}
			Update $\epsilon_\text{var} = \max_m\{\max_j\{|V_{n,M}^k(s_j)-V_{n,m}^{k-1}(s_j)|\}$\;
			Update $V_{n,m}^{k-1} = V_{n,m}^{k}$ and $\scrV_{n,m}^{k-1} = \scrV_{n,m}^k$\;
			Update $k=k+1$\;
		}
		Update $V_{n,m}^{k'} = V_{n,m}^k$ and $\scrV_{n,m}^{k'} = \scrV_{n,m}^k$\;
		Update $\epsilon_\text{integ} = \max_m\{\max_j\{|V_{n,M}^{k'}(s_j)-V_{n,m}^{k'-1}(s_j)|\}$\;
		Update $V_{n,m}^{k'-1} = V_{n,m}^{k'}$ and $\scrV_{n,m}^{k'-1} = \scrV_{n,m}^{k'}$\;
		Update $k'=k'+1$
	}
	Update $V_{n,m} = V_{n,m}^{k'}$ and $\scrV_{n,m} = \scrV_{n,m}^{k'}$\;
}
\end{algorithm}

In each of the $k$ variance line iterations, the $V_{n,0}^k(s)$ and $\scrV_{n,0}^k(s)$ are estimated using the quadratic extrapolation formulas \eqref{eqn-MOL-QuadExtrap}. For each $m=1,\dots,M-1$, the forward sweep of system \eqref{eqn-MOLEu-RiccatiSystem} via the trapezoidal rule starts at $s_0$ in the direction of increasing $s$. The main difference is that for the American exchange option, the sign of the function
\begin{equation}
\label{eqn-MOLAm-SignCheck}
\phi(s_j) = R(s_j) e^{-q_1(T-\tau_n)}+w(s_j) - e^{-q_1(T-\tau_n)}\left(s_j-e^{(q_1-q_2)(T-\tau_n)}\right)
\end{equation}
is monitored at each $j$. Equation \eqref{eqn-MOLAm-SignCheck} arises from combining the smooth-pasting condition and the Riccati transform equation. The initial value is $\phi(s_0) = e^{-q_2(T-\tau_n)}>0$. The forward sweep stops at the index $j^*$ at which $\phi$ first becomes negative; i.e. $j^*$ is the first index such that $\phi(s_{j^*-1})\phi(s_{j^*})<0$. The approximation of the early exercise boundary at this iteration, which we denote by $A^k_{n,m}$, is estimated to be the zero of the cubic spline interpolant through the points $\{(s_i,\phi(s_i))\}_{i=j^*-2}^{j^*+1}$ that occurs in between $s_{j^*-1}$ and $s_{j^*}$. This zero is numerically solved using any standard root finding algorithm.

Once $A^k_{n,m}$ has been determined, the reverse sweep for $\scrV^k_{n,m}(s)$ is then carried out over the interval $[0,A^k_{n,m}]$. Since $A^k_{n,m}$ is not part of the regular mesh, the trapezoidal rule is first applied over $[s_{j^*-1},A^k_{n,m}]$, where the required values of $C$, $D$, $g$, $R$, and $w$ are linearly interpolated from values at $s_{j^*-1}$ and $s_{j^*}$, which were determined as part of the forward sweep. Once $\scrV^k_{n,m}(s_{j^*-1})$ has been calculated, the reverse sweep can continue over $[0,s_{j^*-1}]$ along the regular mesh. For $j\geq j^*$, it is known that $\scrV^k_{n,m}(s_j) = e^{-q_1(T-\tau_n)}$. The value of $V^k_{n,m}(s_j)$ can be calculated using the Riccati transform equation.

The same process applies at the last variance line, except that a recalculation of the coefficients of the Riccati system according to equation \eqref{eqn-MOLEu-IPDE-Approx2-Coeff-vBC} is required before commencing the forward sweep.

Once the convergence criterion is satisfied at both iteration levels, the final iterates are then stored as the solution at the $n$th time step.\footnote{The convergence criterion can also include the early exercise boundary.} The algorithm then proceeds to the next time step.

\subsection{Venttsel Boundary Conditions at $v=v_M$}
\label{sec-MOL-Venttsel}

The choice of boundary conditions for equation \eqref{eqn-MOLEu-IPDE} affects the quality of the MOL approximation obtained using the algorithms discussed above. In most financial applications, the initial condition is usually dictated by the type of instrument being priced. However, articulating boundary conditions in the spatial variables is not as straightforward since the equation may be degenerate at certain points of the boundary or the domain may be unbounded. The problem considered in this paper struggles with both issues. 

The pricing equation \eqref{eqn-MOLEu-IPDE} is degenerate when $s=0$ or $v=0$. As such, the structure of the IPDE determines whether conditions at these boundaries should be independently imposed or if the equation itself should naturally hold at these boundaries. \citet{Chiarella-2009} consider this issue in more detail, with the aid of the Fichera function for degenerate equations, for the closely related American call option under SVJD dynamics and justify the use of quadratic extrapolation for the solution of the pricing equation along the first variance line $v=v_0$.\footnote{Such an analysis was also done by \citet{Kang-2014} for American calls under stochastic volatility and stochastic interest rates. \citet{Meyer-2015} provides a discussion of the Fichera theory applied to common financial problems.} They showed that equation \eqref{eqn-MOLEu-IPDE} is expected to hold when $v=0$ provided that the Feller condition is satisfied. This implies that no independent condition is required at that point. On the other hand, we allow the dynamics of $\tilde{S}(t)$ to dictate the boundary condition at $s=0$. From \eqref{eqn-PutCall-YieldRatioSDE-Q}, if the process starts at zero, then it will stay at zero, implying that the option becomes worthless throughout its life. This motivates the choice of the first boundary condition in \eqref{eqn-MOLEu-BC}.

For the MOL algorithm, we truncated the infinite $s$ and $v$ domains. As such, boundaries must be prescribed at the far boundaries $s=s_J$ and $v=v_M$. For the Riccati transform method to work in the European case, the boundary condition at $s_J$ must be equivalent to a scalar equation in $V$ and $\scrV$, and this is attained through the second condition in \eqref{eqn-MOLEu-BC} when approximated as $\scrV_{n,m}'(s_J) = 0$ (although there may be alternatives to this). In the American case, no such condition is required since we already have the smooth-pasting and value-matching conditions at the free boundary, which is assumed to be less than $s_J$. Hence, we only need to consider boundary conditions at $v_M$.

As an alternative to the boundary condition $\partial V(\tau,s,v_M)/\partial v = 0$, we consider Venttsel boundary conditions derived from the pricing equation.\footnote{We do not seek to formally define Venttsel boundary conditions in this paper. The reader is referred to \citet[Section 1.2.3]{Meyer-2015} for a discussion of Venttsel boundary conditions in the context of financial pricing problems.} Following the theory of Venttsel boundary conditions in \citet{Meyer-2015}, we find that the equation
\begin{align}
\begin{split}
\label{eqn-Fichera-VBC1}
0 & =\frac{1}{2}\sigma^2 v_M s^2\pder[^2V]{s^2}-\left(\tilde{\lambda}_1\tilde{\kappa}_1+\tilde{\lambda}_2\tilde{\kappa}_2^-\right)s\pder[V]{s}+\left[\xi\eta-(\xi+\Lambda)v_M\right]\pder[V]{v}\\
	& \qquad - (\tilde{\lambda}_1+\tilde{\lambda}_2)V - \pder[V]{\tau} - I_1(\tau,s,v_M) - I_2(\tau,s,v_M)
\end{split}
\end{align}
is an admissible Venttsel condition at $v=v_M$ provided that $(\xi+\Lambda)v_M\geq \xi\eta$. Equation \eqref{eqn-Fichera-VBC1} is equivalent to the assumption that the option delta $\partial V/\partial s$ and vega $\partial V/\partial v$ are insensitive to the instantaneous variance for when the variance level is sufficiently large. Another Venttsel condition is given by
\begin{align}
\begin{split}
\label{eqn-Fichera-VBC2}
0 & = \frac{1}{2}\sigma^2 v_M s^2\pder[^2V]{s^2}-\left(\tilde{\lambda}_1\tilde{\kappa}_1+\tilde{\lambda}_2\tilde{\kappa}_2^-\right)s\pder[V]{s}-(\tilde{\lambda}_1+\tilde{\lambda}_2)V - \pder[V]{\tau}\\
	& \qquad - I_1(\tau,s,v_M) - I_2(\tau,s,v_M),
\end{split}
\end{align}
which resembles the pricing equation when volatility is constant at $\sigma\sqrt{v_M}$. Whichever Venttsel boundary condition is chosen, the equation must be solved subject to boundary conditions in $\tau$ and $s$.

In the MOL implementation, the adoption of either Venttsel condition implies a change in the coefficients of equation \eqref{eqn-MOLEu-IPDE-Approx2} at $v=v_M$. For equation \eqref{eqn-Fichera-VBC1}, we employ a backward difference approximation of $\partial V/\partial v$. This is appropriate in view of the upwinding difference approximation for the other variance lines as this Venttsel condition requires that $(\xi+\Lambda)v_M\geq \xi\eta$. The usual discretization is employed for $\partial V/\partial \tau$. Once the coefficients of equation \eqref{eqn-MOLEu-IPDE-Approx2} have been updated, the Riccati solution for the last variance line can proceed as outlined in Section \ref{sec-MOL-Algorithm}.

The impact of these Venttsel conditions on the performance of the MOL algorithm is explored in the succeeding sections.

\section{Numerical Results and Discussion}
\label{sec-NumericalResults}

\begin{table}[]
\caption{Parameter values used for the method of lines (MOL) approximation. For the solution of the transition density function, a value of $x_{\max}=5$ is used for the partition of the computational domain in $x$. For the approximation of the European and American exchange option prices, a value of $s_J=4$ is used for the mesh for $s$. All MOL implementations share all other parameter values.}
\label{tab-ParameterValues}
\centering
\begin{tabular}{@{}cccccccc@{}}
\toprule
\multicolumn{2}{c}{\textbf{Asset Price}} & \multicolumn{2}{c}{\textbf{Stoch. Vol.}} & \multicolumn{2}{c}{\textbf{Jumps}} & \multicolumn{2}{c}{\textbf{Mesh Sizes}} \\ \midrule
$T$ & 0.50 & $\xi$ & 2.00 & $\tilde{\lambda}_1$ & 5.00 & $x_{\max}$ & 5.00 \\
$q_1$ & 0.05 & $\eta$ & 0.56 & $\beta_{j_1}$ & 0.20 & $s_J$ & 4.00 \\
$q_2$ & 0.03 & $\Lambda$ & 0.00 & $\alpha_{j_1}$ & $0.00$ & $v_M$ & 2.00 \\
$\sigma_1$ & 0.50 & $\omega$ & 0.40 & $\tilde{\lambda}_2$ & 2.00 & $N$ & 30 \\
$\sigma_2$ & 0.50 & $\rho_1$ & 0.50 & $\beta_{j_2}$ & 0.20 & $M$ & 25 \\
$\rho_w$ & 0.50 & $\rho_2$ & 0.05 & $\alpha_{j_2}$ & 0.00 & $J$ & 140 \\
 &  &  &  & $L$ & 20 &  &  \\ \bottomrule
\end{tabular}
\end{table}

In this section, we present numerical approximations of the discounted European and American exchange option prices, the early exercise boundary, the discounted early exercise premium, and the joint tdf generated using the MOL algorithm discussed in the previous section. These results were obtained using the parameter values enumerated in Table \ref{tab-ParameterValues}, although in Section \ref{sec-CompStat}, we investigate how the option prices and the early exercise boundary change in response to alternative model parameter values.\footnote{Model parameter values used in this paper are similar to those used by \citet{Chiarella-2009}, although additional parameters are used to accommodate the second asset.} The values of $q_1$ and $q_2$ were chosen so that the exercise boundary is discontinuous at maturity, thereby highlighting the impact of jumps in asset prices. The chosen means of the jump size variables $Y_1$ and $Y_2$ are zero, indicating that upward and downward jumps in asset prices are equally likely to occur. We also assume that they have equal standard deviations.\footnote{Less emphasis on the jump size distribution concentrates the analysis on the effect of the jump intensities. Alternative jump size distributions may be considered as well, although the form of the density function dictates the quadrature formula for approximating the integral terms.} Correlations among Wiener processes are initially assumed to be positive, although the effect of negative correlations is also be explored in Section \ref{sec-CompStat}.

All source codes were implemented using MATLAB$^\text{\textregistered}$ on an x64-based personal computer with an Intel$^\text{\textregistered}$ Core\texttrademark i7-10710U CPU with 1.10GHz, 1608 MHz, 6 cores, 12 logical processors, and 8GB RAM. Due to hardware constraints, modest mesh sizes for $\tau$, $v$, and $s$ (or $x$) are used, although as seen in Table \ref{tab-LSMonteCarlo} below, the MOL does not require too many time steps to converge. The computational domain $[0,4]$ for $s$ was divided into three intervals, $[0,0.5]$, $[0.5,3]$, and $[3,4]$, which was then subdivided using 20, 80, and 40 mesh points, respectively. This allows for a more precise approximation of the early exercise boundary as it tends to occur near $e^{(q_1-q_2)T}$. Despite the use of relatively few mesh points, convergence was attained for all MOL implementations with parameters reported in Table \ref{tab-ParameterValues} and the alternative values explored in Section \ref{sec-CompStat}. 

All option prices, option deltas, and price differences presented in this section are in their \emph{discounted form}, i.e. they are expressed in units of the second asset yield. As such, prices and price differences may be more pronounced if the exchange option is written on highly priced underlying assets.  

\subsection{MOL Approximation of Exchange Option Prices and the Joint TDF}
\label{sec-NumericalPrices}

\begin{figure}
\centering
\includegraphics[width = 0.4\linewidth]{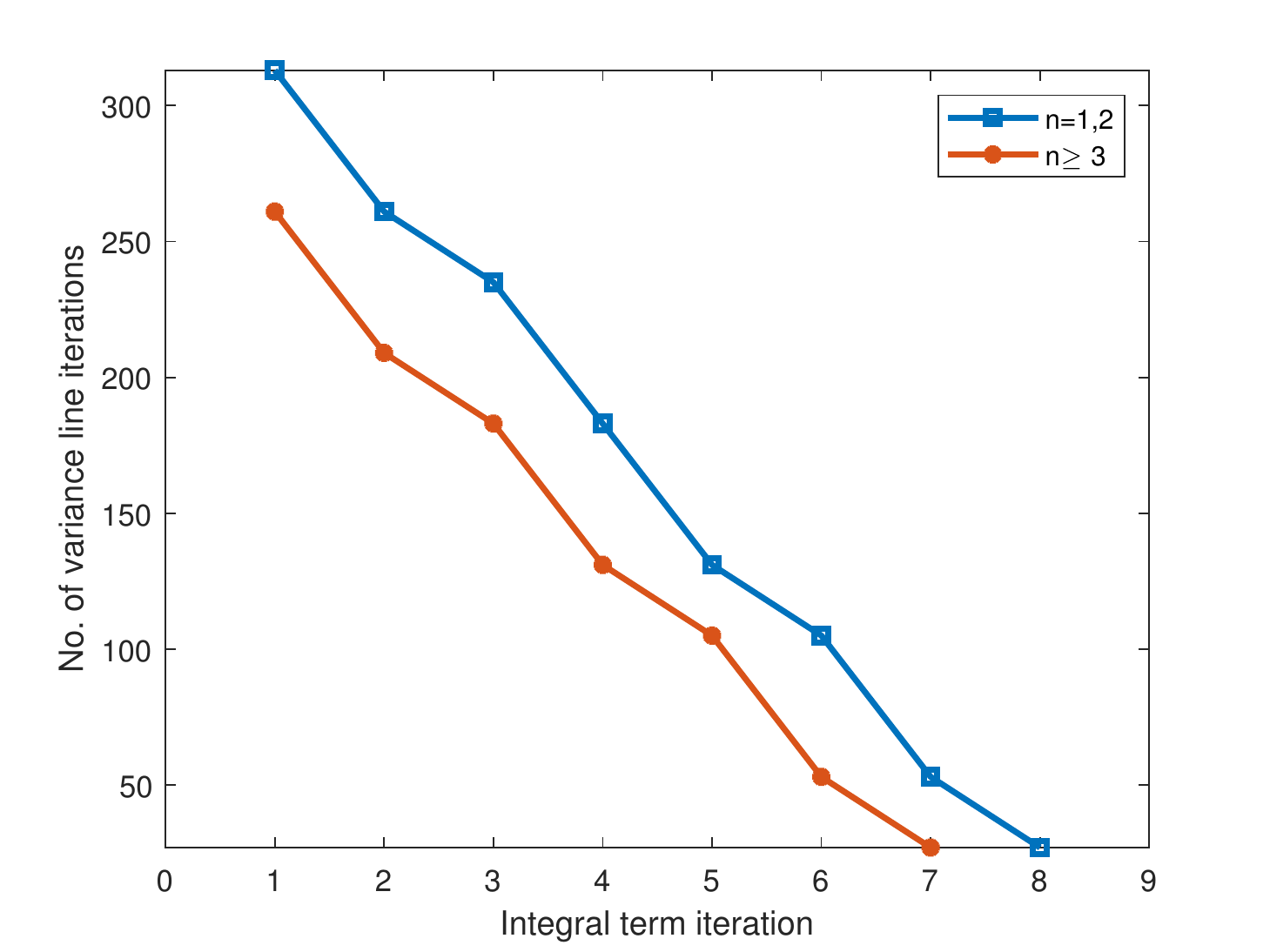}
\caption{Typical number of variance line iterations per integral term iteration at the $n$th time step for the MOL for European and American exchange options.}
\label{fig-Convergence}
\end{figure}

Figure \ref{fig-Convergence} shows the typical convergence pattern of the integral term and variance line iterations for the MOL approximation of the European and American exchange option in the $n$th time step. It has been observed that the convergence behavior is the same for both types of exchange options. However, we note a decrease in the number of integral term iterations required on and after the third time step, signifying the effect of adopting a second-order backward difference approximation for the time derivative for $n\geq 3$ on the efficiency of the computation. The number of variance line iterations generally decreases as the integral line iterations continue. \citet{Chiarella-2009} reported that the number of variance line iterations required drastically increases as the number of mesh points in $v$ increases, but we were unable to replicate this phenomenon due to hardware constraints. While the convergence behavior is the same for both European and American cases, a typical MOL pass yields a computing time for the American exchange option that is almost double that required for the European case. In a sample implementation, the European and American cases were completed in 74.4492 and 152.5094 seconds, respectively.

\begin{figure}
\centering
\subfloat[Discounted European exchange option price]{
	\includegraphics[width = 0.4\linewidth]{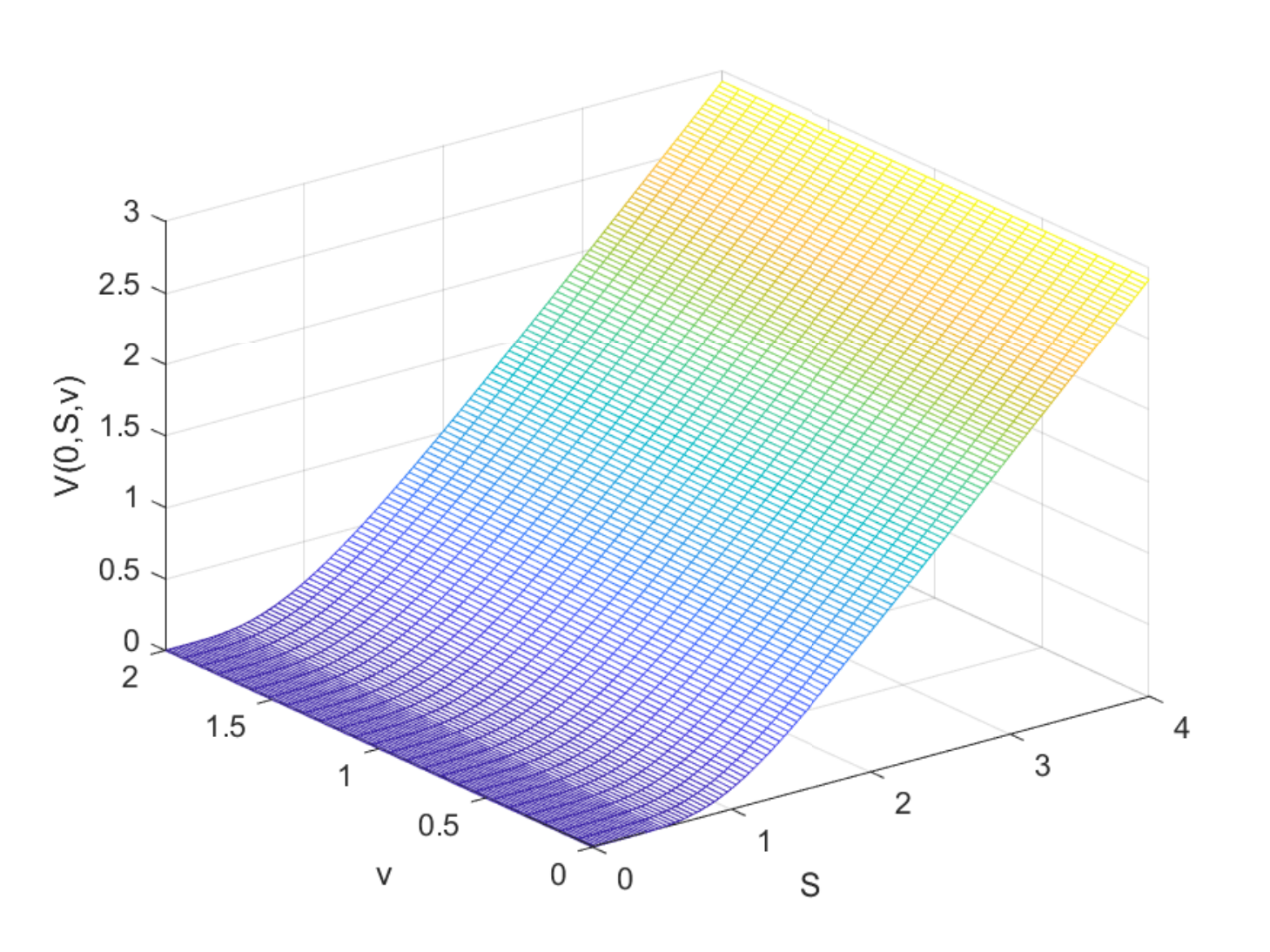}}
\subfloat[Discounted European delta]{
	\includegraphics[width = 0.4\linewidth]{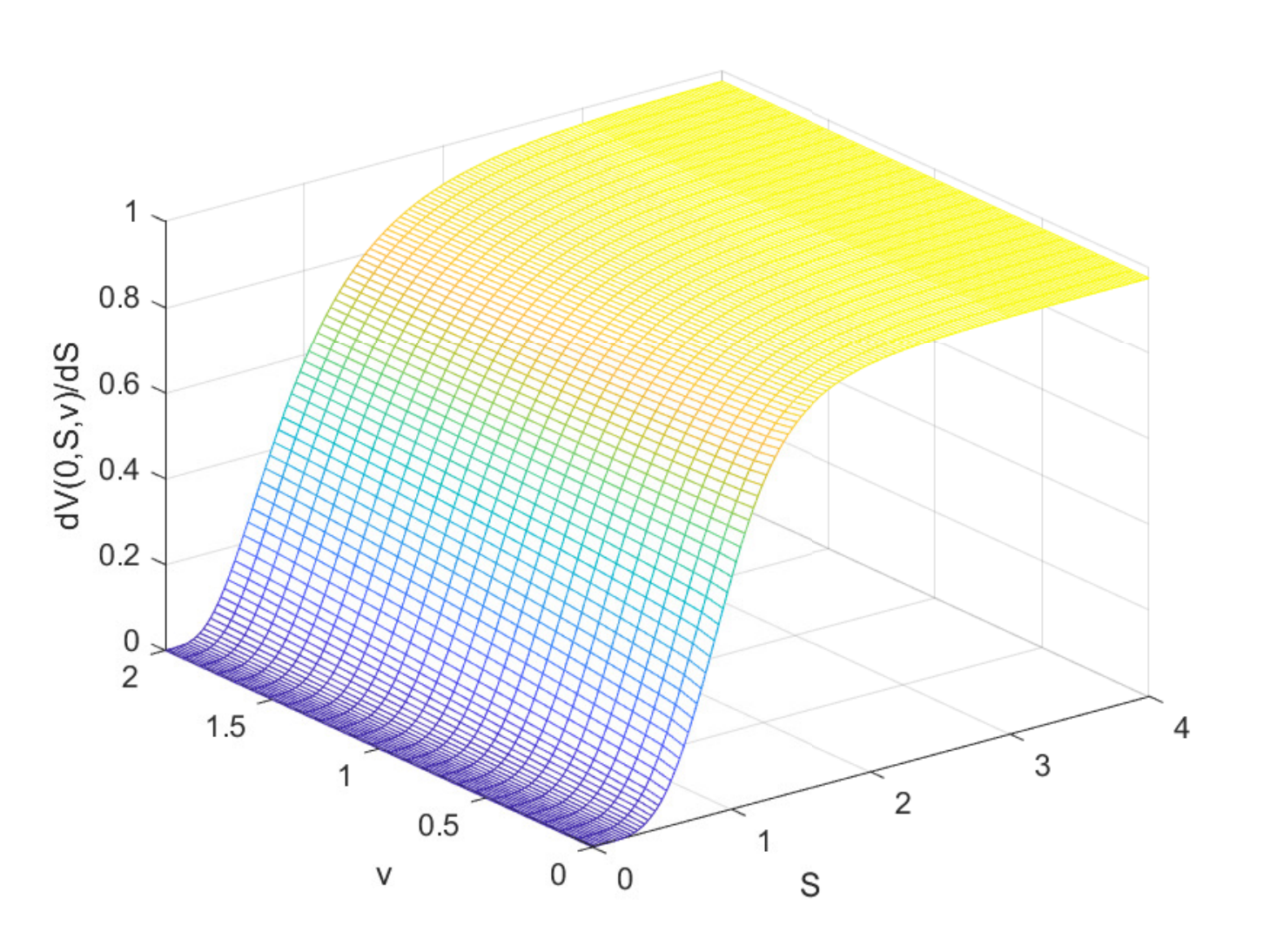}}
\caption{Method of lines approximation of the discounted European exchange option price $V(\tau,s,v)$ and delta $\partial V(\tau,s,v)/\partial s$ at $\tau=T$.}
\label{fig-EuExcOp}
\end{figure} 

Figure \ref{fig-EuExcOp} shows the approximation of the discounted European exchange option price and delta against $s$ and $v$ at $\tau=T$ (or $t=0$), the start of the life of the option. Similar to ordinary European call options, the delta of the European exchange option is steepest when the asset yield ratio is close to $e^{(q_1-q_2)T}$.

\begin{figure}
\centering
\includegraphics[width = 0.4\linewidth]{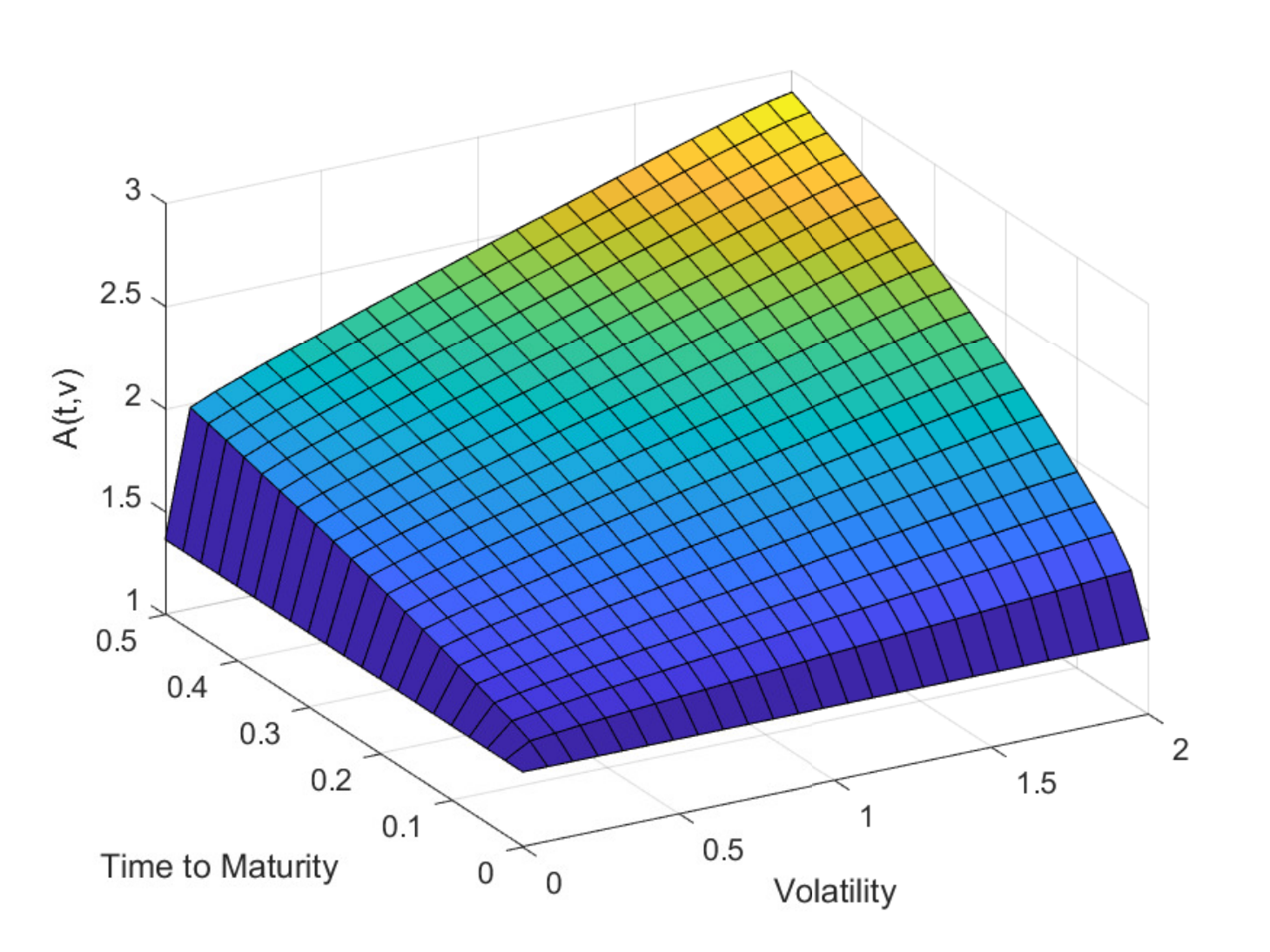}
\caption{Method of lines approximation of the early exercise boundary surface $A(\tau,v)$.}
\label{fig-AmExcOp-EEB}
\end{figure}

\begin{figure}
\centering
\subfloat[Discounted American exchange option price]{
	\includegraphics[width = 0.4\linewidth]{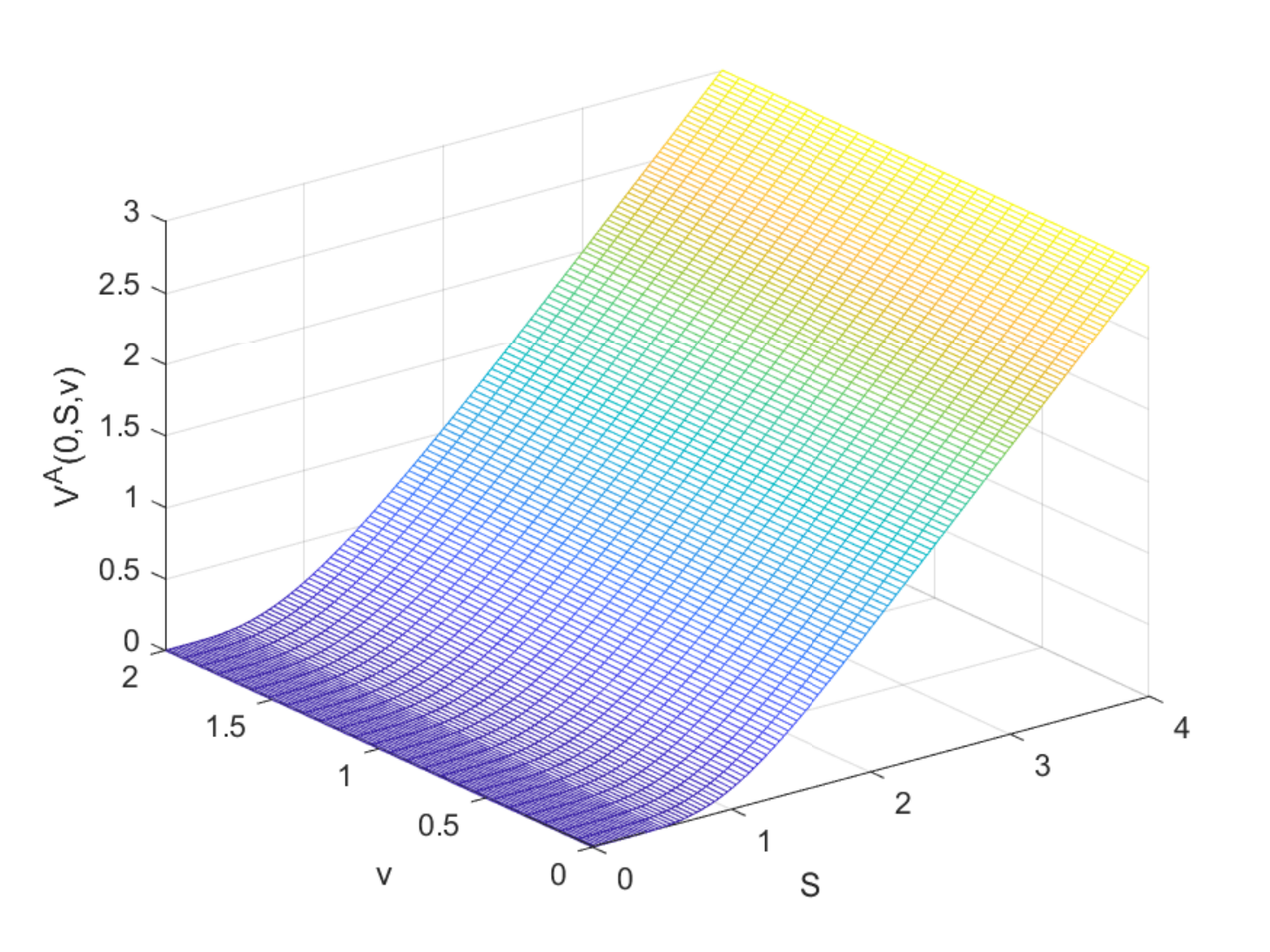}}
\subfloat[Discounted American delta]{
	\includegraphics[width = 0.4\linewidth]{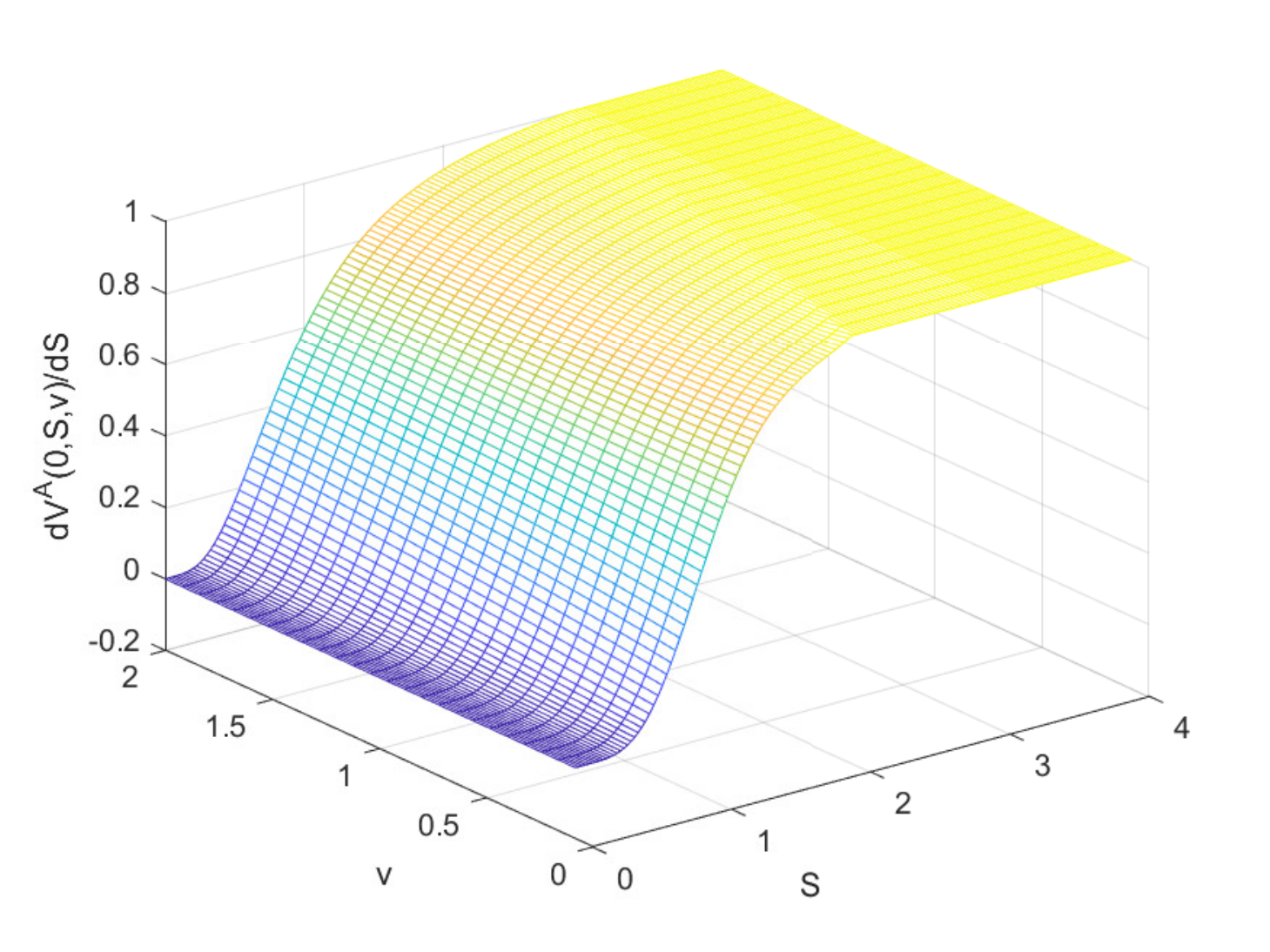}}
\caption{Method of lines approximation of the discounted American exchange option price $V^A(\tau,s,v)$ and delta $\partial V^A(\tau,s,v)/\partial s$ at $\tau=T$.}
\label{fig-AmExcOp}
\end{figure}

Figures \ref{fig-AmExcOp-EEB} and \ref{fig-AmExcOp} show the MOL approximations of the early exercise surface, the discounted American exchange option price, and the American exchange option delta at time-to-maturity $\tau = T$ (or time $t=0$). Figure \ref{fig-AmExcOp-EEB} shows that the early exercise boundary at $\tau=0$ is constant with respect to $v$ since the option payoff is independent of $v$ as discussed in Section \ref{sec-PutCall-EEBLimit}. The figure also illustrates that the $A(t,v)$ is an increasing function of $v$. As expected from the put-call transformation technique, the American and European exchange option price profiles, when expressed in terms of the asset yield ratio, behave in a similar manner to their ordinary call option counterparts.

\begin{figure}
\centering
\subfloat[Discounted early exercise premium]{
	\includegraphics[width = 0.4\linewidth]{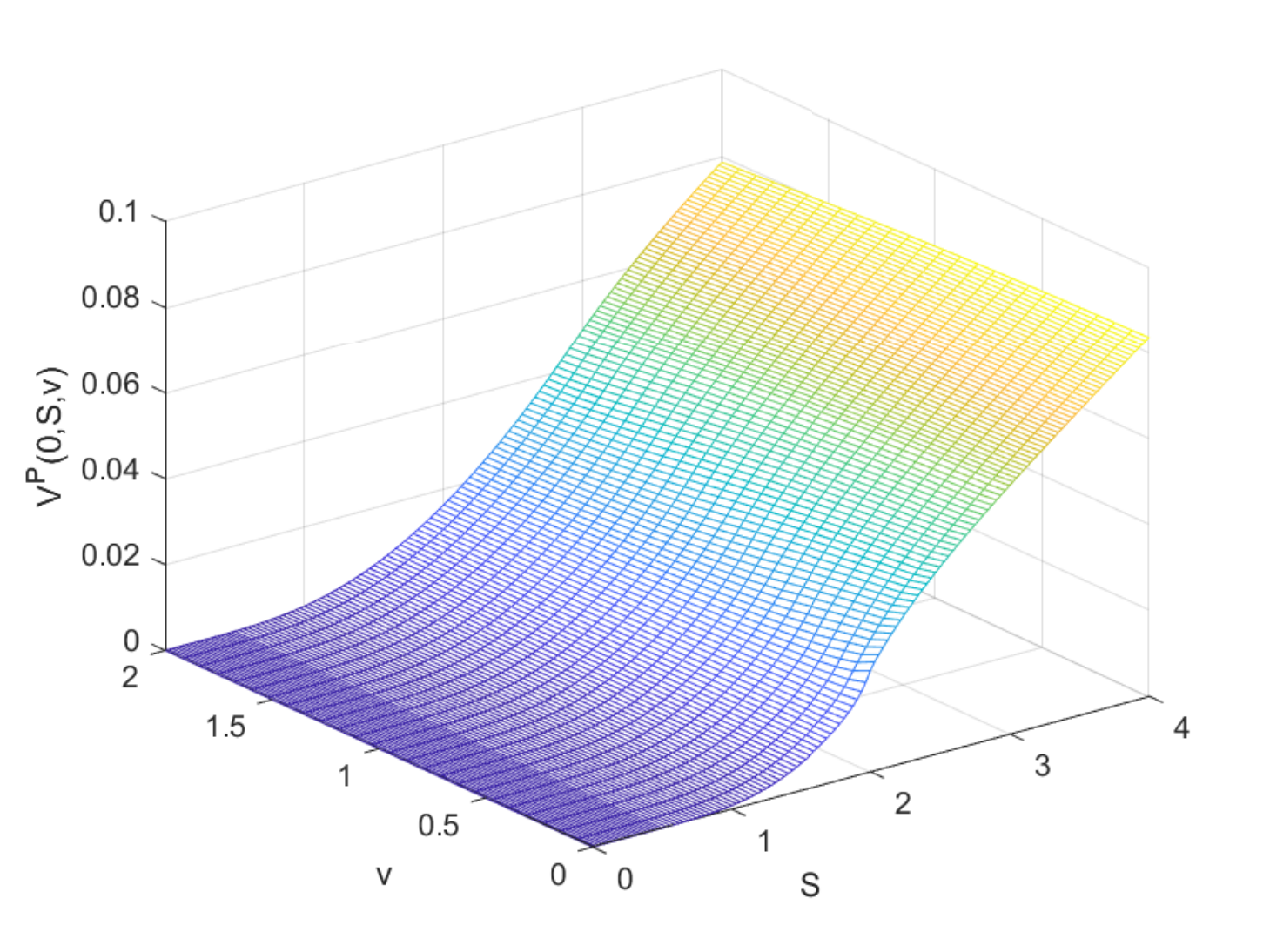}}
\subfloat[Difference between American and European deltas]{
	\includegraphics[width = 0.4\linewidth]{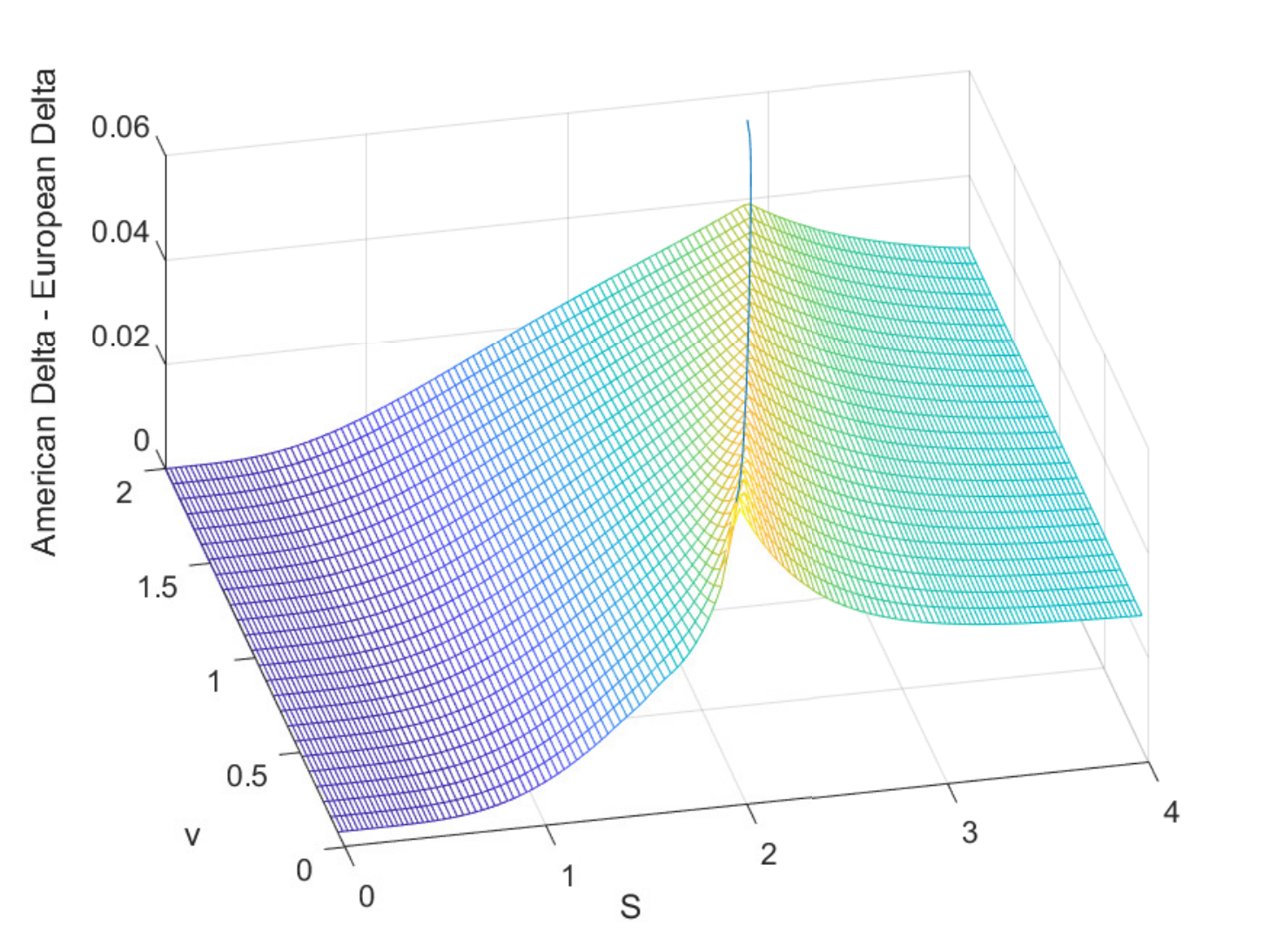}}
\caption{Method of lines approximation of the discounted early exercise premium $V^P(\tau,s,v)$ and the difference between the deltas of the American and European exchange options at time $\tau=T$.	The blue curve in (b) indicates the early exercise boundary at $\tau=T$.}
\label{fig-EEPremium}
\end{figure}

The price and delta profiles of the European and American exchange options appear to behave similarly in $s$ and $v$, but the difference is emphasized when we compute the early exercise premium $V^P$. Figure \ref{fig-EEPremium} shows the discounted early exercise premium surface in the asset yield ratio and variance at $\tau=T$. For a fixed $v$, the premium is increasing in $s$. On closer inspection, the early exercise premium has an inflection which can be confirmed to occur at the early exercise boundary $A(T,v)$. This is also where the difference in American and European deltas peaks. This implies that the premium increases the fastest as the asset yield ratio approaches the exercise boundary from the left.

\begin{table}[]
\caption{Comparison of American exchange option prices at (calendar) time $t=0$ generated by the method of lines and the least-squares Monte Carlo (LSMC) algorithm of \citet{LongstaffSchwartz-2001} for various values of the asset yield ratio and fixed spot variance $v(0)=0.56$.}
\label{tab-LSMonteCarlo}
\centering
\begin{tabular}{@{}crrrrrr@{}}
\toprule
\textbf{} & \multicolumn{3}{c}{\textbf{Method of Lines}} & \multicolumn{3}{c}{\textbf{[LS] Monte Carlo}} \\ \midrule
\textbf{$\tilde{S}(0)$} & \multicolumn{1}{c}{\textbf{$N = 20$}} & \multicolumn{1}{c}{\textbf{$N = 50$}} & \multicolumn{1}{c}{\textbf{$N = 100$}} & \multicolumn{1}{c}{\textbf{Price}} & \multicolumn{2}{c}{\textbf{95\% CI}} \\ \midrule
0.500 & 0.015447 & 0.015441 & 0.015440 & 0.000015 & $-$0.000012 & 0.000032\\
0.625 & 0.039076 & 0.039108 & 0.039112 & 0.000622 & 0.004355 & 0.008093 \\
0.750 & 0.077450 & 0.077521 & 0.077530 & 0.004084 & 0.003621 & 0.004546 \\
0.875 & 0.130851 & 0.130953 & 0.130966 & 0.018030 & 0.017121 & 0.018939 \\
1.000 & 0.198052 & 0.198179 & 0.198194 & 0.058294 & 0.056785 & 0.059803 \\
1.500 & 0.565205 & 0.565367 & 0.565382 & 0.499905 & 0.499602 & 0.500208 \\
2.000 & 1.016076 & 1.016183 & 1.016189 & 0.999884 & 0.999481 & 1.000286 \\
2.500 & 1.500852 & 1.500875 & 1.500869 & 1.500207 & 1.499520 & 1.500893 \\
3.000 & 2.000000 & 2.000000 & 2.000000 & 1.999870 & 1.999284 & 2.000457 \\
3.500 & 2.500000 & 2.500000 & 2.500000 & 2.499871 & 2.499129 & 2.500612 \\
4.000 & 3.000000 & 3.000000 & 3.000000 & 2.999690 & 2.998935 & 3.000445 \\ \midrule
$A(0,v(0))$ & 2.6622 & 2.6626 & 2.6605 & - & - & - \\
Comp Time (s) & 194.5725 & 332.1941 & 522.4031 & $5.24\times 10^5$ & - & - \\ \bottomrule
\end{tabular}
\end{table}

In Table \ref{tab-LSMonteCarlo}, we compare MOL prices for American exchange options to those generated by the \citet{LongstaffSchwartz-2001} least-squares Monte Carlo algorithm. The LSMC algorithm was implemented using $N=1000$ time steps and 10,000 scenarios (half of which are antithetic variates). For the simulation approach, first-order Euler-Maruyama discretizations of the asset yield and instantaneous variance processes were used. The LSMC algorithm requires knowledge of the second asset price process since it serves as a (stochastic) discount factor for pricing options under $\hat{\Q}$, so $S_2(t)$ was also simulated with initial value $S_2(0)=1$.\footnote{By assuming that $S_2(0)=1$, the MOL prices are then expressed in monetary units rather than in units of the second asset yield process.} 

The exercise policy generated by the LSMC is sub-optimal as it approximates the American option by its Bermudan counterpart, and so the prices produced by the algorithm are lower bounds of the ``true'' American option price \citep{LongstaffSchwartz-2001}. We are able to verify this property for the parameters used, with MOL prices being consistently higher than point estimates for the price from the LSMC algorithm. We note however that the discrepancy is larger especially for when the option is deeply out-of-the-money, but this is most likely due to the slow convergence of the Monte Carlo method, a phenomenon that was also observed in a similar analysis by \citet{ChiarellaZiveyi-2014}. For smaller values of $\tilde{S}(0)$, the regression step is also ill-conditioned, which most likely contributed to the discrepancy as well. MOL and LSMC prices for deeply out-of-the-money options are nonetheless consistent with \eqref{eqn-MOLEu-BC}. For higher values of $\tilde{S}(0)$, the MOL prices fall within the 95\% confidence interval calculated using the LSMC approach.

Also, it takes substantially longer to estimate a complete profile of American option prices using the LSMC algorithm compared to the MOL. Having to simulate sample paths for the second asset price process means the appeal of the put-call transformation in reducing the dimensionality of the problem is lost in the simulation approach. The MOL is also far more efficient than the LSMC algorithm, since in one implementation of the MOL, we are able to estimate the option price, the delta, the gamma, and the early exercise boundary. The MOL also requires substantially fewer time steps to converge. By increasing the number of basis functions used in the regression and the number of simulations decreases the gap between the LSMC point estimate and the ``true'' option price, but doing so increases the computation time \citep{Stentoft-2004}. We report however that increasing the number of time steps does not drastically diminish the discrepancy between MOL and LSMC prices for lower asset yield ratios. In particular, when $N = 2000$ and $N = 5000$ the LSMC prices when $\tilde{S}(0) = 1$ are $0.057969$ and $0.058027$, respectively, which are still far below the MOL prices.\footnote{The LSMC algorithm was also implemented for a smaller number of exercise times, $N=50$ and $N=100$, but it produced prices and confidence intervals which fall completely below the MOL price for all values of $\tilde{S}(0)$ considered. Average computation times for $N=50$ and $N=100$ time steps in the LSM algorithm are 4,582.9s and 4,949.9s, respectively.}

A complete comparison of the relative efficiency and accuracy of these two methods remains to be seen, since both methods have numerous sources of error, including (but not limited to) the discretization scheme for the simulation of state variables or the discretization of partial derivatives, truncation of infinite domains, the choice of mesh sizes and partition points in all variables, boundary conditions for the IPDE, numerical integration scheme, and the choice of basis functions for regression. Nonetheless, for the parameter values and assumptions reported for this numerical experiment, the results we obtained are reasonably comparable.

\subsection{Numerical Comparative Statics}
\label{sec-CompStat}

In this section, we investigate how the discounted American and European exchange option prices and the early exercise boundary time profile change in response to changes in the model parameters. This analysis covers three components: (1) the effect of the correlations between the Wiener processes in the asset price and variance processes, (2) the effect of asset price jump intensities, and (3) the effect of the variance process parameters, namely the mean reversion rate $\xi$, volatility of volatility $\omega$, and the market price of volatility $\Lambda$. The analysis of option price differencesat $t=0$ (or $\tau = T$) are shown with respect to the asset yield ratio $\tilde{S}$ as this highlights how the early exercise boundary influences the comparative statics of the American exchange option. In all subsequent analyses, option prices and the early exercise boundary are shown for $v=0.56$ (the long-term variance $\eta$) and parameter values in Table \ref{tab-ParameterValues} except for those which were varied in the numerical experiment. Prices and the early exercise boundary were generated using the method of lines discussed in the previous sections. 

\begin{figure}
\centering
\subfloat[$\rho_1$ and $\rho_2$]{
	\includegraphics[width = 0.4\linewidth]{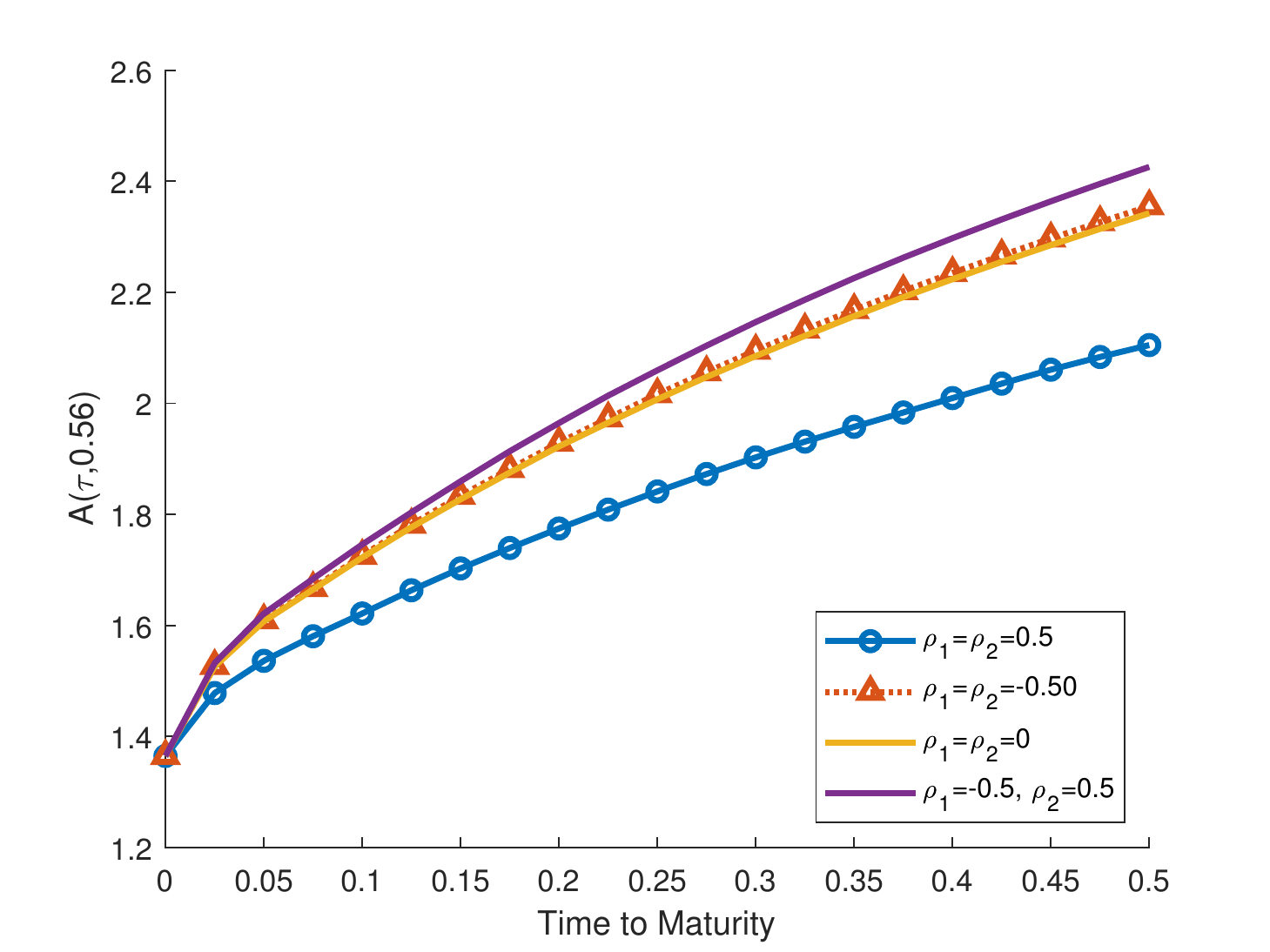}}
\subfloat[$\rho_w$]{
	\includegraphics[width = 0.4\linewidth]{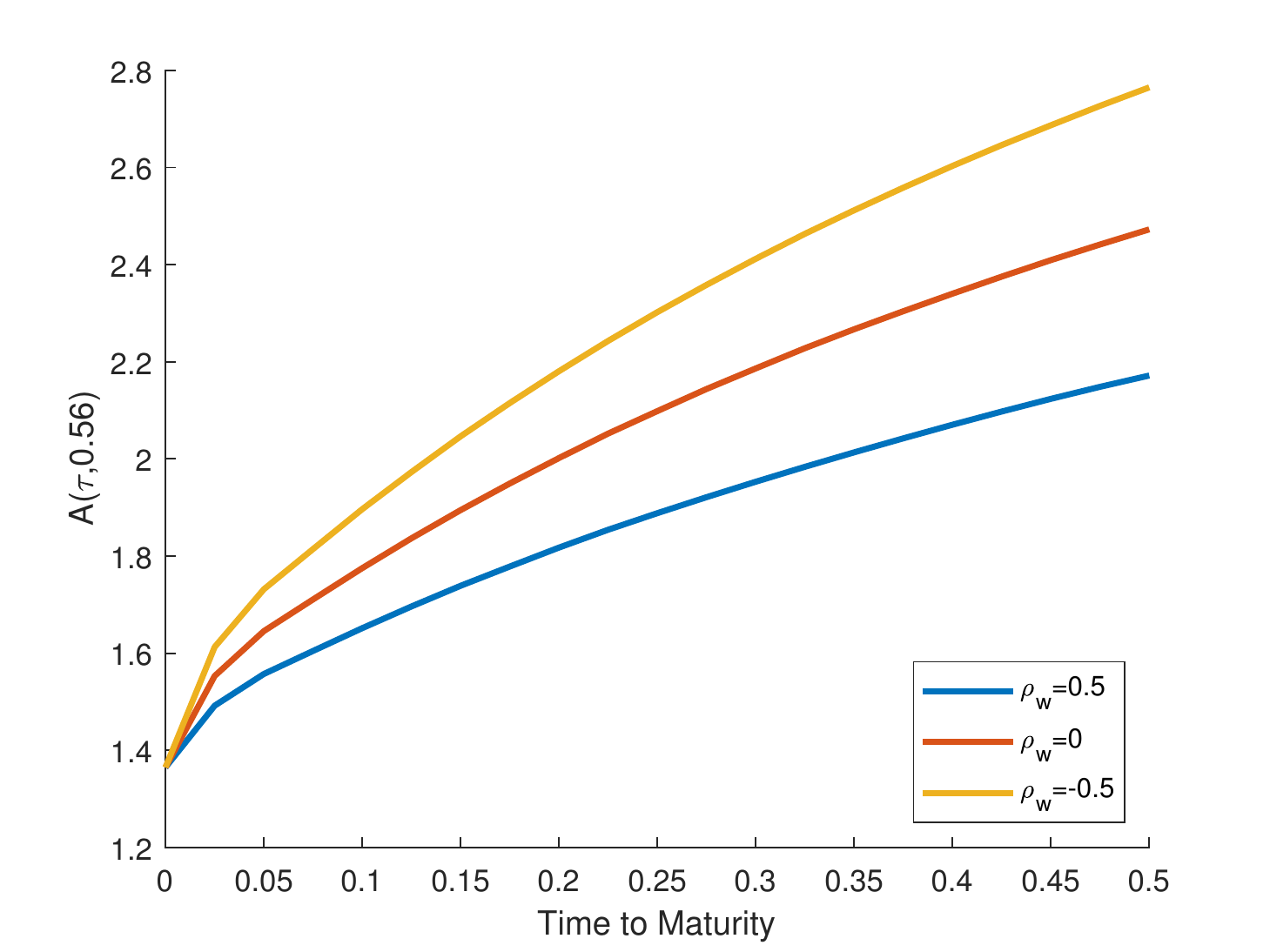}}
	
\subfloat[$\sigma_1$ and $\sigma_2$]{
	\includegraphics[width = 0.4\linewidth]{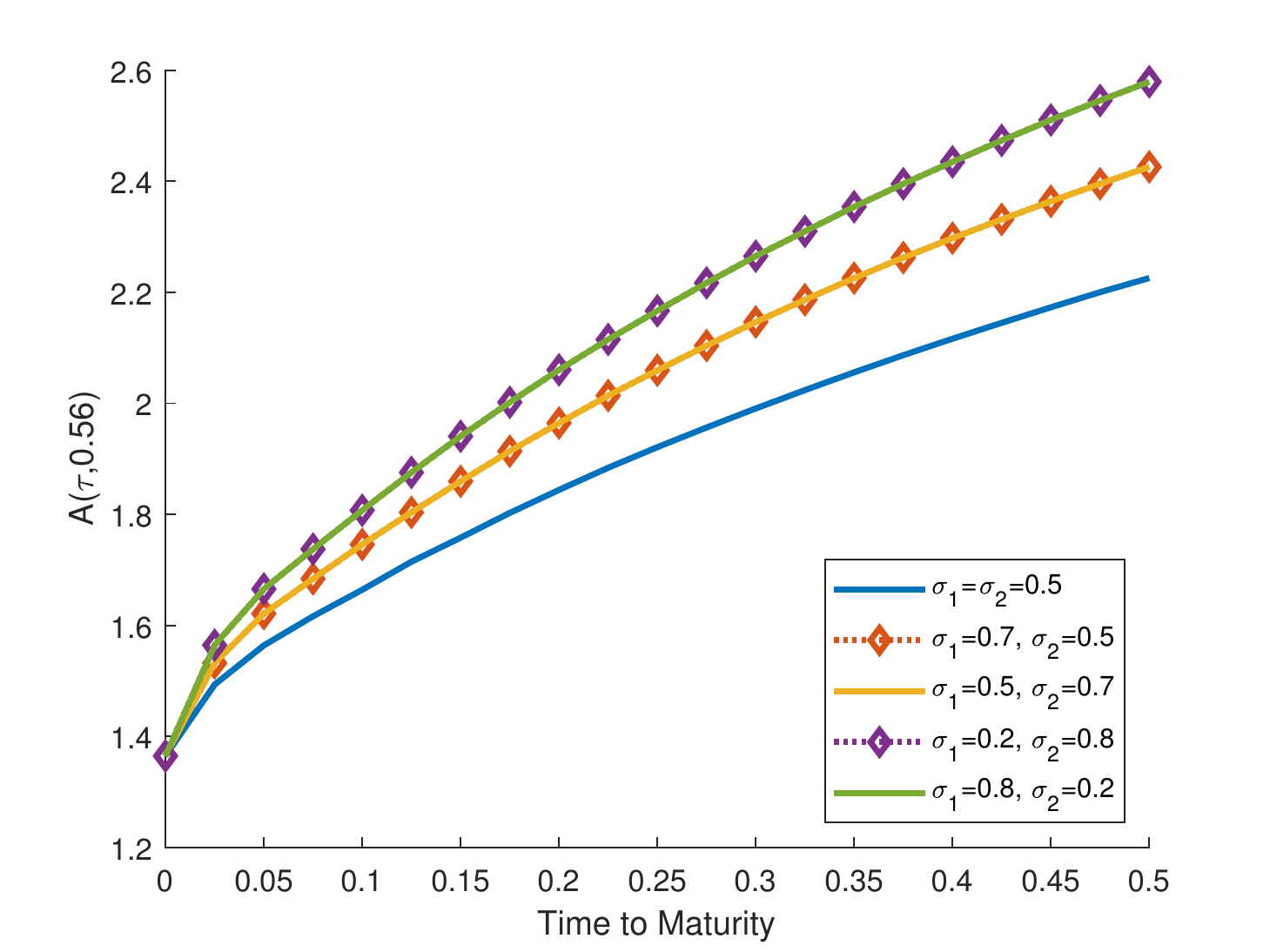}}
\subfloat[$\tilde{\lambda}_1$ and $\tilde{\lambda}_2$]{
	\includegraphics[width = 0.4\linewidth]{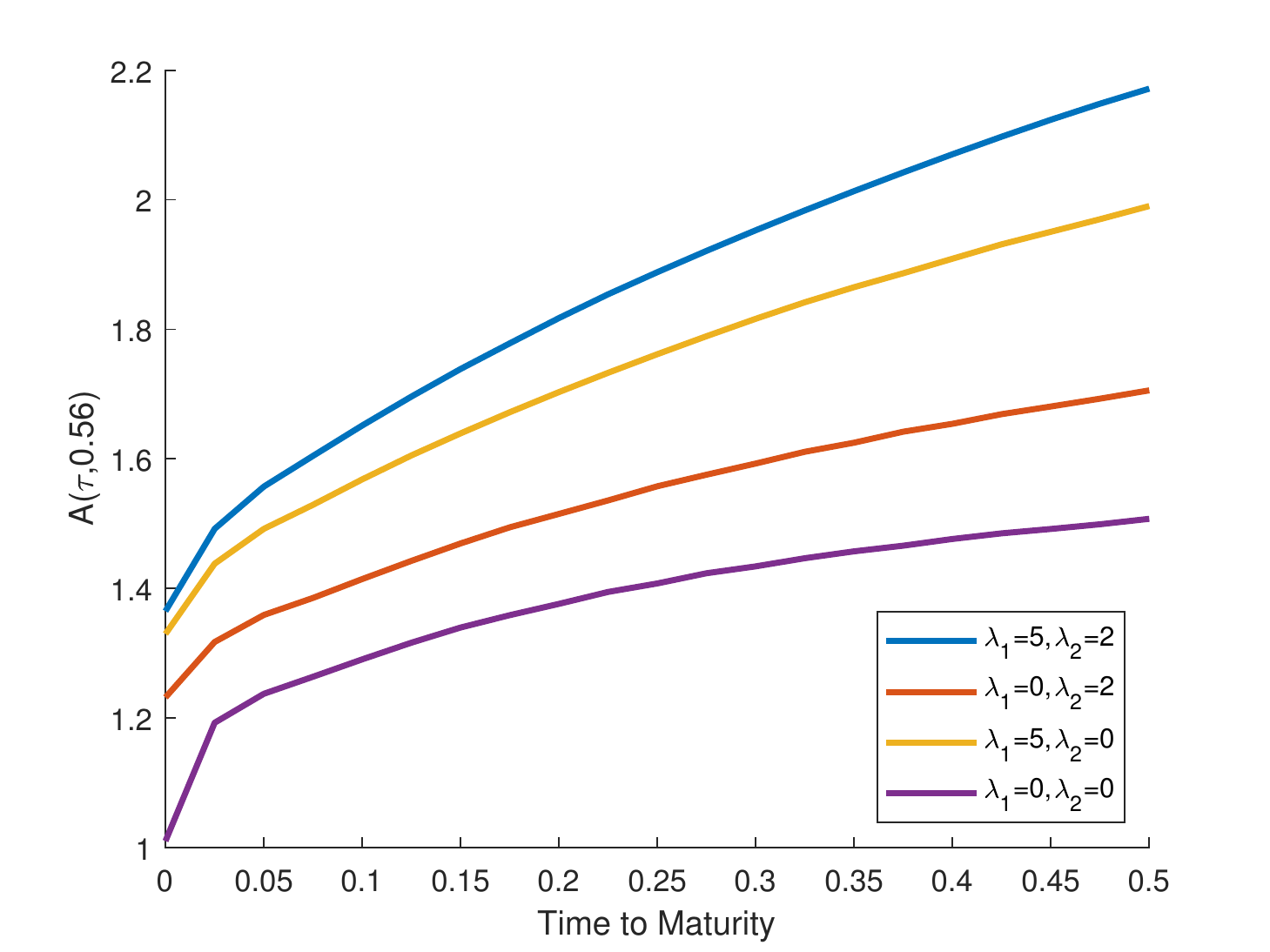}}

\subfloat[$\xi$]{
	\includegraphics[width = 0.4\linewidth]{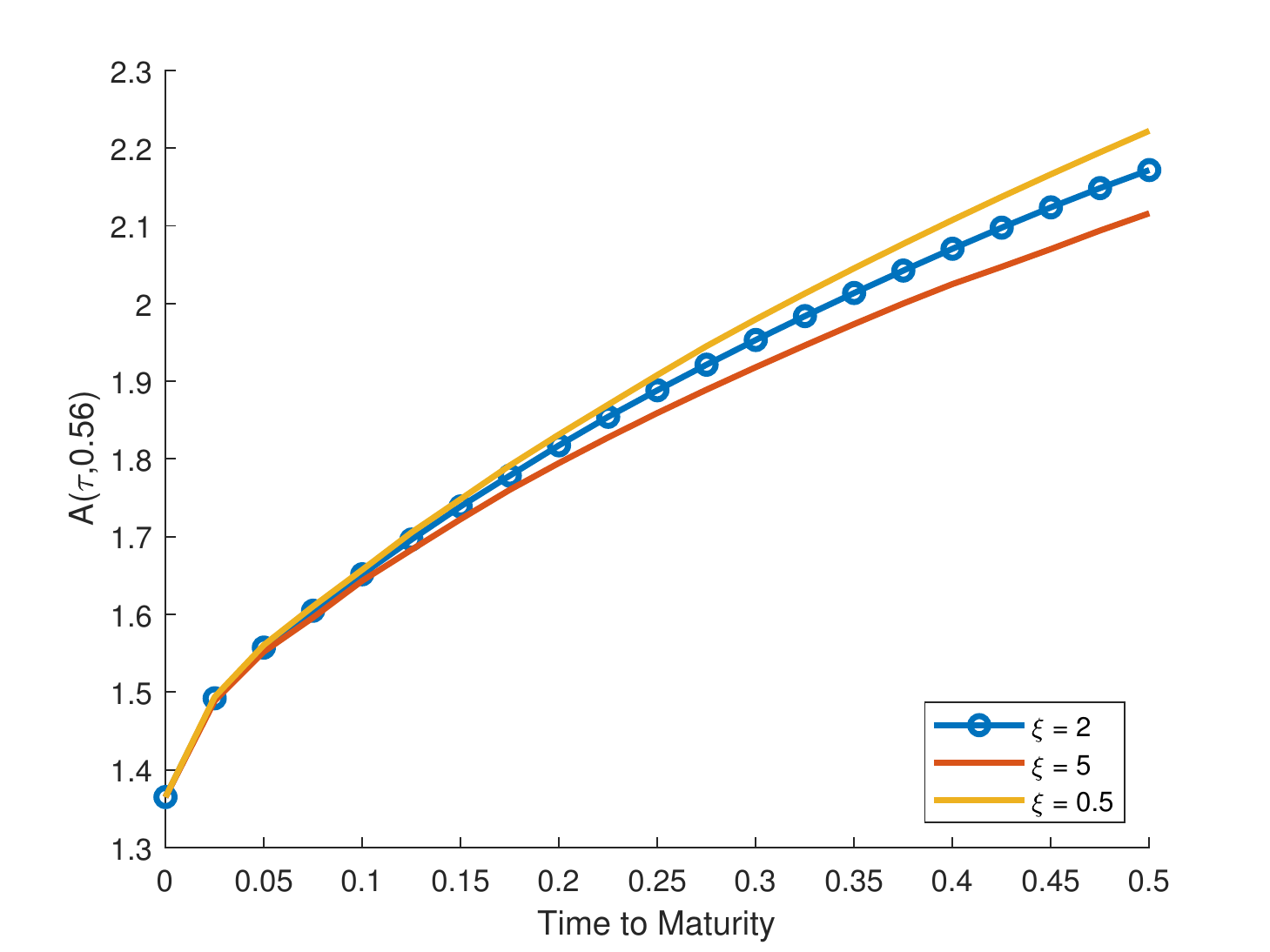}}
	\subfloat[$\Lambda$]{
	\includegraphics[width = 0.4\linewidth]{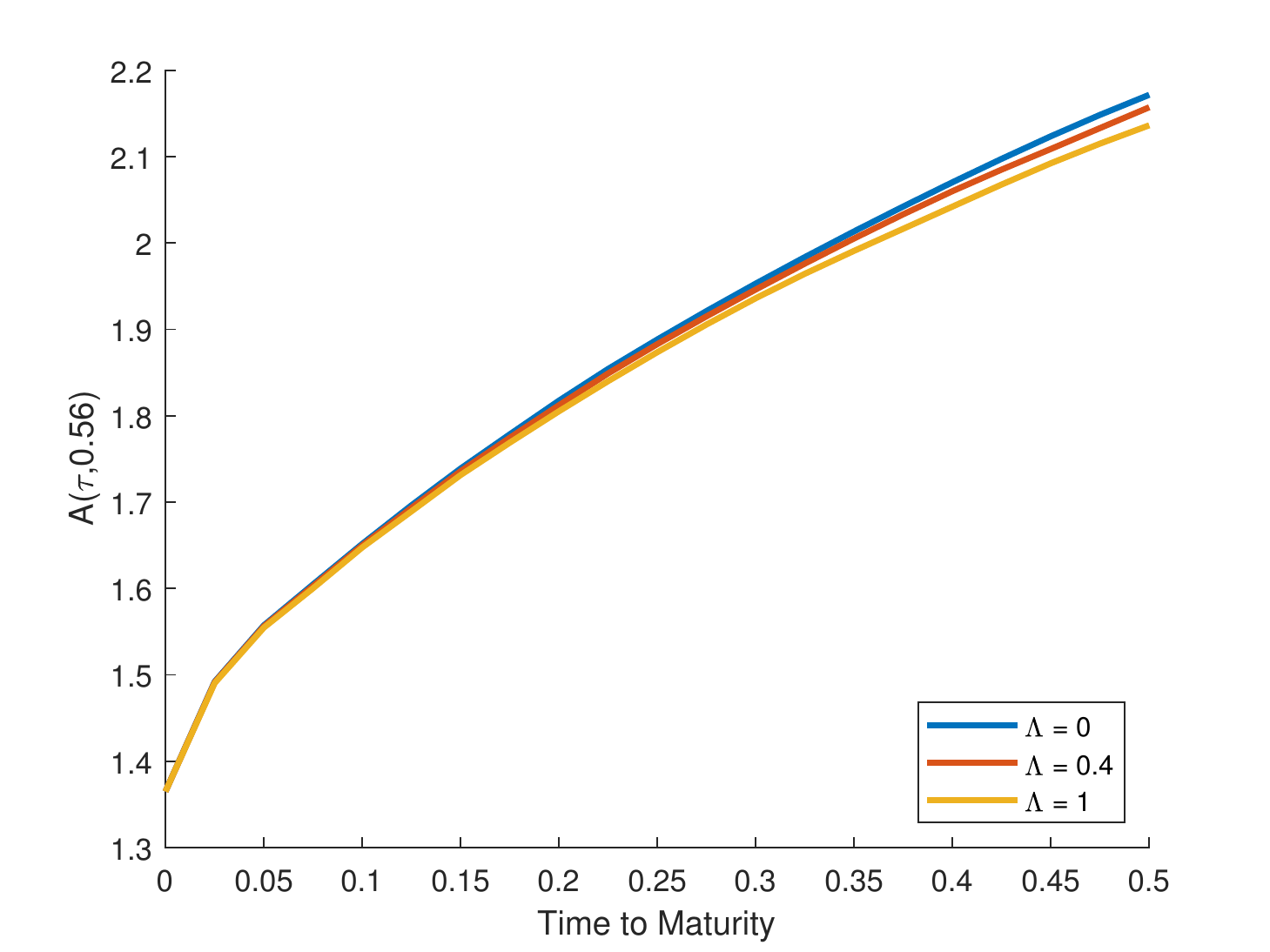}}


\caption{Effect of changes in model parameters on the early exercise boundary for $v=0.56$.}
\label{fig-ComStat-EEB}
\end{figure}

Figure \ref{fig-ComStat-EEB} shows the effect of the model parameters on the early exercise boundary. In all panels except for the fourth, we note that all boundary curves start at the limit at maturity ($\tau\to 0^+$) as calculated using the method in Section \ref{sec-PutCall-EEBLimit} ($A(0^+,0.56)\approx 1.4$ for the parameters in Table \ref{tab-ParameterValues}). This is expected as the limit depends only on the dividend yields and the jump parameters, and not on diffusion and variance parameters. The results of the numerical comparative statics are enumerated below.
\begin{itemize}
	\item The exercise boundary is lowest when both asset prices are positively correlated to the variance process, and is highest when the assets have opposing directions of correlation with the variance process. The boundaries generated when the correlations are zero or are both negative are similar, but are consistently higher than the positive correlation case. 
	\item A negative correlation between the asset prices also generates a consistently higher boundary curve than when $\rho_w=0$ or $\rho_w>0$, the latter generating the smallest values. 
	\item The boundary curve pivots upward when the gap between $\sigma_1$ and $\sigma_2$ increases, although the effect is symmetric whether it is asset 1 or 2 that has a greater proportionality coefficient.
	\item The early exercise boundary is not as sensitive to the rate of mean reversion $\xi$ and the market price of volatility $\Lambda$ compared to the diffusion coefficients. The differences are more pronounced when the option is far from maturity. 
	\item Changing the value of $\omega$ produces negligible changes in the boundary curve and is thus not shown here. This is somewhat contrary to the results of \citet{Chiarella-2009} who found that the impact of stochastic volatility on the free boundary for ordinary American call options is more pronounced for higher values of $\omega$. Our conclusions may be due to the way volatility affects both the numerator and denominator of the asset yield ratio. Thus, there is a possibility that an increase in one asset price due to more volatile volatility may be canceled out by the same phenomenon in the other asset price.
\end{itemize}

As discussed in Section \ref{sec-PutCall-EEBLimit}, increasing the jump intensities $\tilde{\lambda}_1$ and $\tilde{\lambda}_2$ introduces upward shifts in the early exercise boundary (as opposed to the upward pivots as seen in the other simulations). If no jumps occur in both asset prices (i.e. when both asset prices are modelled as correlated stochastic volatility processes), the boundary at maturity is $\max\left\{1,q_2/q_1\right\}e^{(q_1-q_2)(T-T)} = 1$ and the entire curve lies much lower than when at least one of the asset prices has jumps. While it is not shown in Figure \ref{fig-ComStat-EEB}, the impact of the jump intensities is symmetric in the sense that, for example, the same boundary curve is generated when $\tilde{\lambda}_1=5, \tilde{\lambda}_2=0$ and when $\tilde{\lambda}_1=0, \tilde{\lambda}_2=5$. It is also expected that changing the jump size density parameters (in our case, the mean and variance of the normal distribution governing $Y_1$ and $Y_2$) will also introduce upward or downward shifts in the boundary curve as these also directly affect the limit of the boundary at maturity. A similar effect may also be observed when assuming a different distribution for the jump sizes, e.g. the double exponential jumps assumed by \citet{Kou-2002}.

We now proceed to numerical comparative statics on the discounted European and American prices. Here, we display the difference when the price generated after modifying a given parameter is subtracted from the option price solved with the parameters in Table \ref{tab-ParameterValues}. A common observation is that price differences vanish for deeply in-the-money American exchange options because of the value-matching condition. The point at which the price differences meet the baseline is the exercise boundary, which is more clearly explicated in Figure \ref{fig-ComStat-EEB}. In contrast, the price differences for deeply in-the-money European options dissipate more slowly, if at all as can be seen in some simulations.

\begin{figure}
\centering
\subfloat[European exchange option]{
	\includegraphics[width = 0.4\linewidth]{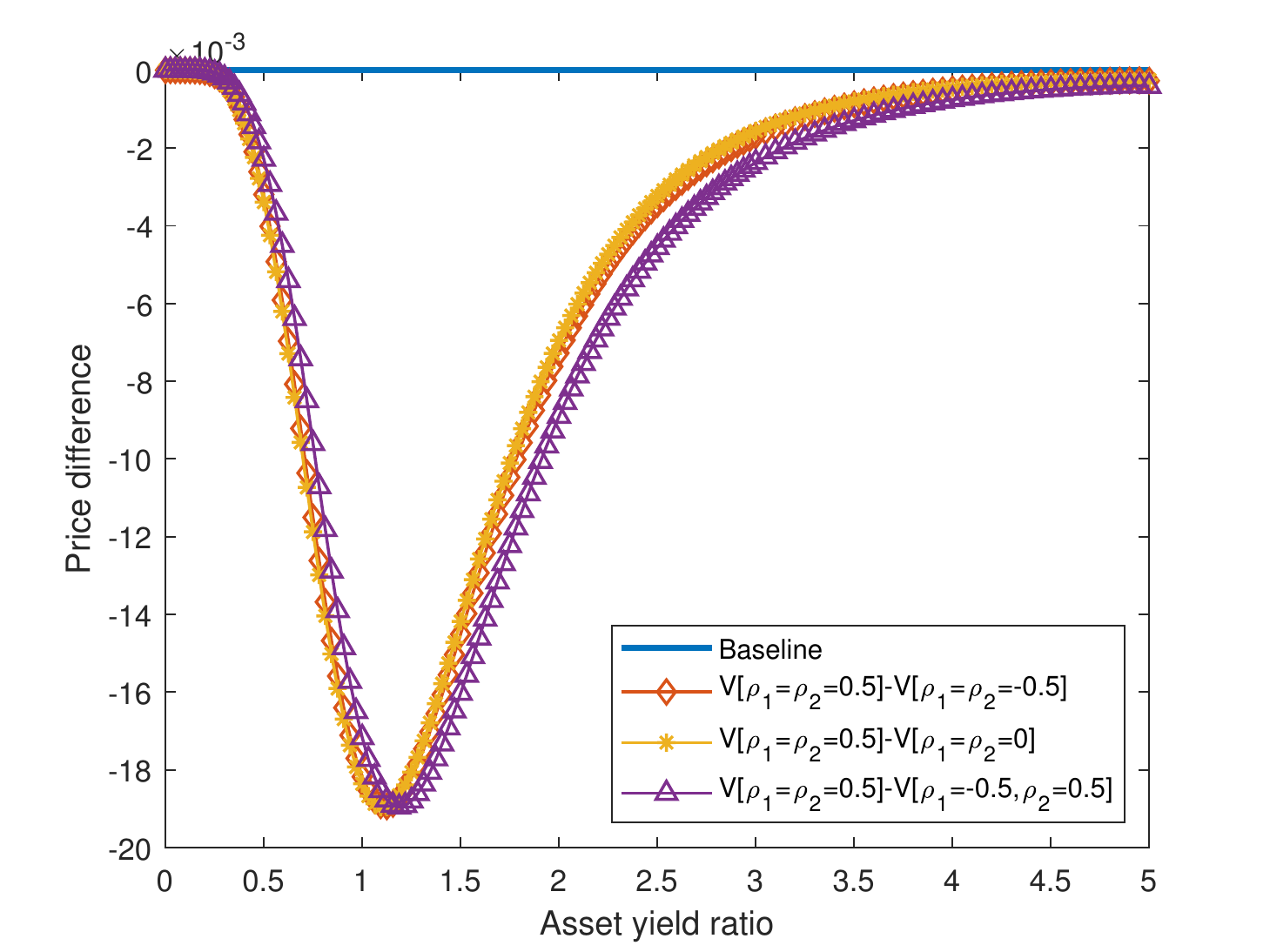}}
\subfloat[American exchange option]{
	\includegraphics[width = 0.4\linewidth]{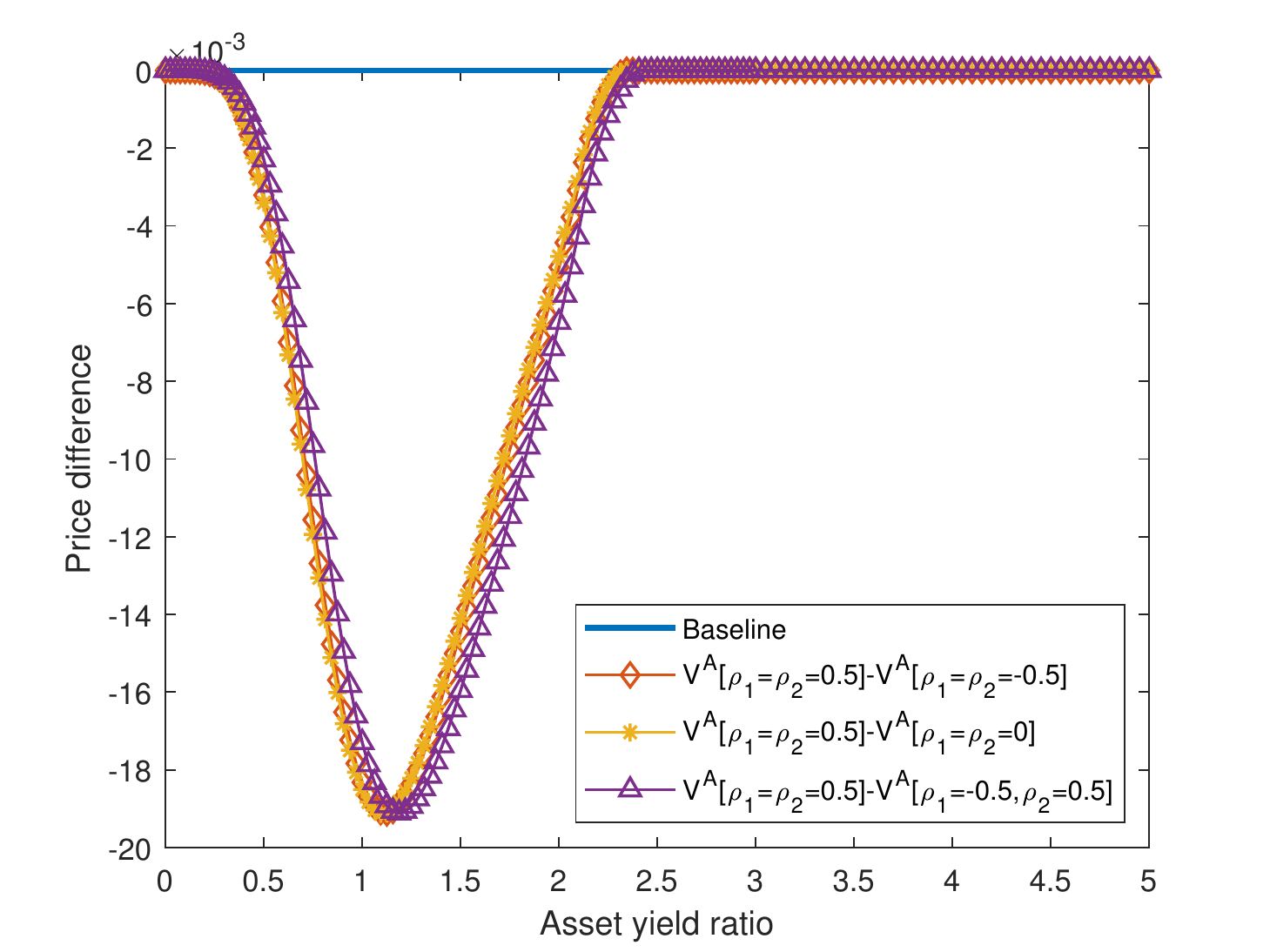}}
\caption{Price differences in European and American exchange option prices for various values of the correlations $\rho_1$ and $\rho_2$ between the asset price and the variance process at $t=0$ and $v=0.56$.}
\label{fig-ComStat-Rho1}
\end{figure}

\begin{figure}
\centering
\subfloat[European exchange option]{
	\includegraphics[width = 0.4\linewidth]{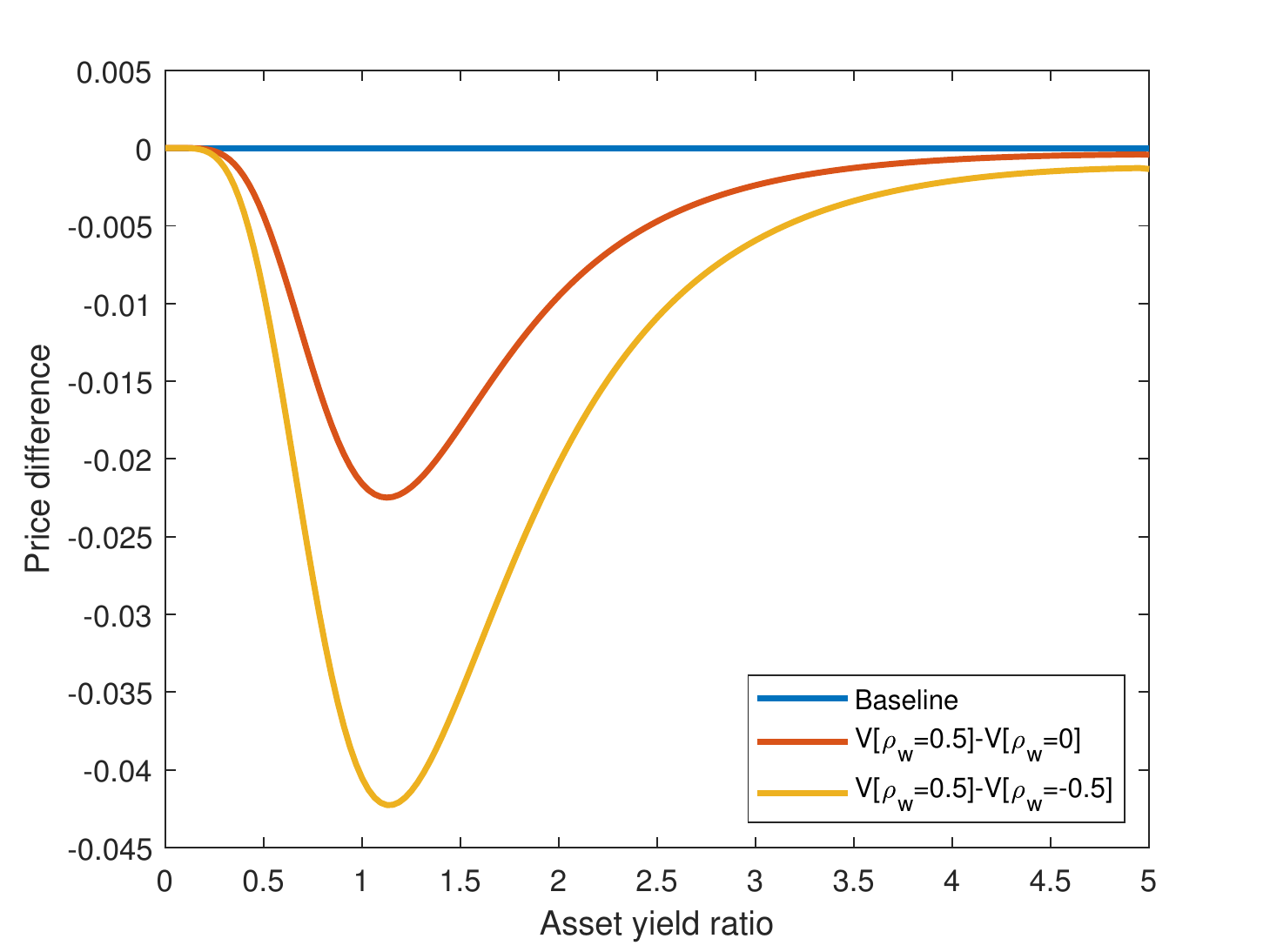}}
\subfloat[American exchange option]{
	\includegraphics[width = 0.4\linewidth]{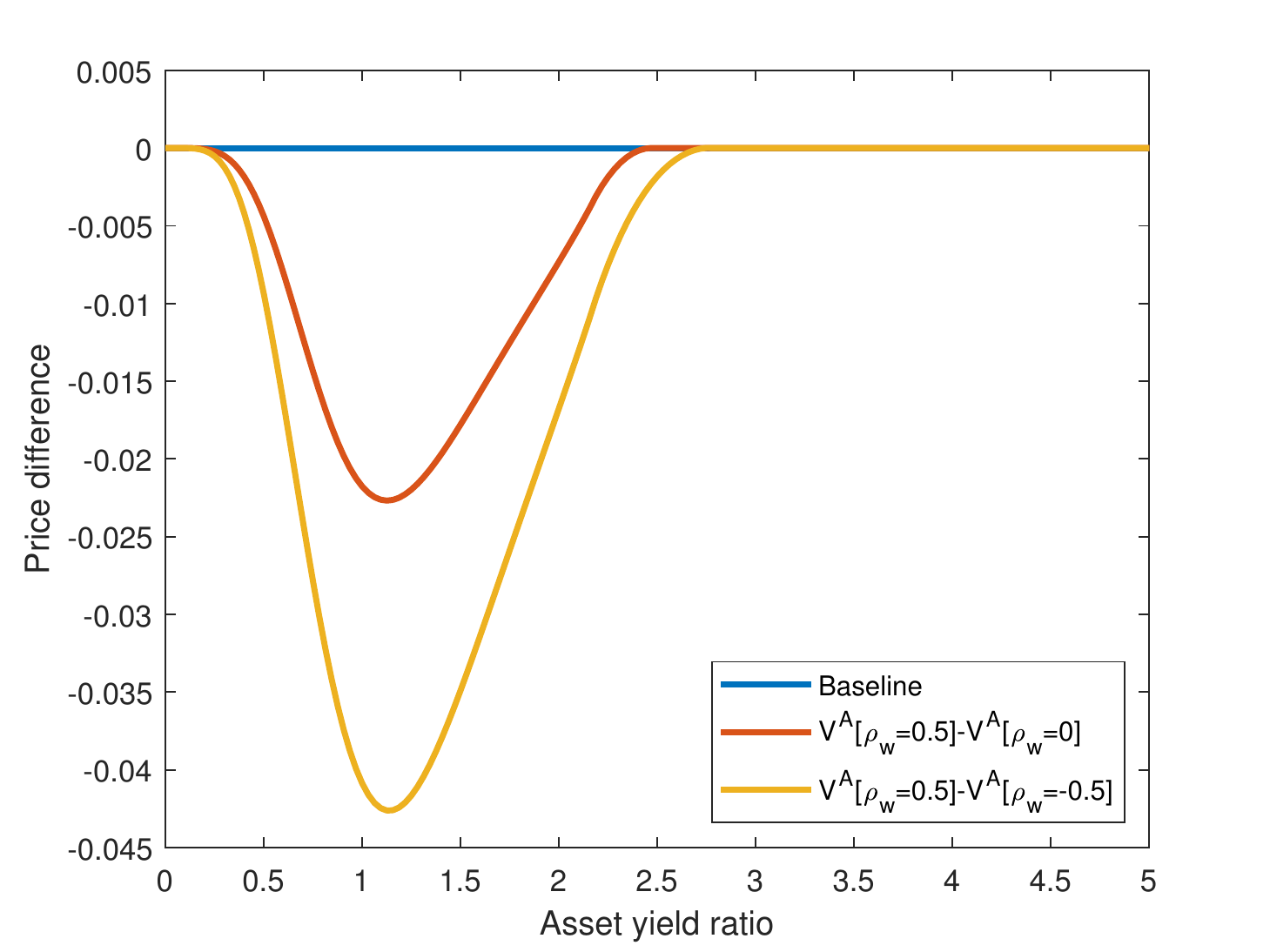}}
\caption{Price differences in European and American exchange option prices for various values of the correlation $\rho_w$ between the asset price processes at $t=0$ and $v=0.56$.}
\label{fig-ComStat-RhoW}
\end{figure}

Figures \ref{fig-ComStat-Rho1} and \ref{fig-ComStat-RhoW} show the effect of the correlation coefficients on the discounted exchange option prices. In these simulations, we find that deviating from the default correlations in Table \ref{tab-ParameterValues} resulted to higher MOL prices for both types of options. However, the price differences do not vary drastically between assumed alternative values for correlations with the Wiener process in the variance process. In contrast, price differences between cases are more pronounced when the correlation between the asset prices is changed. In particular, both types of options are more expensive when the asset prices are negatively correlated to each other, as there is a higher tendency for larger spreads between the prices of the two assets. From Figure \ref{fig-ComStat-RhoW}, option prices are lowest when the assets are positively correlated with each other. In both analyses, the price differences are maximal when the options are at-the-money.

\begin{figure}
\centering
\subfloat[European exchange option]{
	\includegraphics[width = 0.4\linewidth]{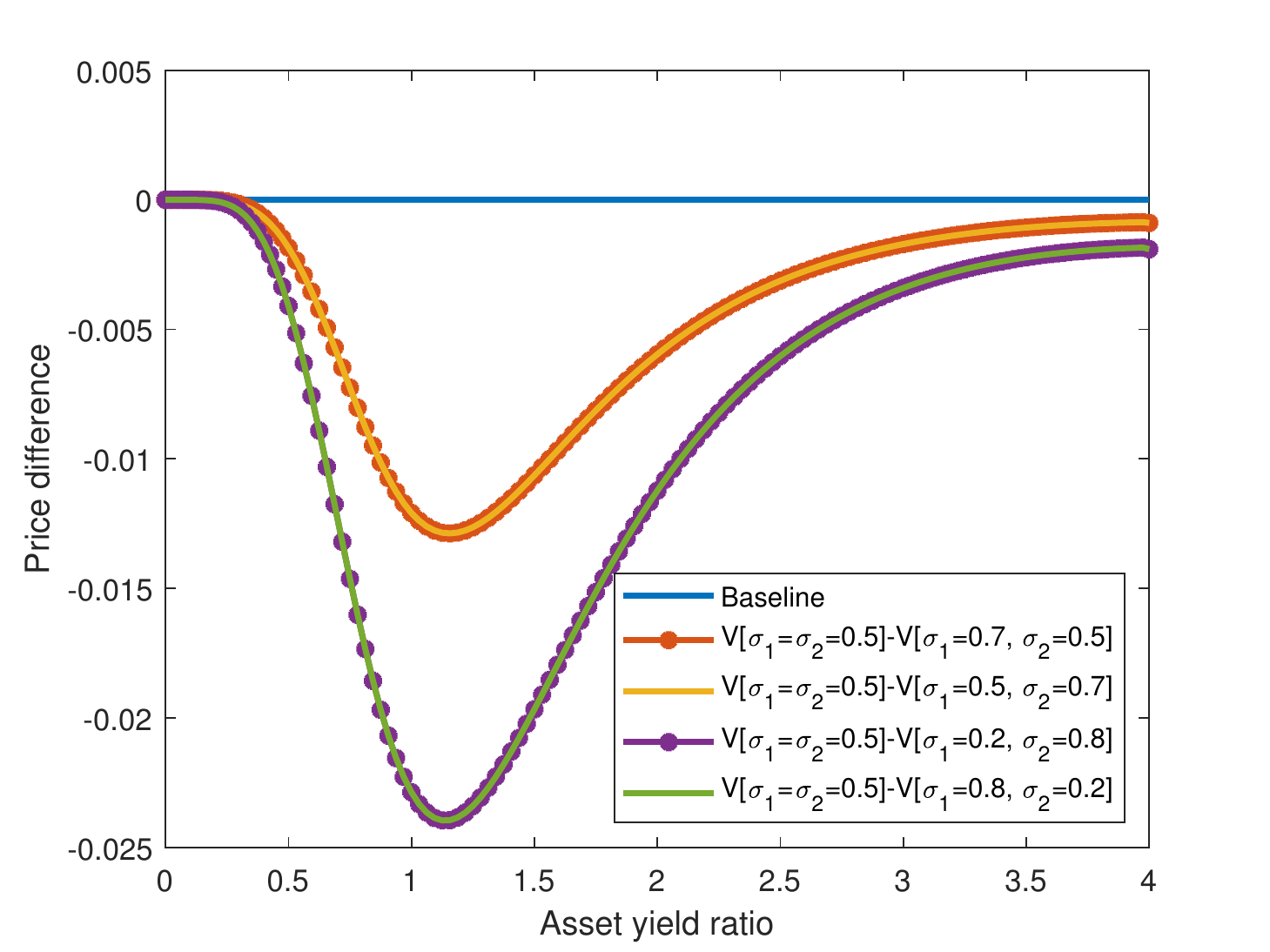}}
\subfloat[American exchange option]{
	\includegraphics[width = 0.4\linewidth]{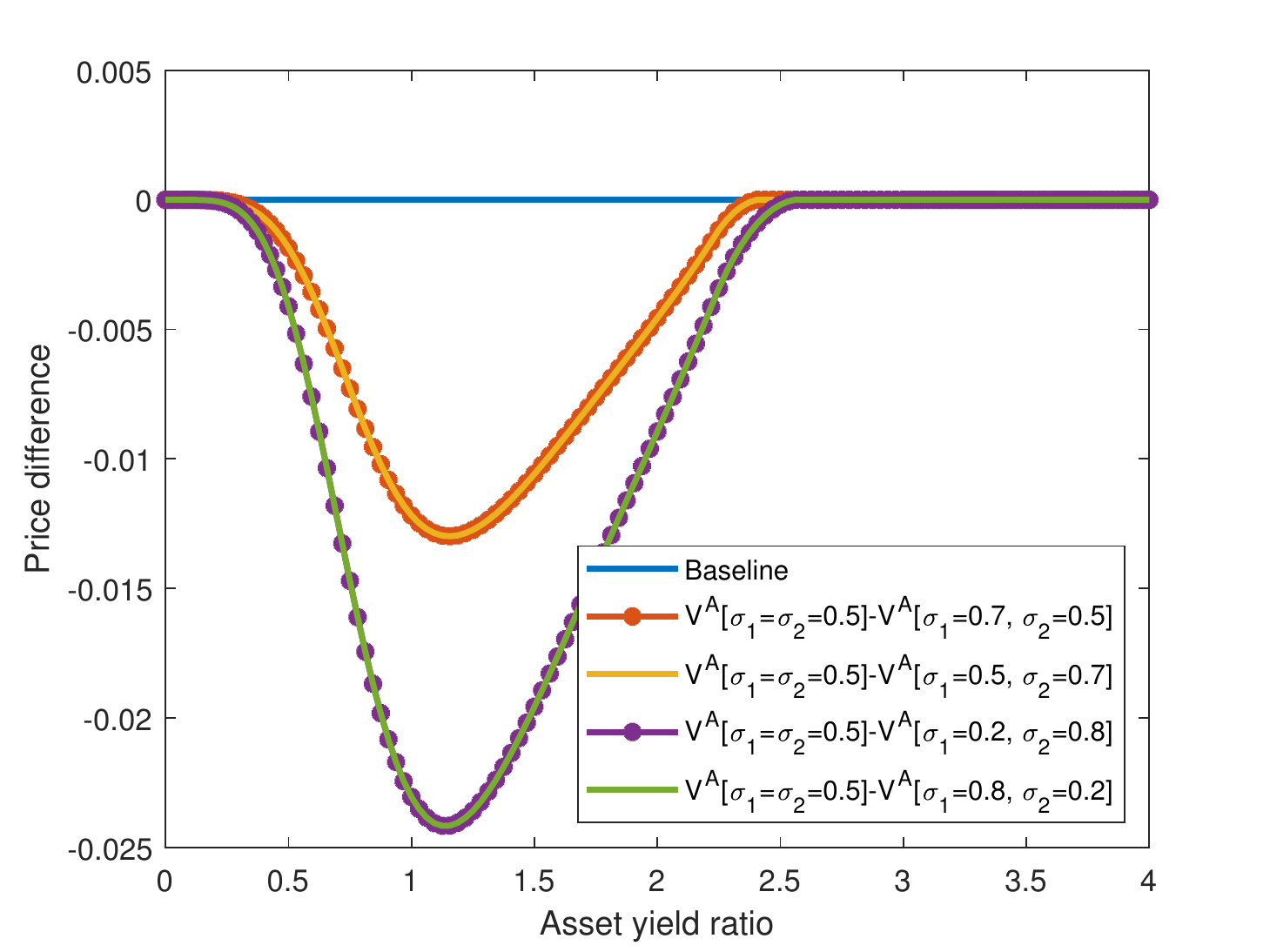}}
\caption{Price differences in European and American exchange option prices for various values of the diffusion coefficients $\sigma_1$ and $\sigma_2$ of the asset price processes at $t=0$ and $v=0.56$.}
\label{fig-ComStat-Sigma}
\end{figure}

Option prices also tend to increase when the gap between the proportionality coefficients $\sigma_1$ and $\sigma_2$ in the diffusion term of the asset price dynamics increases, as indicated by Figure \ref{fig-ComStat-Sigma}. Similar to the analysis on the early exercise boundary, it does not matter which asset is more sensitive to the instantaneous variance as the effect on option prices is symmetric. We note however that there is a noticeable price difference for deeply in-the-money European options.

\begin{figure}
\centering
\subfloat[European exchange option]{
	\includegraphics[width = 0.4\linewidth]{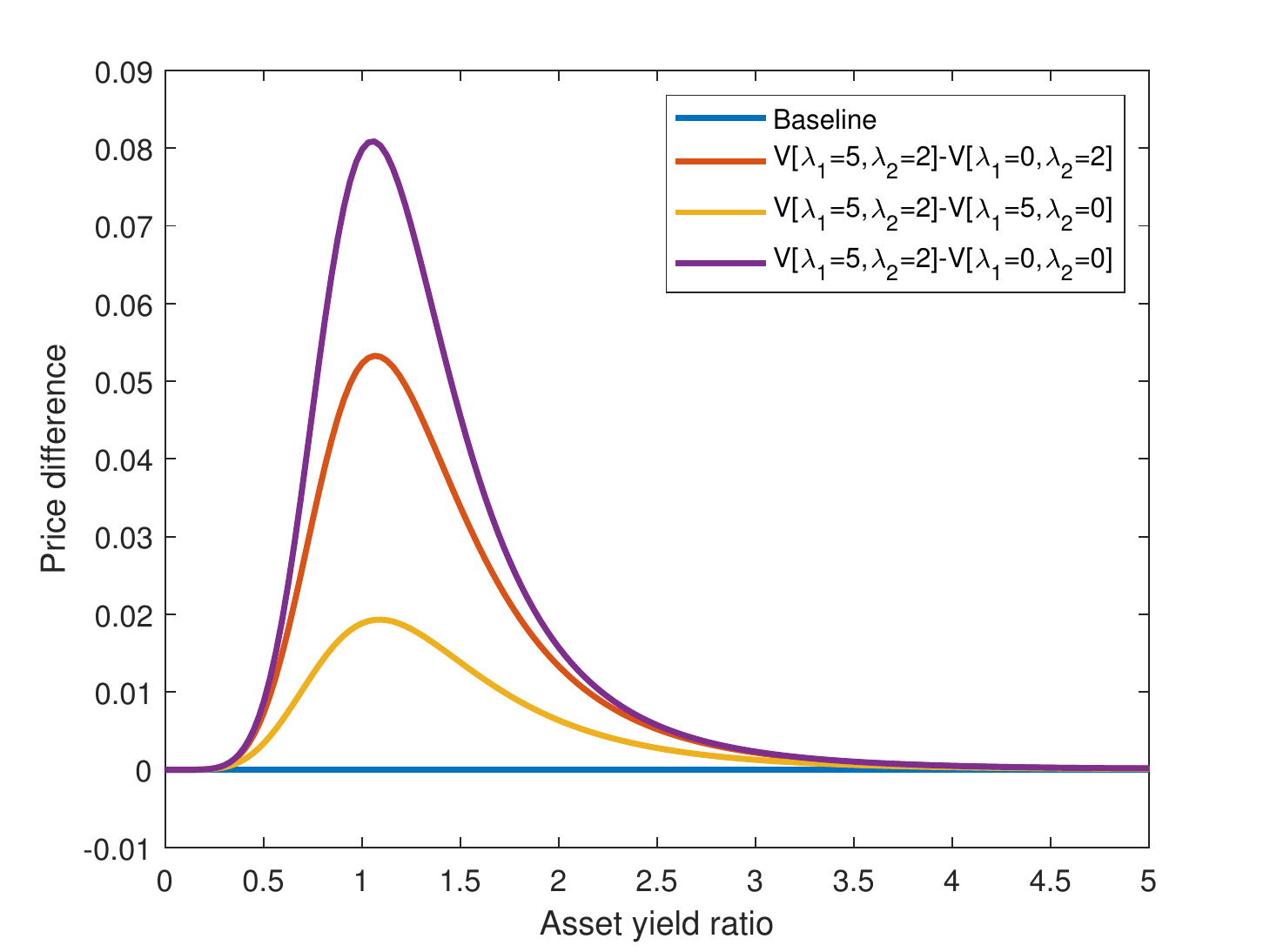}}
\subfloat[American exchange option]{
	\includegraphics[width = 0.4\linewidth]{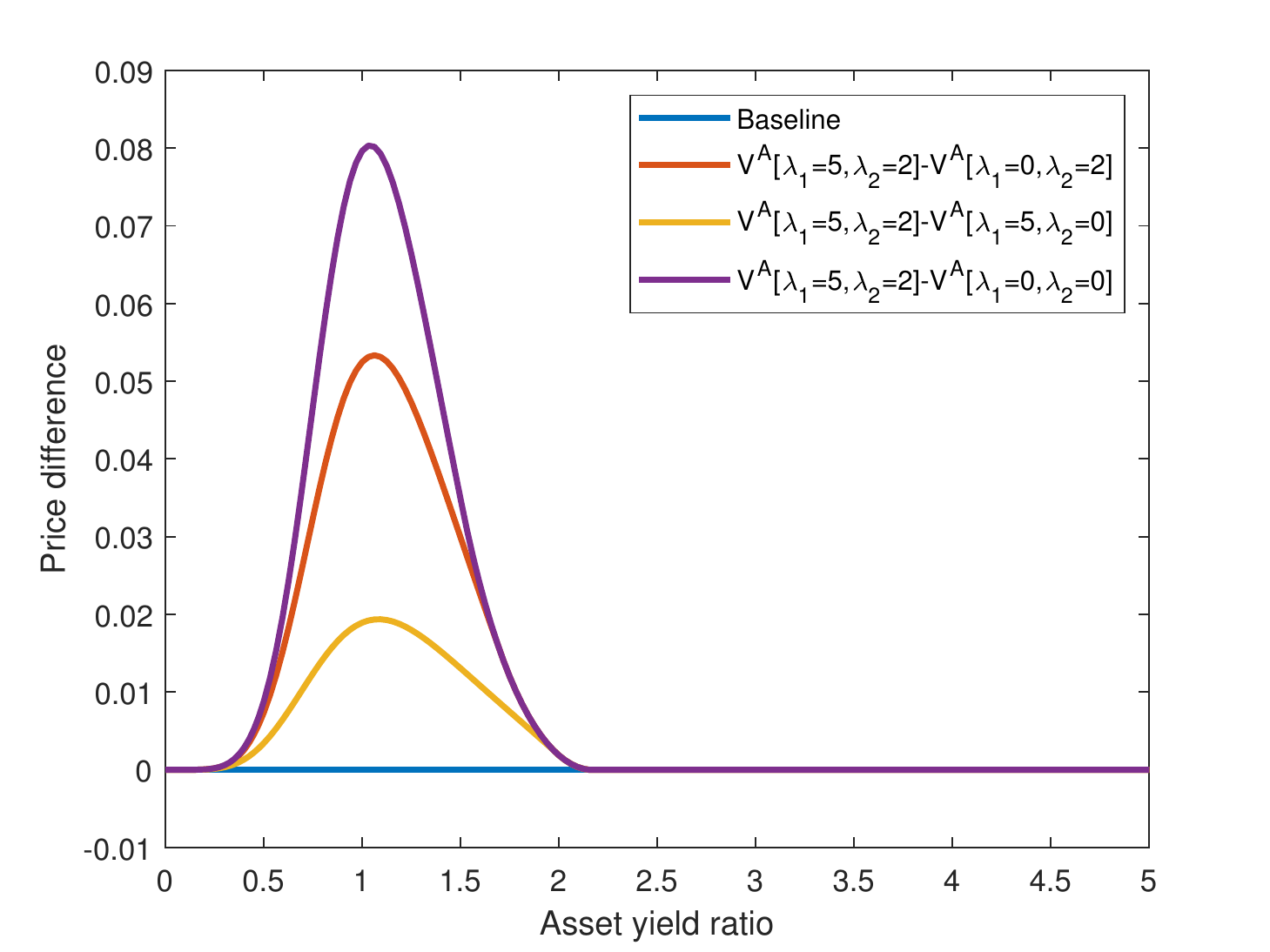}}
\caption{Price differences in European and American exchange option prices for various values of the jump intensity parameters $\tilde{\lambda}_1$ and $\tilde{\lambda}_2$ at $t=0$ and $v=0.56$.}
\label{fig-CompStat-Lambda}
\end{figure}

Figure \ref{fig-CompStat-Lambda} shows how the jump intensity rates affect the discounted exchange option prices. The baseline reference, where it is possible for both assets prices to jump, is priced the highest. The largest difference is observed between the baseline and the stochastic volatility case, where neither asset price jumps. While it is not shown here, we also observe the same symmetry in how $\tilde{\lambda}_1$ and $\tilde{\lambda}_2$ affect option prices, as was observed for the early exercise boundary. In the comparative static analysis presented in this section, changing the jump intensities result in the largest differences in option prices, reaching up to 0.08 when the option is at-the-money. The magnitude of differences may be affected, however, by the other choices for the jump size distribution and/or its parameters. 

\begin{figure}
\centering
\subfloat[European exchange option]{
	\includegraphics[width = 0.4\linewidth]{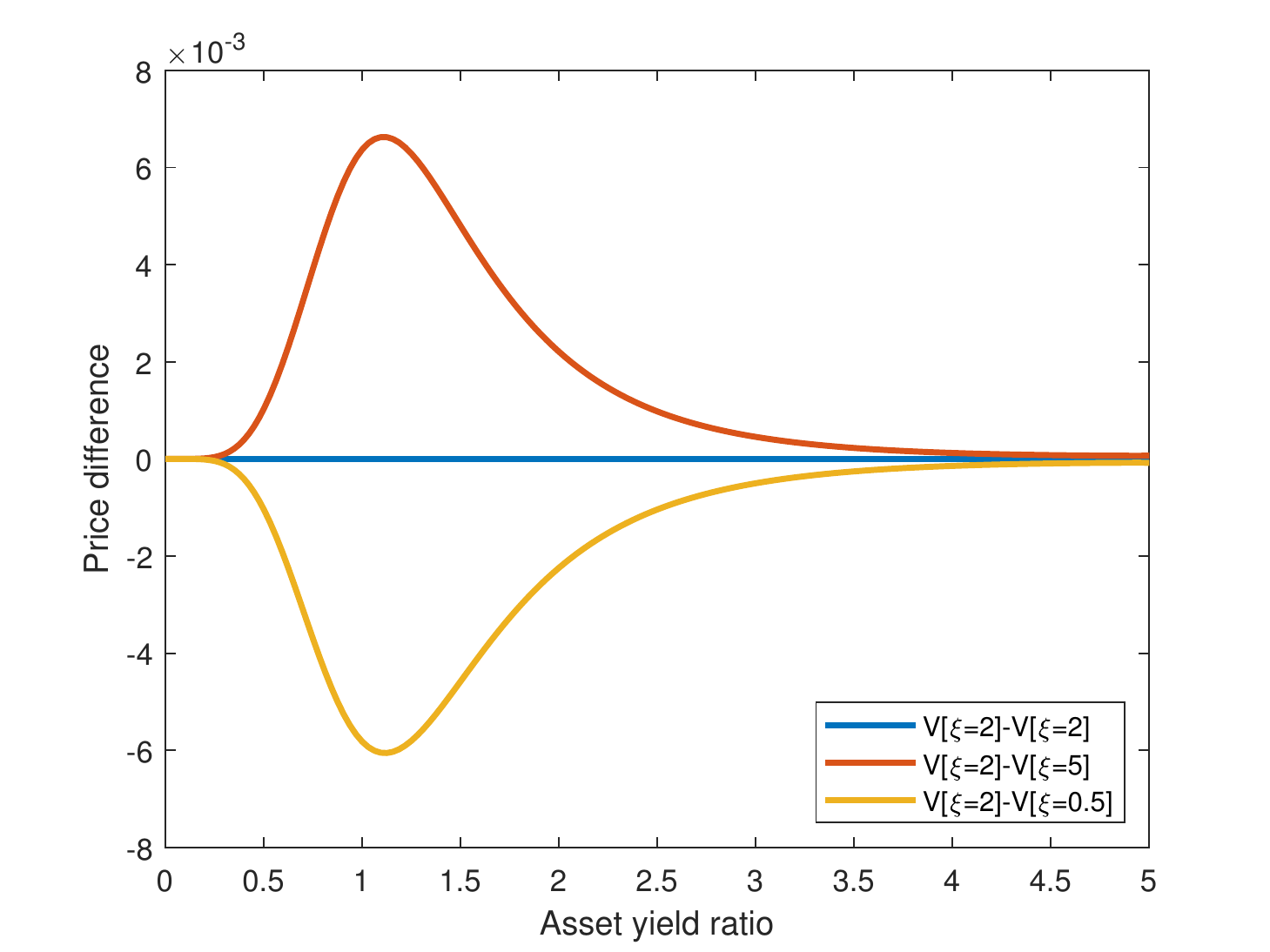}}
\subfloat[American exchange option]{
	\includegraphics[width = 0.4\linewidth]{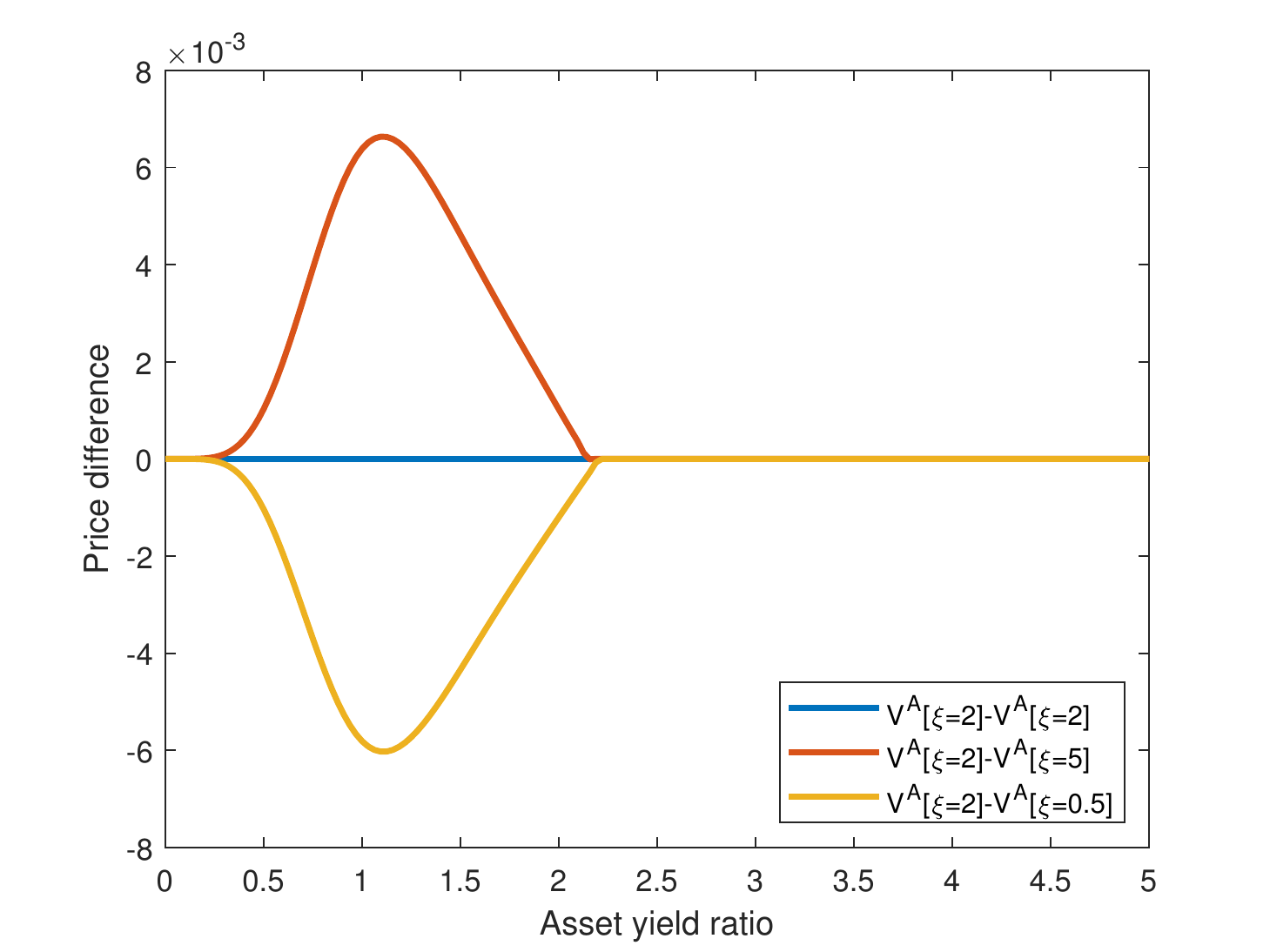}}
\caption{Price differences in European and American exchange option prices for various values of the rate $\xi$ of mean reversion of the variance process at $t=0$ and $v=0.56$.}
\label{fig-ComStat-Xi}
\end{figure}

\begin{figure}
\centering
\subfloat[European exchange option]{
	\includegraphics[width = 0.4\linewidth]{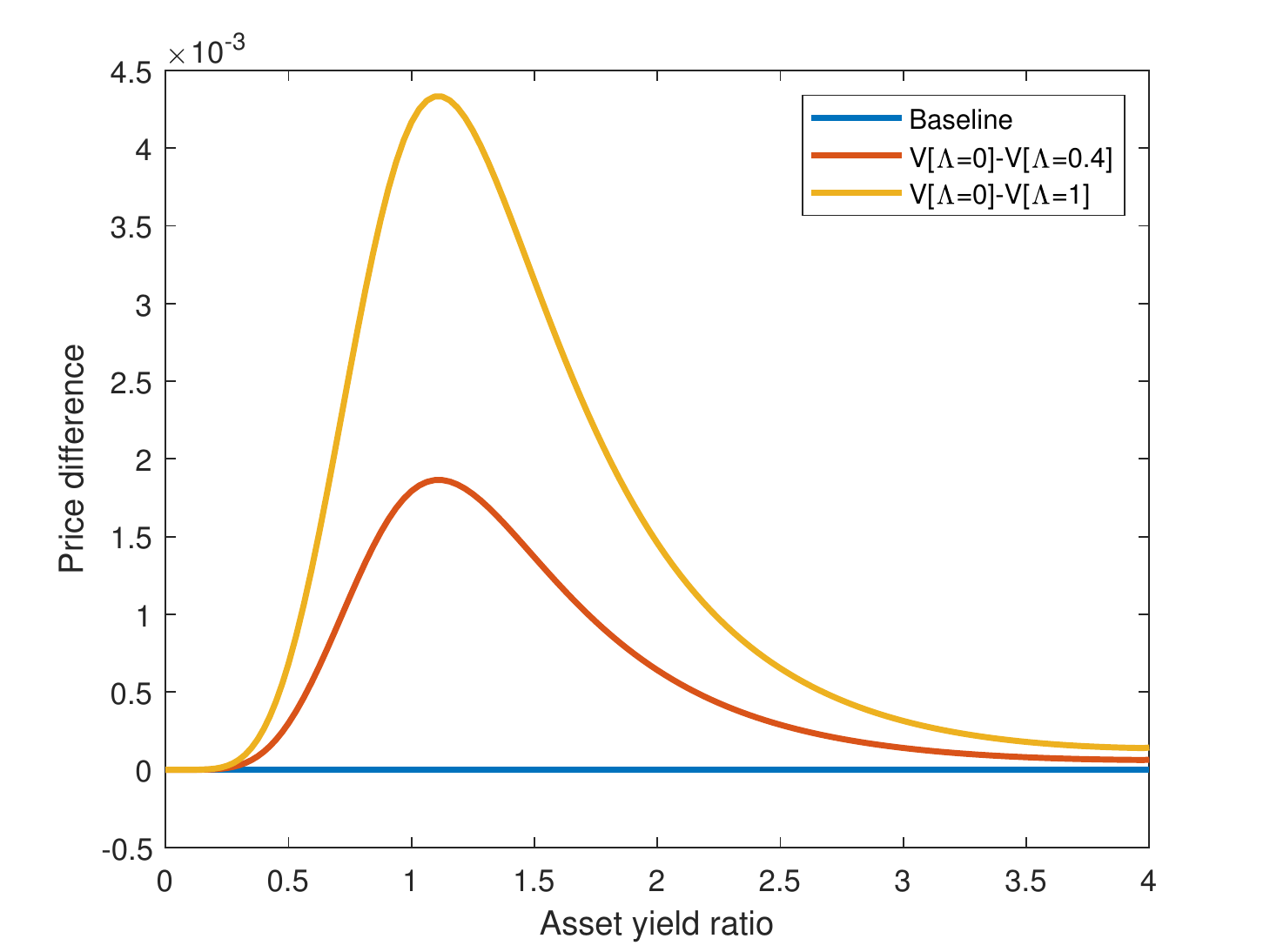}}
\subfloat[American exchange option]{
	\includegraphics[width = 0.4\linewidth]{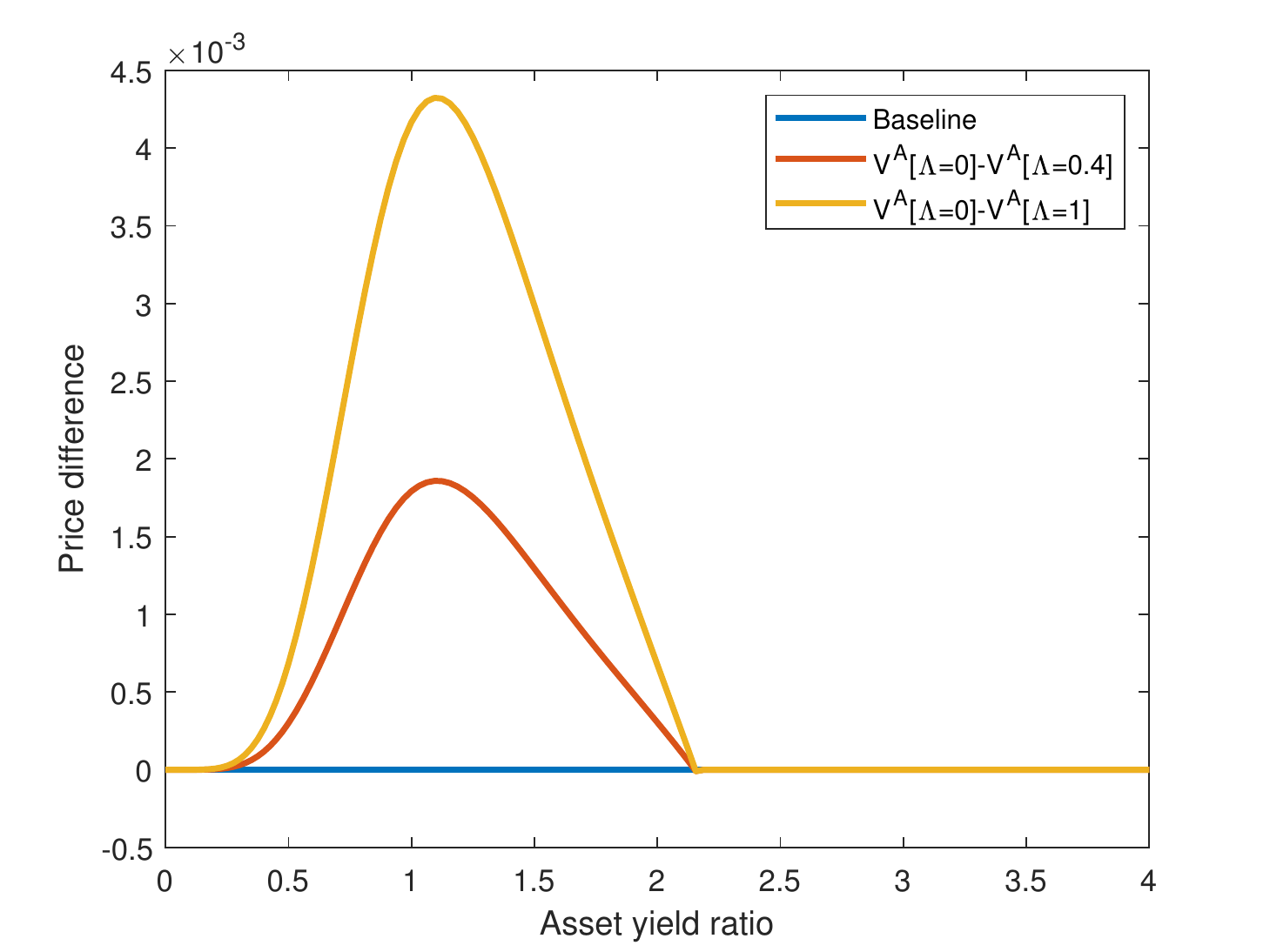}}
\caption{Price differences in European and American exchange option prices for various values of the market price of volatility risk parameter $\Lambda$ at $t=0$ and $v=0.56$.}
\label{fig-ComStat-MktVol}
\end{figure}

\begin{figure}
\centering
\subfloat[European exchange option]{
	\includegraphics[width = 0.4\linewidth]{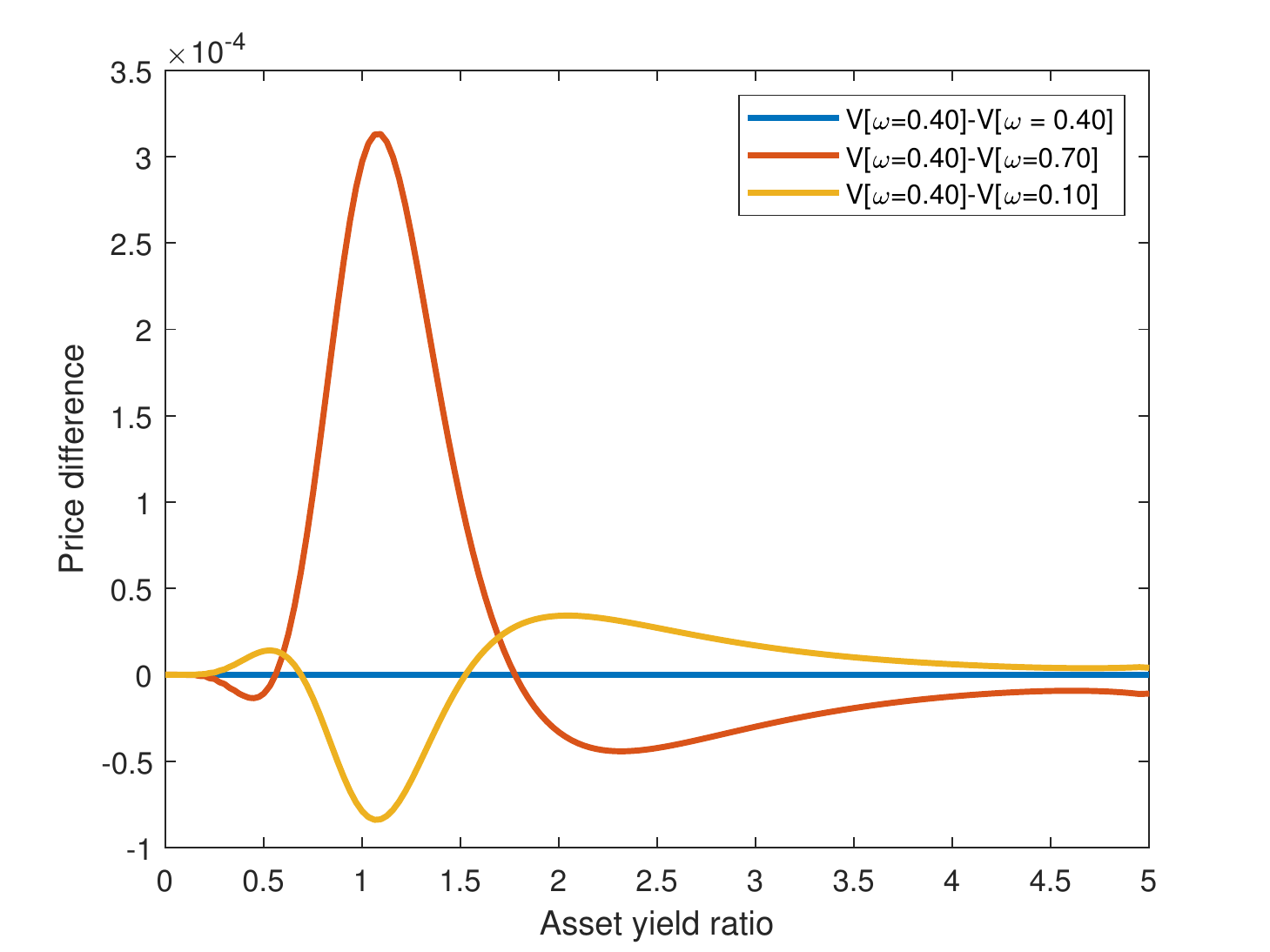}}
\subfloat[American exchange option]{
	\includegraphics[width = 0.4\linewidth]{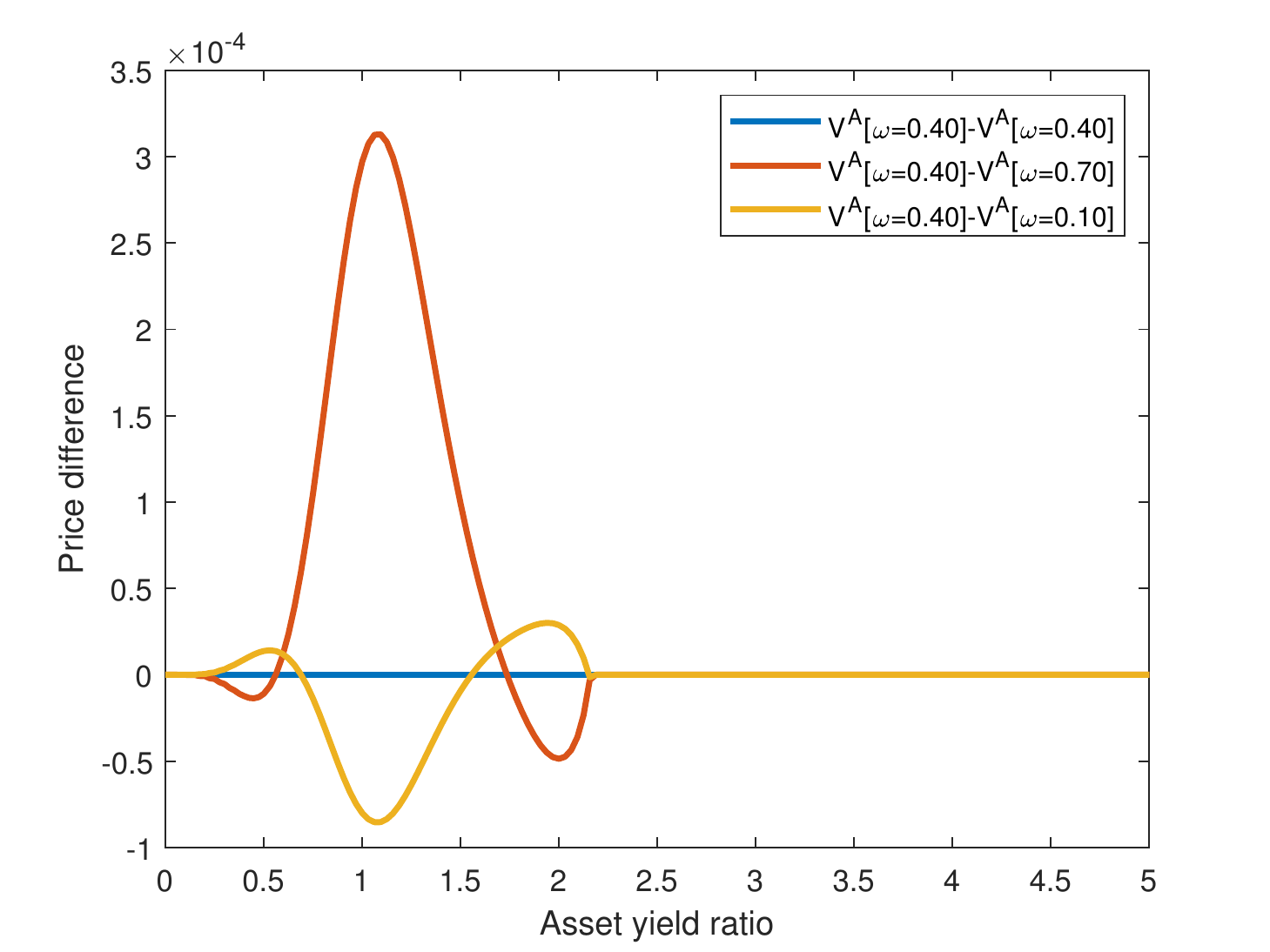}}
\caption{Price differences in European and American exchange option prices for various values of the volatility of volatility $\omega$ at $t=0$ and $v=0.56$.}
\label{fig-ComStat-Omega}
\end{figure}

Figures \ref{fig-ComStat-Xi}, \ref{fig-ComStat-MktVol}, and \ref{fig-ComStat-Omega} exhibit the price differences resulting from varying the values of the stochastic volatility parameters. From Figure \ref{fig-ComStat-Xi}, it can be seen that increasing the rate of mean reversion tends to decrease the discounted exchange option prices. The same conclusion can be drawn for the market price of volatility, although there is a slight price difference even when the European option is deep in-the-money. 

As seen from Figure \ref{fig-ComStat-Omega}, however, the effect of the volatility of volatility is less straightforward. Increasing $\omega$ (relative to the default $\omega=0.40$) results to prices which are lower when the option is near-the-money but higher when the option is either deeply out-of-the-money or in-the-money. The reverse is true for when the volatility of volatility is decreased. There also seems to be a noticeable difference for deeply-in-the-money European exchange options.

\begin{figure}
\centering
\subfloat[$\rho_w = 0.5$]{
	\includegraphics[width = 0.4\linewidth]{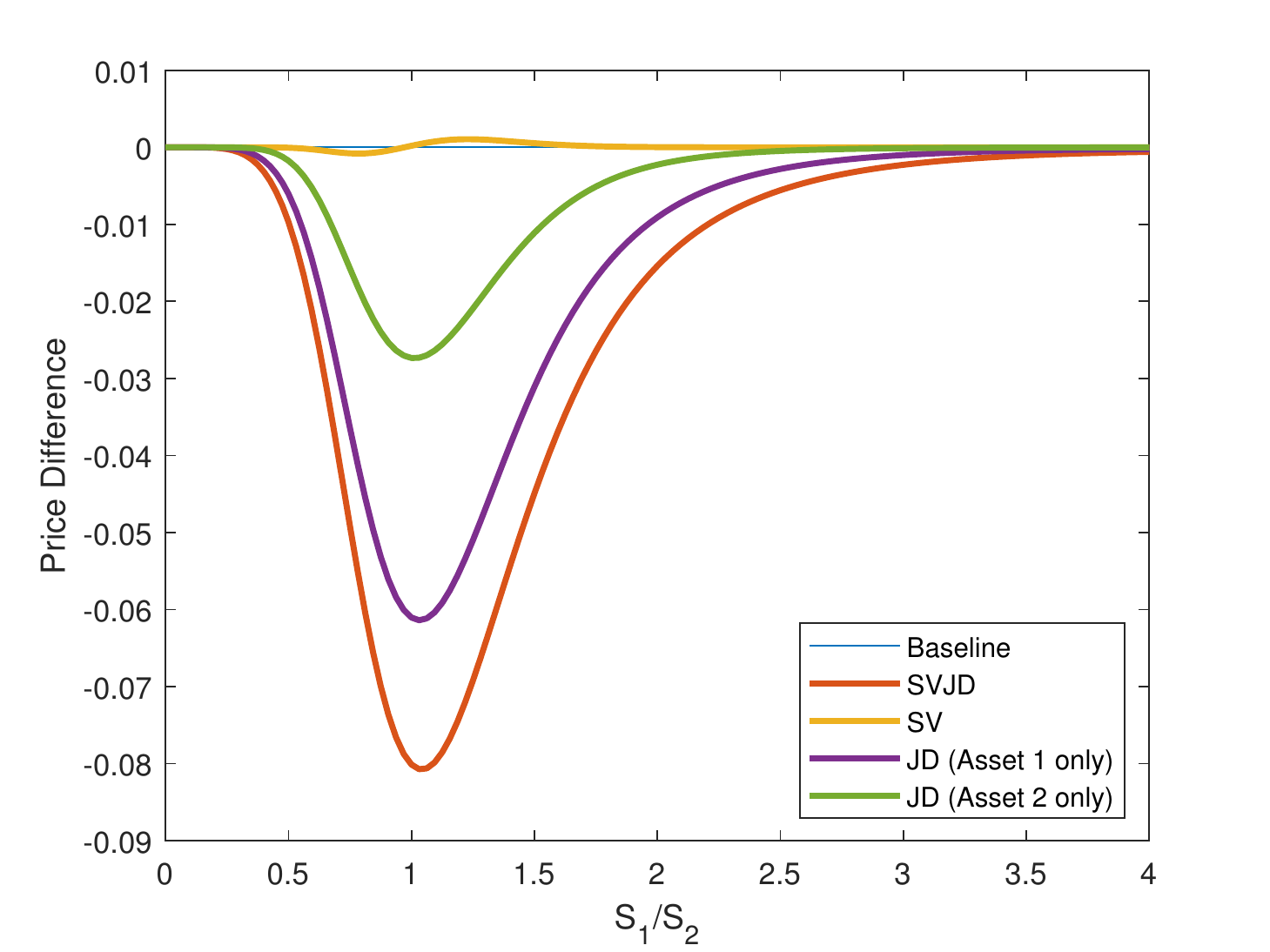}}
\subfloat[$\rho_w = -0.5$]{
	\includegraphics[width = 0.4\linewidth]{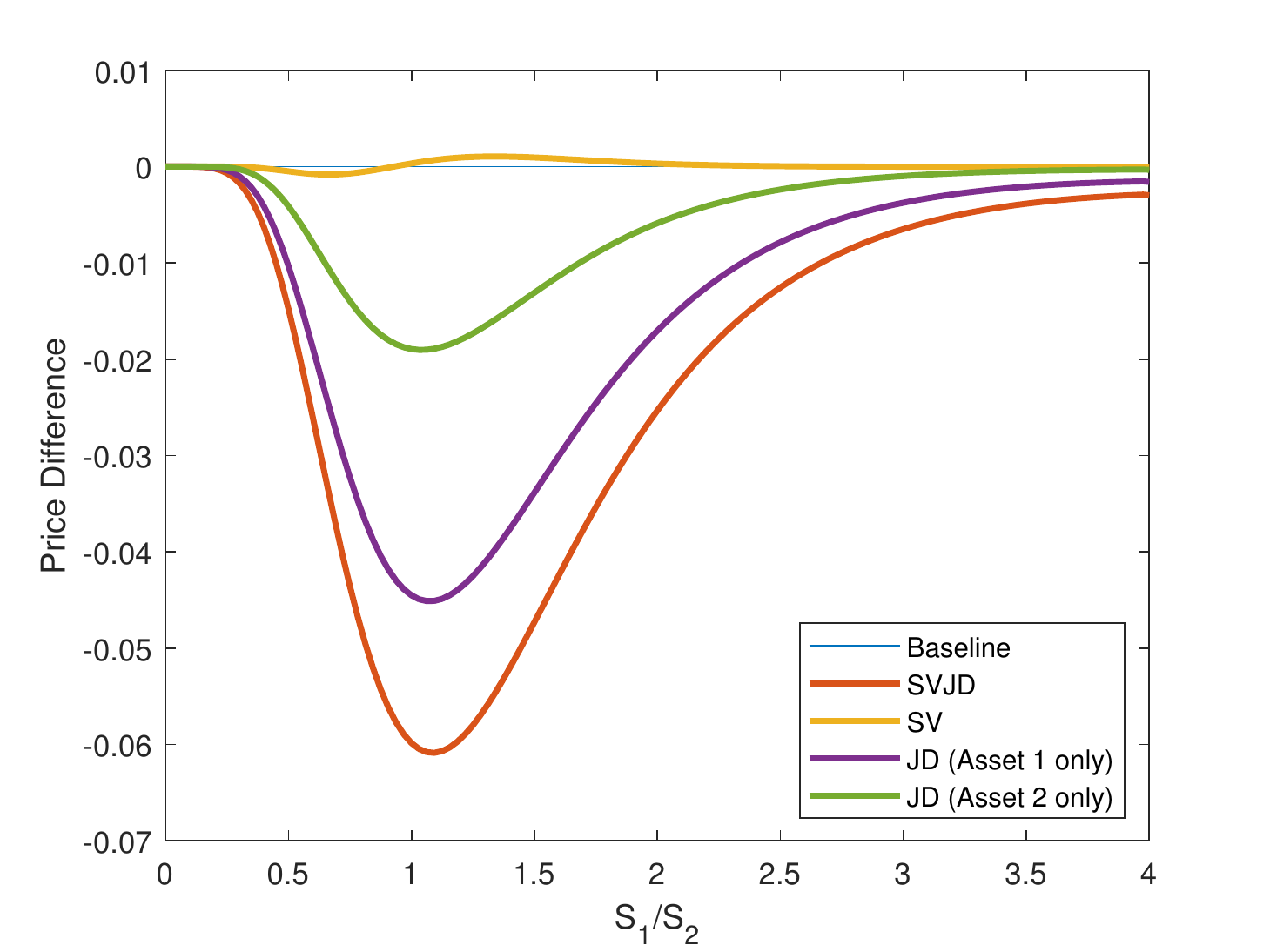}}
\caption{Comparison of discounted European exchange option prices at $t=0$ and $v=0.56$ generated under pure diffusion, stochastic volatility, SVJD (jumps in asset 1 only), SVJD (jumps in asset 2 only), and SVJD (jumps in both assets).}
\label{fig-CompDiffusion}
\end{figure}

The next numerical experiment is concerned with assessing the impact of stochastic volatility and jumps to the discounted price of European exchange options. The base prices correspond to the pure diffusion case, for which a formula has been provided by \citet{Margrabe-1978}. For simplicity, we assume no dividend yields for both assets. Furthermore, since the variance process is mean-reverting, we assume that the constant volatility for the pure diffusion case is $\sigma_i\sqrt{\eta_i}$ for asset $i=1,2$, as was done by \citet{ChiarellaZiveyi-2013}. The stochastic volatility case was simulated by setting the jump intensities to zero, but these were subsequently allowed to have nonzero values eventually building up to the default SVJD case. When exactly one of the jump intensities was equal to zero, it was assumed that this asset is driven by stochastic volatility dynamics whereas the other asset had both stochastic volatility and jumps. Here, we focus only on the European case, since as exhibited in the previous numerical experiments the American case behaves similarly, except for when the asset yield ratio exceeds the exercise boundary where the price differences vanish. Cases for a negative and a positive correlation between asset price processes are considered.

As seen in Figure \ref{fig-CompDiffusion}, stochastic volatility prices are slightly higher than the constant volatility case when the option is out-of-the-money, but are lower when the option is in-the-money. This observation is consistent with the numerical results of \citet{Heston-1993} and \citet{Chiarella-2009} for options on a single asset. The addition of jumps by allowing nonzero jump intensities generated higher option prices irrespective of the moneyness of the option, with the highest prices attained when both assets have SVJD dynamics. We find that price differences eventually vanish for large enough asset price ratios in the positive correlation case, but the differences persist in the negative correlation case (at least within the assumed range of values for the asset price ratio). In both positive and negative correlation cases, the SV and SVJD option prices converge to the Margrabe price when the option is deeply out-of-the-money. In contrast to the results of \citet{Chiarella-2009}, who found that the price differences invert from positive to negative (and vice versa) depending on the moneyness of the option, we find that price differences are consistently negative when jumps are involved. This may be attributed to how the jump and stochastic volatility parameters are chosen and the possibility that jumps may dominate stochastic volatility in terms of contributions to the overall variance in asset prices.

\subsection{Effect of Alternative Boundary Conditions at $v=v_M$}
\label{sec-NumericalVenttsel}

\begin{figure}
\centering
\subfloat[Early exercise boundary]{
	\includegraphics[width = 0.33\linewidth]{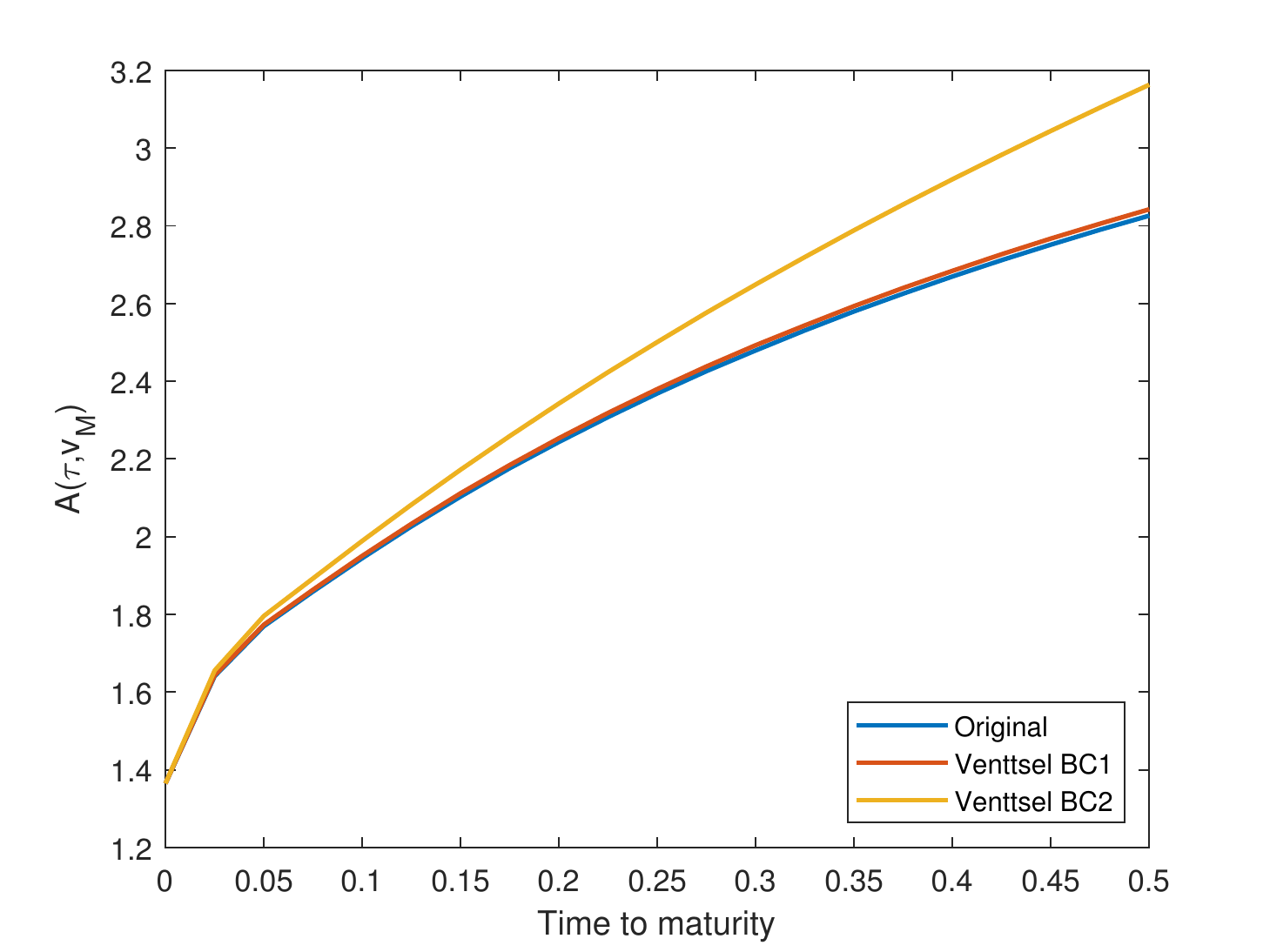}}
\subfloat[Discounted prices]{
	\includegraphics[width = 0.33\linewidth]{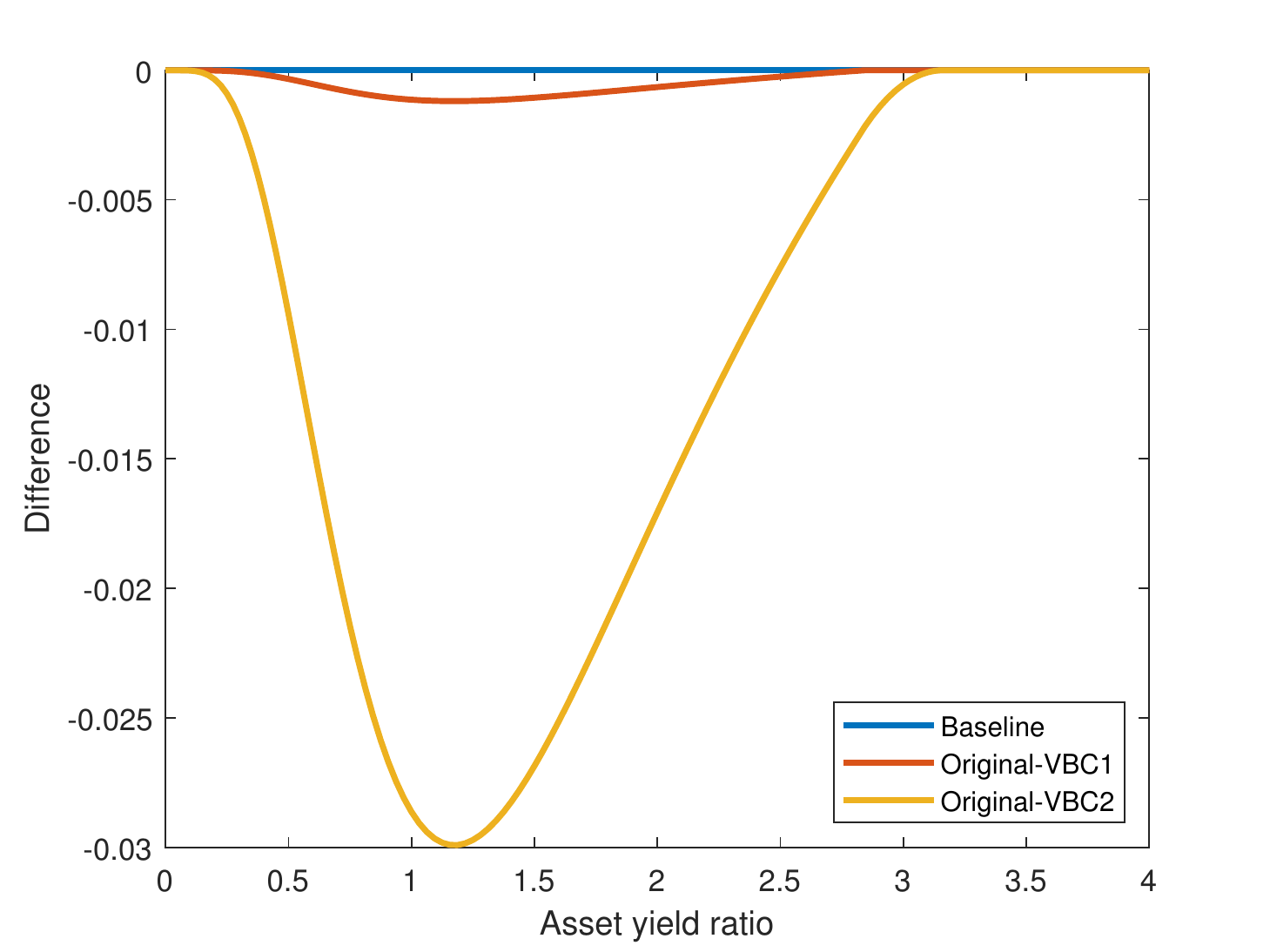}}
\subfloat[Discounted deltas]{
	\includegraphics[width = 0.33\linewidth]{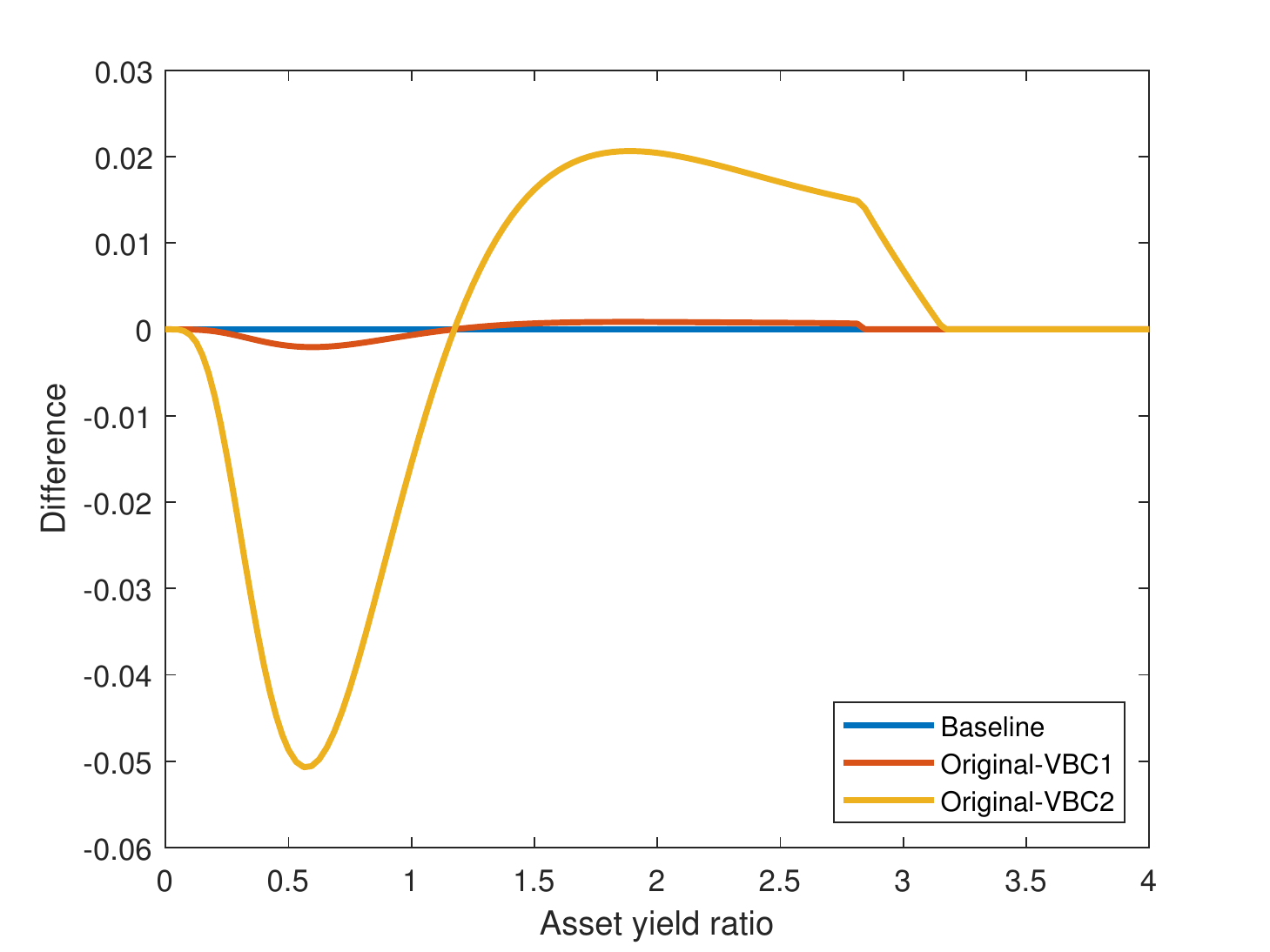}}
\caption{Comparison of early exercise boundaries $A(t,v)$, discounted option prices, and discounted deltas for the American exchange option among various choices of boundary conditions at $v=v_M$.}
\label{fig-Venttsel}
\end{figure}

The final numerical experiment explores how MOL approximations of option prices and the early exercise boundary are affected by the use of Venttsel boundary conditions. Figure \ref{fig-Venttsel} shows the early exercise boundary, American option price, and American delta computed for each Venttsel boundary condition \eqref{eqn-Fichera-VBC1} and \eqref{eqn-Fichera-VBC1} using parameters values in Table \ref{tab-ParameterValues}. For the option price and delta, we exhibit the differences at $t=0$ and $v=v_M$ when compared to the base case that uses \eqref{eqn-MOLEu-vBC}. As can be seen from the graphs, there is very little difference in the MOL approximations when either \eqref{eqn-MOLEu-vBC} or \eqref{eqn-Fichera-VBC1} is used. This is rather surprising as \eqref{eqn-Fichera-VBC1} is notably simpler than when \eqref{eqn-MOLEu-vBC} is applied to \eqref{eqn-MOLEu-IPDE}, which still contains second-order derivatives in $v$. We furthermore note that the computation time when \eqref{eqn-Fichera-VBC1} is active is shorter than when \eqref{eqn-MOLEu-vBC} is used, indicating that using the Venttsel boundary condition may be computationally more efficient while delivering the similar results. The differences in option price and delta are much more pronounced when \eqref{eqn-Fichera-VBC2} is used, particularly when the option is at-the-money. Using boundary condition \eqref{eqn-Fichera-VBC2} also results to a much larger approximation of the early exercise boundary at $t=0$ compared to when either \eqref{eqn-MOLEu-vBC} or \eqref{eqn-Fichera-VBC1} is used.


We end this section with a disclaimer that these observations and conclusions were made for the specific parameter values we assumed in Table \ref{tab-ParameterValues} and the modifications thereof introduced in each numerical experiment. We anticipate that the insights explored here may still hold for other parameter values or when a full calibration exercise with respect to actual data is implemented, but these conclusions might no longer be true in general.

\section{Concluding Remarks}
\label{sec-Conclusion}

This paper discusses the application of \citeauthor*{Bjerskund-1993}'s put-call transformation technique in pricing European and American exchange options under stochastic volatility and jump-diffusion dynamics. This technique allows us to reduce the number of dimensions in the main problem and write the option price and the associated IPDE as a function of time, the asset yield ratio, and the instantaneous variance level. With the inhomogeneous form of the IPDE for the American exchange option, we were also able to analyze the behavior of the early exercise boundary near maturity. It was found that the limit of the boundary at maturity is strongly influenced by the magnitude of the difference between asset dividend yields and the jump component of each asset price process. The numerical implementation in this paper complements the integral representations of the exchange option prices obtained by \citet{GarcesCheang-2020} using Fourier and Laplace transforms.

Given the reduction in dimensions, we then formulate a method of lines algorithm to numerically solve the option pricing IPDE, thereby detailing and extending the method presented by \citet{Chiarella-2009}. As noted by researchers who have applied the MOL in option pricing, this method is particularly useful as it naturally computes for the option delta and gamma and minimal adjustments are required to calculate the free boundary associated with American options. While for simplicity our implementation uses constant parameter values, Algorithms \ref{pseudo-MOL-EuExcOp} and \ref{pseudo-MOL-AmExcOp} can easily be extended to parameters that are deterministic functions of time. However, adding more stochastic elements will result to a higher number of dimensions in the MOL implementation, which will then require additional iteration levels \citep[see][for example]{Kang-2014}. As reported in previous work \citep[among others]{Chiarella-2009, ChiarellaZiveyi-2013, Kang-2014}, the MOL performs just as efficiently, if not more, than Monte Carlo simulations, numerical integration, and other numerical methods for solving IPDEs and PDEs, such as finite difference methods, componentwise splitting methods, and sparse grid approaches. Our numerical analysis in Section \ref{sec-MOL} confirms that the MOL indeed performs more efficiently to the \citet{LongstaffSchwartz-2001} Monte Carlo approach.Algorithms \ref{pseudo-MOL-EuExcOp} and \ref{pseudo-MOL-AmExcOp} and the accompanying discussion in Section \ref{sec-MOL} can be easily modified and implemented to accommodate other payoff structures or underlying asset price dynamics.

Using the MOL approach, we were also able to assess the impact of the model parameters on the exchange option prices and the early exercise boundary. We find that the jump intensities have substantial impact on option prices as it is able to shift the early exercise boundary curves upward and generate the largest price differences relative to the default prices computed using parameters in Table \ref{tab-ParameterValues}. We also find the correlation parameters and the volatility constants of proportionality have a considerable impact, while the parameters of the variance process have the least impact. The presence of stochastic volatility and or jumps in one or both assets also has a considerable effect on option prices when the asset prices are positively or negatively correlated with one another. We note however that these conclusions are true for the specific set of parameter values used in the numerical approximation and may not hold in full generality. A complete analytical comparative static analysis remains to be seen in literature, but nonetheless we have shown that jumps and stochastic volatility have considerable impact on option prices and the early exercise boundary.

Our analysis also showed that \ref{eqn-Fichera-VBC1} at the far variance boundary is a plausible alternative to the often-used vanishing vega assumption, producing comparable results with less computational time. We note however that the choice of boundary conditions is strongly affected by the calibration of model parameters. Specifically, if calibration yields parameters which violate either the Feller condition or the condition for the viability of Venttsel boundary condition \ref{eqn-Fichera-VBC1}, then alternative conditions must be imposed on the boundary of the computational domain. 
 
A formal convergence analysis of the MOL is also an issue to be investigated in future work, although the algorithm we presented converges for all reported parameter values. An application of the put-call transformation transformation and/or the MOL in pricing multi-asset derivatives under other asset price model specifications (e.g. L\'evy processes, regime switching models) is also a topic that we aim to explore in future studies. In a future study, we also aim to see how the MOL can be extended when pricing takes place in the risk-neutral world (i.e. the second asset price then becomes a separate spatial variable) or when additional risk factors are included, such as multi-factor stochastic volatility \citep{Christoffersen-2009} and stochastic interest rates.  



\section*{Disclosure Statement}

The authors report no potential conflict of interest arising from the results of this paper.

\bibliographystyle{tfcad}
\bibliography{arXivMOLExchangeOption}

\begin{appendix}

\section{Proof of Proposition \ref{prop-PutCall-EEBLimit}}
\label{app-Proof-EEBLimit}

The method of \citet{ChiarellaZiogas-2009}, adapted to our situation, is as follows.\footnote{\citet{ChiarellaZiogas-2009} proposed this method as an alternative to the local analysis of the option PDE for small time-to-maturity options as was done by \citet{Wilmott-1993} in the pure diffusion case.} First, we set the inhomogeneous term $\Xi(t,\tilde{S},v)$ (given by equation \eqref{eqn-PutCall-InhomogeneousTerm}) to zero and evaluate the result at $t=T$ and $\tilde{S}=B(T^-,v)e^{(q_1-q_2)T}$. The resulting expression is then rearranged to yield equation \eqref{eqn-PutCall-EEBLimit}.

Performing the first step yields the equation
\begin{align}
\begin{split}
\label{eqn-PutCall-EEBLimit-Step1}
0 & = e^{-q_2 T}\left(q_1 B(T^-,v)-q_2\right)\\
	& \qquad - \tilde{\lambda}_1\int_{-\infty}^{-\ln\left[\frac{B(T,v)e^{(q_1-q_2)T}}{B(T^-,v)e^{(q_1-q_2)T}}\right]}\left[\tilde{V}^A\left(T, B(T^-,v)e^{(q_1-q_2)T}e^y, v(T)\right)\right.\\
	& \hspace{100pt} \left.-e^{-q_2 T}\left(B(T^-,v)e^{y}-1\right)\right]G_1(y)\dif y\\
	& \qquad - \tilde{\lambda}_2\int_{\ln\left[\frac{B(T,v)e^{(q_1-q_2)T}}{B(T^-,v)e^{(q_1-q_2)T}}\right]}^{\infty}\left[\tilde{V}^A\left(T, B(T^-,v)e^{(q_1-q_2)T}e^{-y}, v(T)\right)\right.\\
	& \hspace{100pt} \left.-e^{-q_2 T}\left(B(T^-,v)e^{-y}-1\right)\right]G_2(y)\dif y.
\end{split}
\end{align}
At maturity $t=T$, the option will be exercised if $\tilde{S}(T) \geq e^{(q_1-q_2)T}$, and so $B(T,v)=1$. Thus in the above calculation, setting $\tilde{S}=B(T^-,v)e^{(q_1-q_2)T}$ induces the stopping criterion in equation \eqref{eqn-PutCall-StoppingContinuationRegions2} for $\calS(T)$ since $B(T^-,v)\geq 1$. This implies that $\vm{1}(\calA(T))=1$ in the inhomogeneous term \eqref{eqn-PutCall-InhomogeneousTerm}. Furthermore, terminal condition \eqref{eqn-PutCall-BoundaryConditions-Vtilde} allows us to simplify
equation \eqref{eqn-PutCall-EEBLimit-Step1} as
\begin{align*}
0 & = q_1 B(T^-,v)-q_2 +\tilde{\lambda}_1\int_{-\infty}^{-\ln B(T^-,v)}\left[B(T^-,v)e^{y}-1\right]G_1(y)\dif y\\
	& \qquad +\tilde{\lambda}_2\int_{\ln B(T^-,v)}^{\infty}\left[B(T^-,v)e^{-y}-1\right]G_2(y)\dif y.
\end{align*}
Rearranging the terms yields the equation $$B(T^-,v) = \frac{q_2+\tilde{\lambda}_1\int_{-\infty}^{-\ln B(T^-,v)} G_1(y)\dif y + \tilde{\lambda}_2\int_{\ln B(T^-,v)}^\infty G_2(y)\dif y}{q_1 +\tilde{\lambda}_1\int_{-\infty}^{-\ln B(T^-,v)} e^y G_1(y)\dif y + \tilde{\lambda}_2\int_{\ln B(T^-,v)}^\infty e^{-y} G_2(y)\dif y}.$$

We note lastly from \citet{BroadieDetemple-1997} that $B(t,v)\geq 1$ for any $t\in[0,T]$ and $v\in(0,\infty)$. Therefore, we must enforce a lower bound of 1 on $B(T^-,v)$ via the maximum function. The result stated in the proposition thus holds.

\section{Proof of Proposition \ref{prop-PutCall-EEBLimit-Existence}}
\label{app-Proof-EEBLimit-Existence}

Our proof adapts the arguments made by \citet[pp. 34-35]{ChiarellaKangMeyer-2015}. For $x\in(0,\infty)$, define the function
\begin{align*}
f(x)	& = q_2+\tilde{\lambda}_1\int_{-\infty}^{-\ln x} G_1(y)\dif y + \tilde{\lambda}_2\int_{\ln x}^\infty G_2(y)\dif y\\
			& \qquad - x\left(q_1 +\tilde{\lambda}_1\int_{-\infty}^{-\ln x} e^y G_1(y)\dif y + \tilde{\lambda}_2\int_{\ln x}^\infty e^{-y} G_2(y)\dif y\right).
\end{align*}
Denote by $x^*$ a zero of $f$ (i.e. $f(x^*)=0$) on $(0,\infty)$, if any exist.

By calculating $f'(x)$, we find that $f$ is strictly decreasing on $(0,\infty)$ if $q_1>0$. We also observe that $$\lim_{x\to 0^+}f(x) = q_2+\tilde{\lambda_1}\int_{-\infty}^\infty G_1(y)\dif y+\tilde{\lambda}_2\int_{-\infty}^\infty G_2(y)\dif y = q_2+\tilde{\lambda}_1+\tilde{\lambda}_2>0.$$ Furthermore, we note that for a fixed $x>0$, $0< xe^y G_1(y) \leq G_1(y)$ for all $y\leq -\ln x$. Thus, $$0<x\int_{-\infty}^{-\ln x}e^y G_1(y)\dif y \leq \int_{-\infty}^{-\ln x}G_1(y)\dif y \to 0 \quad \text{as $x\to\infty$}.$$ A similar argument can be used to show that $$0<x \int_{\ln x}^\infty e^{-y}G_2(y)\dif y \leq \int_{\ln x}^\infty G_2(y)\dif y \to 0 \quad \text{as $x\to\infty$}.$$ As such, we find that
\begin{align*}
\lim_{x\to\infty} f(x)
	& = \lim_{x\to\infty}\Bigg(q_2 - q_1 x + \tilde{\lambda}_1\left[\int_{-\infty}^{-\ln x}G_1(y) \dif y - x\int_{-\infty}^{-\ln x}e^y G_1(y)\dif y\right]\\
	& \qquad + \tilde{\lambda}_2 \left[\int_{\ln x}^\infty G_2(y)\dif y - x\int_{\ln x}^\infty\ e^{-y}G_2(y)\dif y\right]\Bigg)\\
	& = \lim_{x\to\infty}(q_2-q_1 x).
\end{align*}
Thus, if $q_1>0$, then $\lim_{x\to\infty}f(x)<0$ and so $f$ strictly decreases from positive to negative values as $x$ increases on $(0,\infty)$. Therefore, there exists a unique $x^*\in(0,\infty)$ such that $f(x^*)=0$.

Now suppose $q_1>0$. Evaluating $f$ at $x=1$ gives us $$f(1) = q_2-q_1+\tilde{\lambda}_1\int_{-\infty}^0 (1-e^y)G_1(y)\dif y+\tilde{\lambda}_2\int_0^\infty (1-e^{-y})G_2(y)\dif y.$$ If $f(1)\leq 0$, then $x^*$ must be in the interval $(0,1]$ since $f$ is strictly decreasing. Otherwise, $x^*>1$ if and only if $f(1)>0$, which is the condition stated in the proposition. Lastly, by Proposition \ref{prop-PutCall-EEBLimit}, we have $B(T^-,v) = \max\{1,x^*\}$.

\section{Coefficients of the MOL Approximation \eqref{eqn-MOLEu-IPDE-Approx2}}
\label{app-MOL-Coeff}

The coefficients of \eqref{eqn-MOLEu-IPDE-Approx2} are given by
\begin{align}
\begin{split}
\label{eqn-MOLEu-IPDE-Approx2-Coeff}
a(s,v_m) & = \max\left\{\frac{1}{2}\sigma^2 v_m s^2, 10^{-4}\right\} \qquad \text{(regularized coefficient)}\\
b(s,v_m) & = -\left(\tilde{\lambda}_1\tilde{\kappa}_1+\tilde{\lambda}_2\tilde{\kappa}_2^-\right)s\\
c(\tau_n,s,v_m) & = \frac{\omega^2 v_m}{(\Delta v)^2} + (\tilde{\lambda}_1+\tilde{\lambda}_2) + \frac{\max\left\{\xi\eta-(\xi+\Lambda)v_m,0\right\}}{\Delta v}\\
				 & \qquad - \frac{\min\left\{\xi\eta-(\xi+\Lambda)v_m,0\right\}}{\Delta v} + \begin{cases}
						1/\Delta\tau & \text{if $n=1,2$}\\
						3/(2\Delta\tau) & \text{if $n\geq 3$}
						\end{cases}\\
F(\tau_n,s,v_m) & = -\frac{\omega^2 v_m}{2(\Delta v)^2}\left[V_{n,m+1}(s) + V_{n,m-1}(s)\right]\\
				 & \qquad -\frac{\omega(\sigma_1\rho_1-\sigma_2\rho_2)}{2\Delta v} v_m s \left[\scrV_{n,m+1}(s)-\scrV_{n,m-1}(s)\right]\\
				 & \qquad -\frac{\max\left\{\xi\eta-(\xi+\Lambda)v_m,0\right\}}{\Delta v} V_{n,m+1}(s)\\
				 & \qquad +\frac{\min\left\{\xi\eta-(\xi+\Lambda)v_m,0\right\}}{\Delta v} V_{n,m-1}(s)\\
				 & \qquad - \begin{cases}
						V_{n-1,m}(s)/\Delta\tau & \text{if $n=1,2$}\\
						\left[4V_{n-1,m}(s)-V_{n-2,m}(s)\right]/(2\Delta\tau) & \text{if $n\geq 3$},
						\end{cases}
\end{split}
\end{align}
for $m=1,\dots,M-1$.

At the last variance line $m=M$, boundary condition \eqref{eqn-MOLEu-vBC} implies that some of the coefficients above have to be redefined as
\begin{align}
\begin{split}
\label{eqn-MOLEu-IPDE-Approx2-Coeff-vBC}
b(s,v_M) & = \frac{\omega(\sigma_1\rho_1-\sigma_2\rho_2)}{2\Delta v}v_M s-\left(\tilde{\lambda}_1\tilde{\kappa}_1+\tilde{\lambda}_2\tilde{\kappa}_2^-\right)s\\
c(\tau_n,s,v_M) & = \frac{\omega^2 v_M}{2(\Delta v)^2} + (\tilde{\lambda}_1+\tilde{\lambda}_2)\\
				 & \qquad - \frac{\min\left\{\xi\eta-(\xi+\Lambda)v_M,0\right\}}{\Delta v} + \begin{cases}
						1/\Delta\tau & \text{if $n=1,2$}\\
						3/(2\Delta\tau) & \text{if $n\geq 3$}
						\end{cases}\\
F(\tau_n,s,v_M) & = -\frac{\omega^2 v_m}{2(\Delta v)^2} V_{n,M-1}(s) - \frac{\omega(\sigma_1\rho_1-\sigma_2\rho_2)}{2\Delta v} v_M s \scrV_{n,M-1}(s)\\
				 & \qquad +\frac{\min\left\{\xi\eta-(\xi+\Lambda)v_M,0\right\}}{\Delta v} V_{n,M-1}(s)\\
				 & \qquad - \begin{cases}
						V_{n-1,M}(s)/\Delta\tau & \text{if $n=1,2$}\\
						\left[4V_{n-1,M}(s)-V_{n-2,M}(s)\right]/(2\Delta\tau) & \text{if $n\geq 3$},
						\end{cases}
\end{split}
\end{align}

\end{appendix}

\end{document}